\def\squareforqed{\hbox{\rlap{$\sqcap$}$\sqcup$}}
\def\qed{\ifmmode\squareforqed\else{\unskip\nobreak\hfil
\penalty50\hskip1em\null\nobreak\hfil\squareforqed
\parfillskip=0pt\finalhyphendemerits=0\endgraf}\fi}
\newcommand{\alc}{\ensuremath{\mathcal{ALC}}\xspace}
\newcommand{\el}{\ensuremath{\mathcal{E\!L}}\xspace}
\newcommand{\I}{\ensuremath{\mathcal{I}}\xspace}
\newcommand{\On}{\ensuremath{\mathfrak{O}}\xspace}
\newcommand{\Ons}{\ensuremath{\mathfrak{O}_s}\xspace}
\newcommand{\Onr}{\ensuremath{\mathfrak{O}_r}\xspace}
\newcommand{\T}{\ensuremath{\mathcal{T}}\xspace}
\newcommand{\A}{\ensuremath{\mathcal{A}}\xspace}
\newcommand{\Con}{\textit{Con}}
\newcommand{\NI}{\ensuremath{N_I}\xspace}
\newcommand{\NC}{\ensuremath{N_C}\xspace}
\newcommand{\NR}{\ensuremath{N_R}\xspace}
\newcommand{\susu}{\succ^{\mathit{sub}}}
\newcommand{\Upper}{\textit{Upper}}
\newcommand{\susy}{\succ^{\mathit{syn}}}
\newcommand{\subeq}{\sqsubseteq^\emptyset}
\newcommand{\sub}{\sqsubset^\emptyset}
\newcommand{\sqsubsyn}{\sqsubset^{\mathit{syn}}}
\newcommand{\sqsubsyneq}{\sqsubseteq^{\mathit{syn}}}
\newcommand{\sqsupsyn}{\sqsupset^{\mathit{syn}}}
\newcommand{\Tmc}{\ensuremath{\mathcal{T}}\xspace}
\newcommand{\Vmc}{\ensuremath{\mathcal{V}}\xspace}
\newcommand{\Wmc}{\ensuremath{\mathcal{W}}\xspace}
\newtheorem%
     {theorem}{Theorem}
\newtheorem%
     {corollary}[theorem]{Corollary}
\newtheorem%
     {proposition}[theorem]{Proposition}
\newtheorem%
     {lemma}[theorem]{Lemma}
\newtheorem%
     {definitionAux}[theorem]{Definition}
\newenvironment%
     {definition}{\begin{definitionAux}\rm}{\end{definitionAux}}
\newtheorem%
     {remarkAux}[theorem]{Remark}
\newenvironment%
     {remark}{\begin{remarkAux}\rm}{\end{remarkAux}}
\newtheorem%
     {exampleAux}[theorem]{Example}
\newenvironment%
{example}{\begin{exampleAux}\rm}{\end{exampleAux}}
\newtheorem%
     {examplesAux}[theorem]{Examples}
\newenvironment%
     {examples}{\begin{examplesAux}\rm}{\end{examplesAux}}
\title{Repairing Description Logic Ontologies by Weakening Axioms}
        \author{Franz Baader 
                Francesco Kriegel 
                Adrian Nuradiansyah,
                and Rafael Pe{\~n}aloza 
        }
\title{Repairing Description Logic Ontologies by Weakening Axioms}
        \author{Franz Baader,
                 Francesco Kriegel,
                 Adrian Nuradiansyah,\\
                 Rafael Pe{\~n}aloza
        }
\begin{document}

%
%
%
\date{}
\maketitle

\begin{abstract}
                The classical approach for repairing a Description Logic ontology $\On$ in the sense of removing
                an unwanted consequence $\alpha$ is 
                to delete a minimal number of axioms from $\On$ such that the resulting ontology $\On'$ does not have the consequence $\alpha$.
                However, the complete deletion of axioms may be too rough, in the sense that it may also remove consequences that are
                actually wanted.
                To alleviate this problem, we propose a more gentle way of repair in which axioms are not necessarily
                deleted, but only weakened. On the one hand, we investigate general properties of this
                gentle repair method. On the other hand, we propose and analyze concrete approaches for
                weakening axioms expressed in the Description Logic  $\el$.
\end{abstract}

\section{Introduction}
\label{sec:introduction}

Description logics (DLs) \cite{BCNMP03,DBLP:books/daglib/0041477} are a family of logic-based
knowledge representation formalisms, which are employed in various
application domains,
such as natural language processing, configuration,
databases, and bio-medical ontologies,
but their most notable
success so far is the adoption of the DL-based language
OWL\footnote{%
see \href{https://www.w3.org/TR/owl2-overview/}{https://www.w3.org/TR/owl2-overview/} for its most recent edition OWL\,2.}
as standard ontology language for the Semantic Web.
As the size of DL-based ontologies grows, tools that support
improving the quality of such ontologies become more important.
DL reasoners\footnote{%
see \href{http://owl.cs.manchester.ac.uk/tools/list-of-reasoners/}{http://owl.cs.manchester.ac.uk/tools/list-of-reasoners/}}
can be used to detect inconsistencies and to
infer other implicit consequences,
such as subsumption and instance relationships.
However, for the developer of a DL-based ontology, it is often quite hard to
understand why a consequence computed by the reasoner
actually follows from the knowledge base, and how to repair the ontology in case
this consequence is not intended.
 
Axiom pinpointing \cite{ScCo03} was introduced to help developers or users
of DL-based ontologies understand the reasons why a certain consequence holds by
computing so-called \emph{justifications}, i.e., minimal subsets of the ontology that have the consequence in question.
Black-box approaches for computing justifications such as \cite{SHCH07,KPHS07,BaSu08}
use repeated calls of existing highly-optimized DL reasoners for this purpose, but it may be necessary
to call the reasoner an exponential number of times. In contrast, glass-box approaches such as \cite{BaHo95,ScCo03,PaSK05,MLBP06}
compute all justifications by a single run of a modified, but usually less efficient reasoner.

Given all justifications of an unwanted consequence, one can then repair the ontology by removing one axiom
from each justification. However, removing complete axioms may also eliminate consequences that are actually wanted.
For example, assume that our ontology contains the following terminological axioms:
$$
\begin{array}{l}
\mathit{Prof} \sqsubseteq \exists\mathit{employed}.\mathit{Uni}\sqcap 
                          \exists\mathit{enrolled}.\mathit{Uni},\\[.3em]
\exists\mathit{enrolled}.\mathit{Uni} \sqsubseteq \mathit{Studi}.
\end{array}
$$
These two axioms are a justification for the incorrect consequence that professors are students. While the
first axiom is the culprit, removing it completely would also remove the correct consequence that
professors are employed by a university. Thus, it would be more appropriate to replace the first axiom 
by the weaker axiom $\mathit{Prof} \sqsubseteq \exists\mathit{employed}.\mathit{Uni}$. This is the basic idea
underlying our \emph{gentle repair approach}. In general, in this approach we weaken one axiom from each justification
such that the modified justifications no longer have the consequence.

Approaches for repairing ontologies while keeping more consequences than the classical approach based on
completely removing axioms have already been considered in the literature. On the one hand, there are approaches that
first modify the given ontology, and then repair this modified ontology using the
classical approach. In \cite{Horridge11}, a specific syntactic structural transformation
is applied to the axioms in an ontology, which replaces them by sets of logically weaker axioms.
More recently, the authors of \cite{DuQF14} have generalized this idea by allowing for different specifications
of the structural transformation of axioms. They also introduce a specific structural transformation that is based on specializing 
left-hand sides and generalizing right-hand sides of axioms in a way that ensures finiteness of the obtained set of axioms.
Closer to our gentle repair approach is the one in
\cite{Lam08}, which adapts the tracing technique from 
\cite{Baader1995} to identify not only the axioms that cause a consequence,
but also the parts of these axioms that are actively involved in deriving the consequence. This provides
them with information for how to weaken these axioms. In \cite{TC+18}, repairs are computed by weakening axioms with the
help of refinement operators that were originally introduced for the purpose of concept learning \cite{LeHi10}.

In this paper, we will introduce a general framework for repairing ontologies based on axiom weakening.
This framework is independent of the concrete method employed for weakening axioms and of the concrete ontology language used
to write axiom. It only assumes that ontologies are finite sets of axioms, that there is a monotonic consequence operator 
defining which axiom follows from which, and that weaker axioms have less consequences.
However, all our examples will consider ontologies expressed in the light-weight DL $\el$. Our first important
result is that, in general, the gentle repair approach needs to be iterated, i.e., applying it once does not
necessarily remove the consequence. This problem has actually been overlooked in \cite{Lam08}, which means that their
approach does not always yield a repair. Our second result is that at most exponentially many iterations are always
sufficient to reach a repair. The authors of \cite{TC+18} had already realized that iteration is needed, but
they did not give an example explicitly demonstrating this, and they had no termination proof. Instead of allowing
for arbitrary ways of weakening axioms, we then introduce the notion of a \emph{weakening relation}, which restricts
the way in which axioms can be weakened. Subsequently, we define conditions on such weakening relations that equip
the gentle repair approach with better algorithmic properties if they are satisfied. Finally, we address the task of defining specific
weakening relations for the DL $\el$. After showing that two quite large such relations do not behave well,
we introduce two restricted relations, which are based on generalizing the right-hand sides of axioms semantically 
or syntactically. Both of them satisfy most of our conditions, but from a complexity point of view the syntactic
variant behaves considerably better.



\section{Basic definitions}

In the first part of this section, we introduce basic notions from DLs to provide us with concrete
examples for how ontologies and their axioms may look like. In the second part, we provide basic definitions
regarding the repair of ontologies, which are independent of the ontology language these ontologies are written in.
However, the concrete examples given there are drawn from DL-based ontologies.

\subsection{Description Logics}
\label{subsec:dls}

A wide range of DLs of different expressive power haven  been investigated in the literature.
Here, we only introduce the DL \el, for which reasoning is tractable~\cite{Bran04}.

Let \NC and \NR be mutually disjoint 
sets of \emph{concept} and \emph{role names}, respectively. 
Then \emph{\el concepts} over these names are constructed through the grammar rule 
$$
C::= A \mid \top\mid C \sqcap C\mid \exists r.C,
$$
where $A\in\NC$ and $r\in\NR$, i.e., the DL \el has the concept constructors $\top$ (top concept), $\sqcap$ (conjunction),
and $\exists r.C$ (existential restriction). 
The \emph{size} of an $\el$ concept $C$ is the number of occurrences of $\top$ as well as concept and role names in $C$, 
and its \emph{role depth} is the maximal nesting of existential restrictions. If $S$ is a finite set of $\el$ concepts, then we denote the conjunction
of these concepts as $\bigsqcap S$.

Knowledge is represented using appropriate axioms formulated using
concepts, role names and an additional set of individual names \NI. An \emph{\el axiom} is either a \emph{GCI} of the form $C\sqsubseteq D$ with $C,D$ concepts, or an
\emph{assertion}, which is of the form $C(a)$ (\emph{concept assertion}) or $r(a,b)$ (\emph{role assertion}), 
with $a,b\in\NI, r\in\NR$, and $C$ a concept.
A finite set of GCIs is called a \emph{TBox}; a finite set of assertions is an \emph{ABox}.  An \emph{ontology} is
a finite set of axioms.

The semantics of \el is defined using \emph{interpretations}
$\I=(\Delta^\I,\cdot^\I)$, where $\Delta^\I$ is a non-empty set, called the \emph{domain},
and $\cdot^\I$ is the \emph{interpretation function}, which maps every $a\in\NI$ to an element $a^\I\in\Delta^\I$,
every $A\in\NC$ to a set $A^\I\subseteq\Delta^\I$, and every $r\in\NR$ to a binary relation 
$r^\I\subseteq\Delta^\I\times\Delta^\I$. The interpretation function $\cdot^\I$ is extended to arbitrary \el concepts by setting
$\top^\I := \Delta^\I$,
$(C\sqcap D)^\I:=C^\I\cap D^\I$, and
$(\exists r.C)^\I:=\{\delta\in\Delta^\I\mid \exists \eta\in C^\I. (\delta,\eta)\in r^\I\}$. 
  
The interpretation \I \emph{satisfies} the GCI $C\sqsubseteq D$ if $C^\I\subseteq D^\I$; it \emph{satisfies}
the assertion $C(a)$ and $r(a,b)$, if $a^\I\in C^\I$ and $(a^\I,b^\I)\in r^\I$, respectively.
It is a \emph{model} of the TBox \T, the ABox \A, and the ontology \On, if it satisfies all the axioms in \T, \A, and
\On, respectively.
Given an ontology \On, and an axiom $\alpha$, we say that $\alpha$ is a \emph{consequence} of \On (or that
\On \emph{entails} $\alpha$) if every model of \On satisfies $\alpha$. In this case, we write
$\On\models\alpha$. The \emph{set of all consequences} of \On is denoted by \Con(\On). 
As shown in \cite{Bran04}, consequences in $\el$ can be decided in polynomial time.
We say that the two axioms $\gamma,\delta$ are \emph{equivalent} if
$\Con(\{\gamma\})=\Con(\{\delta\})$.

A \emph{tautology} is an axiom $\alpha$ such that $\emptyset\models\alpha$, where
$\emptyset$ is the ontology that contains no axioms. For example, GCIs of the form $C\sqsubseteq \top$ and $C\sqsubseteq C$, and
assertions of the form $\top(a)$ are tautologies. We write $C\subeq D$ to indicate that the GCI $C\sqsubseteq D$ is a tautology.
In this case we say that $C$ is \emph{subsumed} by $D$.  We say that the concepts $C, D$ are \emph{equivalent} (written $C\equiv^\emptyset D$) 
if $C\subeq D$ and $D\subeq C$; and that 
$C$ is \emph{strictly subsumed} by $D$ (written $C\sub D$) if $C\subeq D$ and $C\not\equiv^\emptyset D$.

The following recursive characterization of the subsumption relation $\subeq$ has been proved in \cite{BaMo10}.

\begin{lemma}\label{subs:char:lem}
Let $C,D$ be two \el concepts such that
\begin{align*}
C = {} & A_1\sqcap\ldots\sqcap A_k\sqcap\exists r_1.C_1\sqcap\ldots\sqcap \exists r_m.C_m \\
D = {} & B_1\sqcap\ldots\sqcap B_\ell\sqcap\exists s_1.D_1\sqcap\ldots\sqcap \exists s_n.D_n,
\end{align*}
and $A_1,\ldots, A_k, B_1,\ldots,B_\ell\in\NC$.
Then $C\subeq D$ iff $\{B_1,\ldots, B_\ell\}\subseteq\{A_1,\ldots,A_k\}$ and
for every $j, 1\leq j\leq n$, there exists an $i,1\leq i\leq m$, such that $r_i = s_j$
and $C_i\subeq D_j$.
\end{lemma}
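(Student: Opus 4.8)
The plan is to prove the two directions separately. The ``if'' direction is a direct unfolding of the semantics and needs no induction. The ``only if'' direction I would prove by building the \emph{canonical tree model} of $C$, showing that its root satisfies exactly the subsumers of $C$, and then reading off the two stated conditions from the fact that this root must satisfy $D$.

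\medskip
\noindent\textbf{The ``if'' direction.} Assume $\{B_1,\dots,B_\ell\}\subseteq\{A_1,\dots,A_k\}$ and that for every $j$ there is an $i$ with $r_i=s_j$ and $C_i\subeq D_j$. Let $\I$ be an arbitrary interpretation and $\delta\in C^{\I}$. Then $\delta\in A_i^{\I}$ for $1\le i\le k$ and, for each $i$ with $1\le i\le m$, there is $\eta_i$ with $(\delta,\eta_i)\in r_i^{\I}$ and $\eta_i\in C_i^{\I}$. Since $\{B_1,\dots,B_\ell\}\subseteq\{A_1,\dots,A_k\}$, we get $\delta\in B_j^{\I}$ for every $j$. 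Fix $j$ and pick $i$ with $r_i=s_j$ and $C_i\subeq D_j$; as $C_i\subeq D_j$ just means $C_i^{\I}\subseteq D_j^{\I}$ in every interpretation, we have $\eta_i\in D_j^{\I}$, and $(\delta,\eta_i)\in r_i^{\I}=s_j^{\I}$, so $\delta\in(\exists s_j.D_j)^{\I}$. Hence $\delta\in D^{\I}$, and since $\I$ and $\delta$ were arbitrary, $C\subeq D$. The degenerate cases $k=0$, $\ell=0$, $m=0$, or $n=0$ (where the corresponding conjunction is $\top$) are covered by the same argument.

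\medskip
\noindent\textbf{The ``only if'' direction.} I would associate with every \el concept $E$ a tree interpretation $\I_E$: its domain consists of the ``positions'' of $E$ (finite words), with a distinguished root $\varepsilon_E$; the root is declared an instance of exactly those concept names that occur as top-level conjuncts of $E$ (all other concept names being interpreted as $\emptyset$); and for every top-level conjunct $\exists r.E'$ of $E$ we attach, as an $r$-successor of $\varepsilon_E$, a disjoint copy of $\I_{E'}$ (role names not occurring in $E$ being interpreted as $\emptyset$). The core claim is: \emph{for all \el concepts $E$ and $F$, $\varepsilon_E\in F^{\I_E}$ iff $E\subeq F$.} The implication ``$E\subeq F\Rightarrow\varepsilon_E\in F^{\I_E}$'' follows once one checks $\varepsilon_E\in E^{\I_E}$ by an easy induction on $E$. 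The converse is proved by structural induction on $F$: $F=\top$ and $F=F_1\sqcap F_2$ are immediate from the induction hypothesis; if $F=A\in\NC$, then $\varepsilon_E\in A^{\I_E}$ means $A$ is a top-level conjunct of $E$, so $E\subeq A$ by the ``if'' direction already proved; if $F=\exists r.F'$ and $\varepsilon_E\in F^{\I_E}$, then some attached copy $\I_{E'}$, coming from a top-level conjunct $\exists r.E'$ of $E$, has its root in $(F')^{\I_{E'}}$, so by the induction hypothesis $E'\subeq F'$, whence $\exists r.E'\subeq\exists r.F'$ (again by the ``if'' direction), and therefore $E\subeq F$. Finally, applying the claim with $E=C$ and $F=D$: from $C\subeq D$ we obtain $\varepsilon_C\in D^{\I_C}$, hence $\varepsilon_C\in B_j^{\I_C}$ for every $j$ — which forces each $B_j$ to be a top-level conjunct of $C$, i.e.\ $\{B_1,\dots,B_\ell\}\subseteq\{A_1,\dots,A_k\}$ — and $\varepsilon_C\in(\exists s_j.D_j)^{\I_C}$ for every $j$ — which yields, for each $j$, a top-level conjunct $\exists r_i.C_i$ of $C$ with $r_i=s_j$ whose attached copy $\I_{C_i}$ has its root in $D_j^{\I_{C_i}}$, so that $C_i\subeq D_j$ by the claim applied to $\I_{C_i}$.

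\medskip
\noindent\textbf{Main obstacle.} The delicate part is the core claim about $\I_E$, and specifically the direction $\varepsilon_E\in F^{\I_E}\Rightarrow E\subeq F$: it must be organized so that the induction on $F$ in the existential case legitimately appeals to the statement for the subtree $\I_{E'}$, i.e.\ it is really an induction that ranges uniformly over all description trees at once. One also has to be careful to use fresh, disjoint copies when attaching successor trees, and to handle the empty-conjunction corner cases uniformly. Everything else is routine; in particular, this characterization is established in \cite{BaMo10}, so one may alternatively just cite it.
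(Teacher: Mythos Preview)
Your argument is correct: the canonical-tree-model construction together with the induction on $F$ that you sketch is the standard way to establish this characterization, and the subtlety you flag (that the existential case needs the claim stated uniformly over all $E$, so the induction hypothesis can be applied to the subtree $\I_{E'}$) is handled properly. The paper itself does not give any proof of this lemma but simply cites \cite{BaMo10}; your write-up therefore already supplies more than the paper does, and indeed follows essentially the approach of that reference, as you note in your final remark.
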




\subsection{Repairing Ontologies}
\label{subsec:repairing}

For the purpose of this subsection and also large parts of the rest of this paper, we leave it open what sort 
of axioms and ontologies are allowed in general, but we draw our examples from $\el$ ontologies. 
We only assume that there is a monotonic consequence relation $\On\models\alpha$
between ontologies (i.e., finite sets of axioms) and axioms, and that $\Con(\On)$ consists of all 
consequences of \On.


Assume in the following that the ontology $\On = \Ons\cup\Onr$ is the disjoint union of a \emph{static} ontology \Ons and a \emph{refutable}
ontology \Onr. When repairing the ontology, only the refutable part may be changed. For example, the static part of the ontology could be a carefully
hand-crafted TBox whereas the refutable part is an ABox that is automatically generated from (possibly erroneous) data. It may also make sense to
classify parts of a TBox as refutable, for example if the TBox is obtained as a combination of ontologies from different sources, some of which may be
less trustworthy than others. In a privacy application \cite{Grau2010PrivacyIO,BaBN17}, 
it may be the case that parts of the ontology are publicly known whereas other parts
are hidden. In this setting, in order to hide critical information, it only makes sense to change the hidden part of the ontology.

\begin{definition}
	Let $\On  = \Ons\cup\Onr$ be an ontology consisting of a static and a refutable part, and 
	$\alpha$ an axiom such that $\On\models\alpha$ and $\Ons\not\models\alpha$.
	The ontology $\On'$ is a \emph{repair of $\On$ w.r.t.\ $\alpha$} if
	$$
	\Con(\Ons\cup\On') \subseteq \Con(\On)\setminus\{\alpha\}.
	$$
	The repair $\On'$ is an \emph{optimal repair} of $\On$ w.r.t.\ $\alpha$ if there is no 
	repair $\On''$ of $\On$ w.r.t.\ $\alpha$ with $\Con(\Ons\cup\On')\subset\Con(\Ons\cup\On'')$.
	The repair $\On'$ is a \emph{classical repair} of $\On$ w.r.t.\ $\alpha$ if $\On'\subset\Onr$,
	and it is an \emph{optimal classical repair} of $\On$ w.r.t.\ $\alpha$ if
	there is no classical repair $\On''$ of $\On$ w.r.t.\ $\alpha$ such that $\On'\subset\On''$.
\end{definition}

The condition $\Ons\not\models\alpha$ ensures that \On does have a repair w.r.t.\ $\alpha$ since
obviously the empty ontology $\emptyset$ is such a repair. In general, \emph{optimal} repairs need not exist.

\begin{proposition}
There is an \el ontology $\On  = \Ons\cup\Onr$ and an \el axiom $\alpha$ such that
$\On$ does not have an optimal repair w.r.t.\ $\alpha$.
\end{proposition}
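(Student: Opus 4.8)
The plan is to give a concrete example and then prove directly that \emph{every} repair of it can be strictly improved. Take $\Ons := \emptyset$, $\Onr := \{A(a),\, r(a,a)\}$ for some $A \in \NC$, $r \in \NR$, $a \in \NI$, and set $\alpha := A(a)$; then $\On = \Onr$, $\On \models \alpha$, and $\Ons \not\models \alpha$, so we are in the setting of the definition. For $n \geq 1$ let $E_1 := \exists r.A$ and $E_{n+1} := \exists r.(A \sqcap E_n)$, so that $E_n$ has role depth $n$. Since the canonical model of $\On$ has a single element, which belongs to $A$ and carries an $r$-loop, an easy induction shows $\On \models E_n(a)$ for every $n$. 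Hence each $\On'_n := \{E_n(a)\}$ is a repair (it entails neither $\alpha$, because $A$ is not a top-level conjunct of $E_n$, nor anything outside $\Con(\On)$), and Lemma~\ref{subs:char:lem} gives $E_{n+1} \subeq E_n$ while $E_n \not\subeq E_{n+1}$, so these repairs already form a strictly $\subsetneq$-increasing chain of consequence sets. To turn this into a proof that no optimal repair exists, I will show that for every repair $\On'$ of $\On$ w.r.t.\ $\alpha$ there is a repair $\On''$ with $\Con(\On') \subsetneq \Con(\On'')$.

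The key observation is that repairs of this very simple $\On$ have a restricted shape. Since $\On$ constrains only the individual $a$, a straightforward model construction (attach to an arbitrary interpretation a fresh copy of $a$ that lies in $A$ and has an $r$-loop) shows that every non-tautological consequence of $\On$ is either a concept assertion about $a$ or the role assertion $r(a,a)$; in particular $\Con(\On)$ contains no non-tautological GCI and no assertion involving an individual other than $a$. Consequently a repair $\On'$ contains no such axioms either, so $\Con(\On') = \Con(\On_C)$ where $\On_C$ is $\{C(a)\}$ or $\{C(a), r(a,a)\}$ and $C := \bigsqcap\{C' \mid C'(a) \in \On'\}$ (with $C := \top$ if this set is empty) is a single finite \el concept; and since $\On' \not\models \alpha$, Lemma~\ref{subs:char:lem} tells us that $A$ is not a top-level conjunct of $C$.

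Now fix such a repair, let $d$ be the role depth of $C$, and consider $E_{d+1}$. Using Lemma~\ref{subs:char:lem} one proves by induction on $m$ that $C' \subeq E_m$ forces the role depth of $C'$ to be at least $m$; hence $C \not\subeq E_{d+1}$, and in fact $\On' \not\models E_{d+1}(a)$. In the case $\On_C = \{C(a), r(a,a)\}$ this needs a brief look at the canonical model of $\On_C$, whose only cycle is the $r$-loop at $a$, so that, $a$ not being in $A$, any witness for the nested existential $E_{d+1}$ must descend into the finite tree unfolding of $C$ and therefore cannot reach role depth $d+1$. Now put $\On'' := \{(C \sqcap E_{d+1})(a)\}$, adding $r(a,a)$ exactly when $\On'$ entails it. Then $\On \models (C \sqcap E_{d+1})(a)$ because $\On \models C(a)$ and $\On \models E_{d+1}(a)$, so $\Con(\On'') \subseteq \Con(\On)$; the concept name $A$ is still not a top-level conjunct of $C \sqcap E_{d+1}$, so $\On'' \not\models \alpha$; hence $\On''$ is a repair. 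It entails every axiom of $\On_C$, so $\Con(\On') \subseteq \Con(\On'')$, and $E_{d+1}(a) \in \Con(\On'') \setminus \Con(\On')$, so the inclusion is strict. Therefore $\On$ has no optimal repair.

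The step I expect to be the main obstacle is the structural characterization of repairs: convincingly ruling out that a repair could have extra consequences not present in $\Con(\On)$, such as non-trivial GCIs or assertions about other individuals, and, relatedly, controlling the canonical model of $\{C(a), r(a,a)\}$ so as to be sure that the $r$-loop at $a$ (which a repair is allowed to retain) cannot be exploited to witness existential restrictions of arbitrarily large role depth, precisely because $a$ itself is kept out of $A$. The remaining ingredients, namely the role-depth lower bound for being subsumed by $E_m$ and the various subsumption checks, are routine applications of Lemma~\ref{subs:char:lem}.
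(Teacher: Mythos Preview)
Your proof is correct, but it follows a genuinely different route from the paper's.

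The paper builds its example with a non-trivial \emph{static} part: $\Ons = \{A \sqsubseteq \exists r.A,\ \exists r.A \sqsubseteq A\}$ and $\Onr = \{A(a)\}$. The second GCI forces any concept assertion in a repair to avoid $A$ entirely (at any depth), and together with the first GCI this collapses the space of repairs so that, up to equivalence, every repair is of the form $\{((\exists r.)^{n_0}\top)(a)\}$ for some $n_0$. One then simply increases $n_0$ to get a strictly better repair. The argument is short because the TBox does the work of restricting the shape of repairs.

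Your example has $\Ons = \emptyset$ and instead puts the cyclicity directly into the ABox via $r(a,a)$. This means repairs can be much more varied --- any concept $C$ with $A$ not a top-level conjunct may appear --- so you cannot classify them explicitly. You compensate with a role-depth argument: whatever $C$ occurs in a given repair, the concept $E_{d+1}$ (with $d$ the role depth of $C$) is not entailed, and conjoining it in yields a strictly better repair. The delicate point, which you correctly isolate and handle, is that the $r$-loop at $a$ could in principle witness existentials of unbounded depth; it cannot here precisely because each $E_n$ requires the successor to lie in $A$, and $a$ is kept out of $A$ by the repair condition. Your canonical-model sketch for $\{C(a), r(a,a)\}$ makes this rigorous.

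What each approach buys: the paper's example gives a cleaner, almost one-line description of all repairs, at the cost of needing a static TBox. Your construction is more elementary in that it shows the phenomenon already with $\Ons = \emptyset$ and a purely assertional ontology, but it requires the additional role-depth and canonical-model reasoning to control the larger space of repairs.
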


\begin{proof}
        We set $\alpha := A(a)$, $\Ons := \T$, and $\Onr :=\A$ where
	$$
	\begin{array}{l}
	\T := \{ A \sqsubseteq \exists r.A,   \exists r.A \sqsubseteq A\}\ \ \mbox{and}\ \ 
	\A :=  \{A(a)\}.
	\end{array}
	$$
	%
	To show that there is no optimal repair of $\On$ w.r.t.\ $\alpha$,
	we consider an arbitrary repair $\On'$ and show that it cannot be optimal.
        Thus, let $\On'$ be such that $$\Con(\T\cup \On') \subseteq \Con(\On)\setminus\{A(a)\}.$$ 
	Without loss of generality we assume that $\On'$ contains assertions only. In fact, if $\On'$ contains a
	GCI that does not follow from $\T$, then $\Con(\T\cup \On') \not\subseteq \Con(\On)$. This is an easy consequence of
	the fact that, in $\el$, a GCI follows from a TBox together with an ABox iff it follows from the TBox alone.
	It is also easy to see that $\On'$ cannot contain role assertions 
        since no such assertions are entailed by \On.
	In addition, concept assertions following from $\T\cup \On'$ must have a specific form.\\[.5em]
	\noindent
	\emph{\textbf{Claim:}} If the assertion $C(a)$ is in $\Con(\T\cup \On')$, then $C$ does not contain $A$.\\[.5em]
	\noindent
	\emph{Proof of claim.} By induction on the role depth $n$ of $C$.\\[.3em]
	\noindent
	\emph{Base case:}
	If $n = 0$ and $A$ is contained in $C$, then $A$ is a conjunct of $C$ and thus
	$C(a)\in \Con(\T\cup \On')$ implies $A(a)\in \Con(\T\cup \On')$, which is a contradiction.\\[.3em]
	\noindent
	\emph{Step case:}
	If $n > 0$ and $A$ occurs at role depth $n$ in $C$, then $C(a)\in \Con(\T\cup \On')$ implies 
	that there are roles $r_1,\ldots, r_n$ such that $(\exists r_1.\cdots\exists r_n.A)(a)\in \Con(\T\cup \On').$
	Since $\Con(\T\cup \On') \subseteq \Con(\On)$, this can only be the case if $r_1 = \ldots = r_n = r$ since
	$\On$ clearly has models in which all roles different from $r$ are empty. Since $\T$ contains the 
	GCI $\exists r.A \sqsubseteq A$ and $r_n = r$, $(\exists r_1.\cdots\exists r_n.A)(a)\in \Con(\T\cup \On')$
	implies $(\exists r_1.\cdots\exists r_{n-1}.A)(a)\in \Con(\T\cup \On').$ Induction now yields that this is
	not possible, which completes the proof of the claim. 
	
	\smallskip
	Furthermore, as argued in the proof of the claim, any assertion belonging to $\Con(\On)$ cannot
	contain roles other than $r$. The same is true for concept names different from $A$. Consequently, all assertions
	$C(a)\in \Con(\T\cup \On')$ are such that $C$ is built using $r$ and $\top$ only. Any such concept $C$ is
	equivalent to a concept of the form $(\exists r.)^n\top$.
	
	Since $\On'$ is finite, there is a maximal $n_0$ such that $((\exists r.)^{n_0}\top)(a)\in \On'$, but
	$((\exists r.)^{n}\top)(a)\not\in \On'$ for all $n > n_0$. Since $(\exists r.)^n\top\sqsubseteq (\exists r.)^m\top$ if $m\leq n$,
	we can assume without loss of generality that $\On' = \{ ((\exists r.)^{n_0}\top)(a) \}$. 
%
%
	We claim that $((\exists r.)^{n}\top)(a)\not\in \Con(\T\cup \On')$ if $n > n_0$. To this purpose, we construct a model
	$\I$ of $\T\cup \On'$ such that $a^\I \not\in ((\exists r.)^{n}\top)^\I$. This model is defined as follows:
	$$
	\begin{array}{lll}
	\Delta^\I & = & \{d_0,d_1,\ldots,d_{n_0}\}, \\
	r^\I & = & \{(d_{i-1},d_i) \mid 1\leq i\leq n_0\},\\
	A^\I & = & \emptyset,\\
	a^\I & = & d_0.
	\end{array}
	$$
%
	Clearly, $\I$ is a model of $\On'$, and it does not satisfy $((\exists r.)^{n}\top)(a)$ if $n > n_0$. In addition,
	it is a model of $\T$ since $A^\I = (\exists r.A)^\I = \emptyset$.
	
	Consequently, if we choose $n$ such that $n > n_0$ and define $\On'' := \{((\exists r.)^{n}\top)(a)\}$, then $\Con(\T\cup \On') \subset Con(\T\cup \On'')$. 
	In addition, $\Con(\T\cup \On'') \subseteq \Con(\On)\setminus\{A(a)\}$, i.e., $\On''$ is a repair.
	%
        This shows that $\On'$ is not optimal. Since we have chosen
	$\On'$ to be an arbitrary repair, this shows that there cannot be an optimal repair.
\end{proof}

In contrast, optimal \emph{classical} repairs always exist.
One approach for computing such a repair uses justifications and hitting sets~\cite{Reit87}.

\begin{definition}
	Let $\On = \Ons\cup\Onr$ be an ontology and 
	$\alpha$ an axiom such that $\On\models\alpha$ and $\Ons\not\models\alpha$.
	A \emph{justification} for $\alpha$ in $\On$ is a minimal subset $J$ of $\Onr$ such that $\Ons\cup J\models \alpha$.
	Given justifications $J_1,\ldots, J_k$ for $\alpha$ in $\On$, a \emph{hitting set} of these justifications
	is a set $H$ of axioms such that $H\cap J_i\neq \emptyset$ for $i = 1,\ldots,k$. This hitting set is \emph{minimal}
	if there is no other hitting set strictly contained in it.
\end{definition}
Note that the condition $\Ons\not\models\alpha$ implies that justifications are non-empty. Consequently,
hitting sets and thus minimal hitting sets always exist.

The \emph{algorithm for computing an optimal classical repair} of $\On$ w.r.t.\ $\alpha$ proceeds in two steps:
(i)~compute all justifications $J_1,\ldots, J_k$ for $\alpha$ in $\On$;
and then
(ii)~compute a minimal hitting set $H$ of $J_1,\ldots, J_k$ and remove the elements of $H$ from $\Onr$, i.e.,
output $\On' = \Onr \setminus H$.

It is not hard to see that, independently of the choice of the hitting set, this algorithm produces an
optimal classical repair. Conversely, all optimal classical repairs can be generated this way by going through all hitting sets. 
%
%


\section{Gentle Repairs}
\label{sec:gentle}
Instead of removing axioms completely, as in the case of a classical repair,
a gentle repair replaces them by weaker axioms.

\begin{definition}
	\label{weaker:def}
	Let $\beta, \gamma$ be two axioms. We say that $\gamma$ is \emph{weaker than $\beta$} if
	$\Con(\{\gamma\})\subset\Con(\{\beta\})$.
\end{definition}
Alternatively, we could have introduced \emph{weaker w.r.t\ the strict part of the ontology}, by requiring 
$\Con(\Ons\cup\{\gamma\})\subset\Con(\Ons\cup\{\beta\})$.\footnote{%
	Defining weaker w.r.t\ the whole ontology $\On$ does not make sense since this ontology is possibly erroneous.
}
In this paper, we will not consider this alternative definition, although most of the results in this section
would also hold w.r.t.\ it (e.g., Theorem~\ref{always:term:prop}). The difference between the two definitions
is, however, relevant in the next section, where we consider concrete approaches for how to weaken axioms.
In  the case where the whole ontology is refutable, there is of course no difference between the two definitions.

Obviously, the \emph{weaker-than} relation from Definition~\ref{weaker:def} is transitive, i.e., if 
$\alpha$ is weaker than $\beta$ and $\beta$ is weaker than $\gamma$, then $\alpha$ is also weaker than $\gamma$.
In addition, a tautology is always weaker than a non-tautology. Replacing an axiom by a tautology is obviously the same
as removing this axiom. We assume in the following that there exist tautological axioms, which is obviously true for
description logics such as \el. 
\paragraph{Gentle repair algorithm:}
we still compute all justifications $J_1,\ldots, J_k$ for $\alpha$ in $\On$ and a minimal hitting set $H$ of $J_1,\ldots, J_k$.
But instead of removing the elements of $H$ from $\Onr$, we replace them by weaker axioms. To be more precise,
if $\beta\in H$ and $J_{i_1},\ldots,J_{i_\ell}$ are all the justifications containing $\beta$, then replace $\beta$ by
a weaker axiom $\gamma$ such that 
\begin{equation}\label{replace:prop}
\Ons\cup(J_{i_j}\setminus \{\beta\})\cup\{\gamma\} \not\models \alpha\ \ \mbox{for $j = 1,\ldots,\ell$}.
\end{equation}
Note that such a weaker axiom $\gamma$ always exists. In fact, we can choose a tautology as the axiom $\gamma$. If
$\gamma$ is a tautology, then replacing $\beta$ by $\gamma$ is the same as removing $\beta$. Thus, we have
$\Ons\cup(J_{i_j}\setminus \{\beta\})\cup\{\gamma\} \not\models \alpha$ due to the minimality of $J_{i_j}$. In addition,
minimality of $J_{i_j}$ also implies that $\beta$ is not a tautology since otherwise $\Ons \cup (J_{i_j}\setminus \{\beta\})$
would also have the consequence $\alpha$. In general, different choices of $\gamma$ yield different runs of the 
algorithm.

\vspace*{.5em}
In principle, the algorithm could always use a tautology $\gamma$, but then this run would produce a
classical repair. To obtain more gentle repairs, the algorithm needs to use a strategy that chooses stronger axioms (i.e.,
axioms $\gamma$ that are less weak than tautologies) if possible. 
%
In contrast to what is claimed in the literature (e.g.~\cite{Lam08}), this approach does not necessarily yield a repair.

\begin{lemma}\label{iteration:needed:lem}
Let $\On'$ be the ontology obtained from $\Onr$ by replacing all the elements of the hitting set
by weaker ones such that the condition (\ref{replace:prop}) is satisfied. 
Then $\Con(\Ons\cup\On')\subseteq \Con(\On)$, but in general
we may still have $\alpha\in \Con(\Ons\cup\On')$.
\end{lemma}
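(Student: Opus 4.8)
The plan is to establish the two parts of the lemma separately: first the (easy) monotonicity-style claim $\Con(\Ons\cup\On')\subseteq\Con(\On)$, and then the (more interesting) claim that $\alpha$ may nonetheless survive, by exhibiting a concrete $\el$ counterexample.

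\paragraph{Part 1: $\Con(\Ons\cup\On')\subseteq\Con(\On)$.}
First I would observe that $\On'$ is obtained from $\Onr$ by replacing each hitting-set element $\beta$ by some $\gamma$ that is weaker than $\beta$, i.e. $\Con(\{\gamma\})\subset\Con(\{\beta\})$. Hence for each replaced axiom we have $\{\beta\}\models\gamma$, and so $\On=\Ons\cup\Onr\models\delta$ for every $\delta\in\On'$ (axioms not in the hitting set are unchanged and trivially entailed). By monotonicity of the consequence relation, every model of $\On$ is a model of $\Ons\cup\On'$, which gives $\Con(\Ons\cup\On')\subseteq\Con(\On)$. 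This part is a one-line argument and presents no obstacle.

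\paragraph{Part 2: $\alpha$ may still follow.}
The substantive part is to build an example where, even after weakening one axiom from each justification according to~(\ref{replace:prop}), the ontology still entails $\alpha$. The key phenomenon to exploit is that~(\ref{replace:prop}) only requires the \emph{modified individual justifications} $\Ons\cup(J_{i_j}\setminus\{\beta\})\cup\{\gamma\}$ to lose $\alpha$; it says nothing about what happens when the weakened axioms from \emph{different} justifications are combined together with the rest of $\Onr$. So I would look for a situation in which a single weakened axiom $\gamma$ kills $\alpha$ in every justification it belongs to, yet the leftover consequences of $\gamma$, pooled with the other (unchanged or separately weakened) axioms of $\Onr$, re-derive $\alpha$. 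The cleanest design: take $\Ons=\emptyset$, let $\Onr$ consist of a few $\el$ GCIs forming (say) two overlapping justifications for some subsumption $\alpha=A\sqsubseteq B$, arrange that the hitting set is a single axiom $\beta$ lying in both justifications, and choose a weaker axiom $\gamma$ (not a tautology) such that each modified justification alone loses $\alpha$ but $\Ons\cup\On'$ — containing $\gamma$ plus the remaining axioms of the two justifications — still proves $\alpha$ via a derivation that mixes "half" of one justification with "half" of the other. Concretely one can take $\beta = A\sqsubseteq P\sqcap Q$ with the two justifications being $\{\beta,\ P\sqcap Q\sqsubseteq B\}$ and (after suitably augmenting $\Onr$) something that, together with $\gamma=A\sqsubseteq P$ or $\gamma=A\sqsubseteq Q$ alone, still yields $\alpha$; I would verify non-entailment of the modified justifications and entailment from $\Ons\cup\On'$ using Lemma~\ref{subs:char:lem}. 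Since the lemma only asserts an existence ("in general we may still have"), a single such triple $(\Ons,\Onr,\alpha)$ together with an explicit legal run of the algorithm suffices.

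\paragraph{Main obstacle.}
The delicate point is getting the bookkeeping of~(\ref{replace:prop}) exactly right: I must pick $\gamma$ weaker than $\beta$, confirm that it is \emph{not} a tautology (so this really is a "gentle" step and not a disguised classical repair), check that $\Ons\cup(J_i\setminus\{\beta\})\cup\{\gamma\}\not\models\alpha$ for \emph{each} justification $J_i$ containing $\beta$, and simultaneously ensure the full ontology $\Ons\cup\On'$ does re-entail $\alpha$. Balancing "weak enough per-justification" against "strong enough in aggregate" is where the example has to be engineered carefully; once a candidate is fixed, the verifications are routine applications of the $\el$ subsumption characterization and the construction of small countermodels, exactly as in the proof of the preceding proposition.
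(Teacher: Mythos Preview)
Your Part~1 is correct and matches the paper's one-line argument from Definition~\ref{weaker:def}.

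For Part~2, your diagnosis of the phenomenon is right---condition~(\ref{replace:prop}) only constrains each modified justification in isolation, not the full ontology---but the paper realises it by a much simpler mechanism than the one you sketch. Instead of two overlapping justifications, the paper uses a \emph{single} justification together with an axiom of $\Onr$ that lies in \emph{no} justification at all (and is therefore never touched by the hitting set). Concretely: $\Ons=\emptyset$, $\Onr=\{\,B\sqsubseteq A,\ (A\sqcap B)(a)\,\}$, $\alpha=A(a)$. The unique justification is $J=\{(A\sqcap B)(a)\}$; the GCI $B\sqsubseteq A$ is not part of any justification. Weakening $\beta=(A\sqcap B)(a)$ to $\gamma=B(a)$ satisfies~(\ref{replace:prop}) since $(J\setminus\{\beta\})\cup\{\gamma\}=\{B(a)\}\not\models A(a)$, yet $\On'=\{B\sqsubseteq A,\ B(a)\}$ still entails $A(a)$. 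A pure-GCI variant with $\Onr=\{C\sqsubseteq A\sqcap B,\ B\sqsubseteq A\}$ and $\alpha=C\sqsubseteq A$ works identically. So the ``rest of $\Onr$'' you mention in passing is actually the whole trick; no second justification is needed.

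Your two-justification plan, as sketched, has a gap. With $\beta=A\sqsubseteq P\sqcap Q$ and the only weakenings you name, $\gamma=A\sqsubseteq P$ or $\gamma=A\sqsubseteq Q$, any natural choice of a second justification containing $\beta$ (e.g.\ $\{\beta,\ P\sqsubseteq B\}$ or $\{\beta,\ Q\sqsubseteq B\}$) will make~(\ref{replace:prop}) fail for one of the two, while weakenings that do satisfy~(\ref{replace:prop}) for both justifications then fail to re-derive $\alpha$ when everything is put back together. Your description of the second justification (``something that, together with $\gamma$ \ldots, still yields $\alpha$'') is also ambiguous: read literally it would violate~(\ref{replace:prop}). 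A working two-justification example may exist, but it requires more care than you indicate, whereas the paper's single-justification-plus-leftover-axiom construction sidesteps the balancing problem entirely.
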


\begin{proof}
The definition of ``weaker than'' (see Definition~\ref{weaker:def}) obviously implies that 
$\Con(\Ons\cup\On')\subseteq \Con(\On)$. 

We now give an example where this approach nevertheless does not produce a repair.
	Let $\On = \Ons \cup \Onr$ where $\Ons = \emptyset$ and $\Onr = \T \cup \A$ with
	$\T = \{ B\sqsubseteq A \}$ and
	$\A = \{ (A\sqcap B)(a) \}$, 
	and $\alpha$ be the consequence $A(a)$. Then $\alpha$ has a single justification $J = \{(A\sqcap B)(a) \}$,
	and thus $H = \{\beta = (A\sqcap B)(a) \}$ is the only hitting set. The assertion $\gamma = B(a)$ is weaker than $\beta$ and it
	satisfies $(J\setminus \{\beta\})\cup\{\gamma\} \not\models \alpha$. However, if we define
	$\On' = (\On\setminus \{\beta\})\cup\{\gamma\}$, then $\On'\models \alpha$ still holds.
%
\end{proof}

       A similar example that uses only GCIs is the following, where now we consider a refutable ontology
       $\On = \Onr = \{C \sqsubseteq A\sqcap B, B\sqsubseteq A \}$ and we
       assume that $\alpha$ is the consequence $C \sqsubseteq A$.
       Then $\alpha$ has a single justification $J = \{C\sqsubseteq A\sqcap B\}$
       and thus $H = \{\beta = C\sqsubseteq A\sqcap B\}$ is the only hitting set. The GCI $\gamma = C\sqsubseteq B$ is a weaker than $\beta$ and it
       satisfies $(J\setminus \{\beta\})\cup\{\gamma\} \not\models \alpha$. However, if we define
       $\On' = (\On\setminus \{\beta\})\cup\{\gamma\}$, then $\On'\models \alpha$.

These examples show that applying the gentle repair approach only once may not lead to a repair.
For this reason, we need to \emph{iterate this approach}, i.e., if the resulting ontology $\Ons\cup\On'$ still has $\alpha$
as a consequence, we again compute all justifications and a hitting set for them, and then replace the elements
of the hitting set with weaker axioms as described above. This is iterated until a repair is reached.
We can show that this iteration indeed always terminates after finitely many steps with a repair.

\begin{theorem}\label{always:term:prop}
	Let $\On^{(0)} = \Ons^{(0)}\cup \Onr^{(0)}$ be a finite ontology and $\alpha$ an axiom such that 
	$\On^{(0)}\models\alpha$ and $\Ons^{(0)}\not\models\alpha$. Applied to $\On^{(0)}$ and $\alpha$,
	the iterative algorithm described above stops after a finite number of iterations that is at most exponential in the
	cardinality of $\Onr^{(0)}$, and yields as output an ontology 
	that is a repair of $\Ons^{(0)}$ w.r.t.\ the consequence $\alpha$. 
\end{theorem}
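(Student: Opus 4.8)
The plan is to prove termination by exhibiting a strictly decreasing measure on the sequence of refutable ontologies produced by the iteration, and then argue that this measure cannot decrease too many times. The key observation is that each iteration step only ever \emph{weakens} axioms (or removes them, which is weakening to a tautology), so by Definition~\ref{weaker:def} and monotonicity of $\Con$ we always have $\Con(\Ons^{(0)}\cup\Onr^{(i+1)})\subseteq\Con(\Ons^{(0)}\cup\Onr^{(i)})$; hence the chain of consequence sets is non-increasing. What makes an iteration \emph{productive} is that at step $i$ we pick a hitting set $H$ of all justifications of $\alpha$ and replace each $\beta\in H$ by a $\gamma$ satisfying~(\ref{replace:prop}); this guarantees $\Ons^{(0)}\cup\Onr^{(i)}\not\models\alpha$ \emph{relative to the old justifications}, but as the lemma showed, new derivations of $\alpha$ may reappear, so $\alpha$ can still be a consequence. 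Nonetheless, I claim each iteration strictly \emph{shrinks} the consequence set, i.e.\ $\Con(\Ons^{(0)}\cup\Onr^{(i+1)})\subsetneq\Con(\Ons^{(0)}\cup\Onr^{(i)})$: indeed, pick any $\beta\in H$ that is weakened to a proper weakening $\gamma$ on a justification $J$; since $J$ is a justification, $\Ons^{(0)}\cup(J\setminus\{\beta\})\not\models\alpha$, and~(\ref{replace:prop}) plus the fact that $\gamma$ is strictly weaker than $\beta$ means some consequence of $\beta$ (witnessed already inside the small ontology $J$, hence inside $\Onr^{(i)}$) is lost. One must be slightly careful here — if every weakening in $H$ were all the way to a tautology the step behaves like the classical algorithm and still strictly shrinks $\Con$ because it actually removes $\alpha$ together with its justifications; the genuinely new case is handled by the same argument.

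From a strictly decreasing chain of consequence sets alone one only gets termination, not an exponential bound, so the second and more delicate part is to bound the \emph{length} of the chain. Here the plan is to exploit that all the action happens inside a fixed finite universe of axioms. The crucial structural fact is that every axiom ever appearing in any $\Onr^{(i)}$ is obtained from an axiom of $\Onr^{(0)}$ by a (possibly iterated) weakening that is forced to satisfy~(\ref{replace:prop}) against subsets of the \emph{current} refutable ontology. I would argue — and this is the step I expect to be the main obstacle — that one may always choose the weakenings from a finite pool whose size is bounded in terms of $\Onr^{(0)}$: for instance, for $\el$ axioms $\gamma$ can be taken among the conjunctions of sub-concepts occurring in the original axioms of the relevant justifications, and more abstractly one restricts attention to ``canonical'' weakenings. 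Granting such a finite pool $P$ of candidate replacement axioms with $|P|$ polynomially (or at worst singly-exponentially) bounded in the size of $\Onr^{(0)}$, every intermediate refutable ontology is a subset of $P$, so there are at most $2^{|P|}$ of them; combined with the fact that the $\Con$-chain is strictly decreasing and each $\Con(\Ons^{(0)}\cup\Onr^{(i)})$ is determined by $\Onr^{(i)}\subseteq P$, no refutable ontology can repeat, and the iteration stops after at most $2^{|P|}$ steps, which is the claimed exponential bound in $|\Onr^{(0)}|$.

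It remains to verify that the final output is genuinely a repair of $\Ons^{(0)}$ w.r.t.\ $\alpha$ in the sense of Definition~2.3. When the iteration halts, by construction $\alpha\notin\Con(\Ons^{(0)}\cup\Onr^{(\text{final})})$ — this is exactly the stopping condition — and from the monotone-decreasing $\Con$-chain established above we get $\Con(\Ons^{(0)}\cup\Onr^{(\text{final})})\subseteq\Con(\On^{(0)})$. Putting these together yields $\Con(\Ons^{(0)}\cup\Onr^{(\text{final})})\subseteq\Con(\On^{(0)})\setminus\{\alpha\}$, which is precisely the definition of a repair; one should also note $\Ons^{(0)}\not\models\alpha$ is unchanged, so the output is a legitimate repair and not the trivial degenerate case. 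I would present the argument in this order: (1) monotonicity of the $\Con$-chain from weakening; (2) strict decrease of $\Con$ at each iteration via the justification-minimality argument; (3) the finite-pool lemma bounding the number of reachable $\Onr^{(i)}$; (4) combine (2) and (3) to get the exponential iteration bound; (5) read off from the stopping condition and (1) that the output is a repair. The real work — and the only place where the abstract framework needs a genuinely new idea rather than bookkeeping — is step (3), pinning down a finite, size-bounded candidate pool of weakenings; in the fully general setting this presumably needs an extra assumption, while for $\el$ it follows from Lemma~\ref{subs:char:lem} and the fact that subsumption there is controlled by the finitely many sub-concepts of the axioms involved.
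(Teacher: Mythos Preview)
Your argument has two genuine gaps, and the paper's proof uses a different idea that sidesteps both.

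\textbf{Step~(2) is false.} The claim that $\Con(\Ons^{(0)}\cup\Onr^{(i+1)})\subsetneq\Con(\Ons^{(0)}\cup\Onr^{(i)})$ fails already in the example from Lemma~\ref{iteration:needed:lem}. There $\Onr^{(0)}=\{B\sqsubseteq A,\ (A\sqcap B)(a)\}$, the only justification for $A(a)$ is $\{(A\sqcap B)(a)\}$, and weakening that axiom to $B(a)$ yields $\Onr^{(1)}=\{B\sqsubseteq A,\ B(a)\}$. But $\Onr^{(1)}$ entails $A(a)$ and hence $(A\sqcap B)(a)$, so $\Con(\Ons^{(0)}\cup\Onr^{(0)})=\Con(\Ons^{(0)}\cup\Onr^{(1)})$. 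Your justification-minimality reasoning shows that a consequence of $\beta$ is lost \emph{relative to $J$}, but other axioms in $\Onr^{(i)}$ outside $J$ can restore it; that is precisely the phenomenon Lemma~\ref{iteration:needed:lem} exhibits.

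\textbf{Step~(3) does not hold in the abstract setting.} The theorem is stated for an arbitrary monotone consequence relation, where the algorithm may pick \emph{any} weaker axiom satisfying~(\ref{replace:prop}). There is no finite pool $P$ of candidate weakenings in general, and you acknowledge this (``presumably needs an extra assumption''). But the theorem does not make such an assumption. Even if a pool existed, your bound $2^{|P|}$ would be exponential in $|P|$, not in $|\Onr^{(0)}|$ as required.

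\textbf{The missing idea.} The paper avoids tracking axioms or consequence sets altogether. Instead, fix a bijection between $\Onr^{(0)}$ and $\{1,\ldots,n\}$ and let each weakened axiom inherit its predecessor's label. Define
\[
S_i := \bigl\{K\subseteq\{1,\ldots,n\}\ \big|\ \Ons^{(0)}\cup\{\beta\in\Onr^{(i)}:\ell_i(\beta)\in K\}\models\alpha\bigr\}.
\]
Since axioms only get weaker, $S_{i+1}\subseteq S_i$. For strictness, take any justification $J$ hit at step~$i$: its label set is in $S_i$, but condition~(\ref{replace:prop}) forces it out of $S_{i+1}$. As $S_0\subseteq 2^{\{1,\ldots,n\}}$, the chain $S_0\supsetneq S_1\supsetneq\cdots$ has length at most $2^n$, giving the exponential bound in $|\Onr^{(0)}|$ directly, with no assumption on the shape of the weakenings.
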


\begin{proof}
	%
	Assume that $\Onr^{(0)}$ contains $n$ axioms, and that there is an infinite run $R$ of the algorithm on input $\On^{(0)}$ and $\alpha$. 
	Take a bijection $\ell_0$ between $\Onr^{(0)}$ and $\{1,\ldots,n\}$ that assigns unique labels to axioms. Whenever 
	we weaken an axiom during a step of the run, the new weaker axiom inherits the label of the original 
	axiom. Thus, we have bijections $\ell_i : \Onr^{(i)}\rightarrow \{1,\ldots,n\}$ for all ontologies $\Onr^{(i)}$
	considered during the run $R$ of the algorithm. For $i\geq 0$ we define
	$$
	\begin{array}{lll}
	S_i : = & \{ K\subseteq \{1,\ldots,n \} \mid \\ [.3em]
	& \Ons\cup\{\beta\in\Onr^{(i)} \mid \ell_i(\beta)\in K\}\models\alpha\},
	\end{array}
	$$
	i.e., $S_i$ contains all sets of indices such that the corresponding subset of $\Onr^{(i)}$ together with $\Ons$
	has the consequence $\alpha$.
	
	We claim that $S_{i+1}\subset S_i$. Note that $S_{i+1}\subseteq S_i$ is an immediate consequence of the fact that
	$\ell_i(\gamma) = j = \ell_{i+1}(\gamma')$ implies that $\gamma = \gamma'$ or $\gamma'$ is weaker than $\gamma$.
	Thus, it remains to show that the inclusion is strict. This follows from the following observations.
	Since the algorithm does not terminate with the ontology $\Onr^{(i)}$, 
	we still have $\Ons\cup\Onr^{(i)}\models\alpha$, and thus there is at least one justification $\emptyset\subset J\subseteq \Onr^{(i)}$.
	Consequently, the hitting set $H$ used in this step of the algorithm contains an element $\beta$ of $\Onr^{(i)}$. When going
	from $\Onr^{(i)}$ to $\Onr^{(i+1)}$, $\beta$ is replaced by a weaker axiom $\beta'$ such that 
	$\Ons\cup(J\setminus \{\beta\})\cup\{\beta'\}\not\models\alpha$.
	But then the set $\{\ell(\gamma) \mid \gamma\in J\}$ belongs to $S_i$, but not to $S_{i+1}$.
	
	Since $S_0$ contains only exponentially many sets, the strict inclusion $S_{i+1}\subset S_i$ can happen only exponentially often, which
	contradicts our assumption that there is an infinite run $R$ of the algorithm on input $\On^{(0)}$ and $\alpha$.
	This shows termination after exponentially many steps. However, if the algorithm terminates with output 
	$\Onr^{(i)}$, then $\Ons\cup\Onr^{(i)}\not\models\alpha$.
	In fact, otherwise, there would be a possibility to weaken $\Onr^{(i)}$ into $\Onr^{(i+1)}$ since it would always be possible to replace the
	elements of a hitting set by tautologies, i.e., perform a classical repair.
\end{proof}

When computing a classical repair, considering all justifications and then removing a minimal hitting set of these justifications guarantees
that one immediately obtains a repair. We have seen in the proof of Lemma~\ref{iteration:needed:lem} that with our gentle repair approach
this need not be the case. Nevertheless, we were able to show that, after a finite number of iterations of the approach, we obtain a repair.
The proof of termination actually shows that for this it is sufficient to weaken only one axiom of one justification such that
the resulting set is no longer a justification. This motivates the following modification of our approach: 

\paragraph{Modified gentle repair algorithm:}
compute one justification $J$ for $\alpha$ in $\On$ and choose an axiom $\beta\in J$.
Replace $\beta$ by a weaker axiom $\gamma$ such that
\begin{equation}\label{modified:repair:cond}
\Ons\cup(J\setminus \{\beta\})\cup\{\gamma\} \not\models \alpha.
\end{equation}
Clearly, one needs to iterate this approach, but it is easy to see that the termination argument used in the proof of
Proposition~\ref{always:term:prop} also applies here. 

\begin{corollary}
	Let $\On^{(0)} = \Ons^{(0)}\cup \Onr^{(0)}$ be a finite ontology and $\alpha$ an axiom such that
	$\On^{(0)}\models\alpha$ and $\Ons^{(0)}\not\models\alpha$. Applied to $\On^{(0)}$ and $\alpha$,
	the modified iterative algorithm stops after a finite number of iterations that is at most exponential in the
	cardinality of $\Onr^{(0)}$, and yields as output an ontology $\widehat{\On}_s$
	that is a repair of $\Ons^{(0)}$ w.r.t.\ $\alpha$.
\end{corollary}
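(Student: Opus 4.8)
The plan is to adapt the termination argument from the proof of Theorem~\ref{always:term:prop} essentially verbatim, since the modified algorithm differs from the original only in that it weakens an axiom in a single justification rather than in all justifications hit by a chosen hitting set. First I would set up the same labelling machinery: fix a bijection $\ell_0$ between $\Onr^{(0)}$ and $\{1,\ldots,n\}$, and when an axiom is weakened let the new axiom inherit the label of the old one, yielding bijections $\ell_i : \Onr^{(i)}\to\{1,\ldots,n\}$ for each ontology arising in the run. Define $S_i$ exactly as before, namely the collection of index sets $K\subseteq\{1,\ldots,n\}$ such that $\Ons\cup\{\beta\in\Onr^{(i)}\mid \ell_i(\beta)\in K\}\models\alpha$.

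Next I would argue $S_{i+1}\subseteq S_i$ in the same way: since $\ell_i(\gamma)=j=\ell_{i+1}(\gamma')$ forces $\gamma=\gamma'$ or $\gamma'$ weaker than $\gamma$, weakening can only remove consequences, so any index set giving $\alpha$ at step $i+1$ already gave $\alpha$ at step $i$. For strictness, the key point is that in the modified algorithm we pick a justification $J\subseteq\Onr^{(i)}$ and an axiom $\beta\in J$, and replace $\beta$ by $\beta'$ with $\Ons\cup(J\setminus\{\beta\})\cup\{\beta'\}\not\models\alpha$; hence the index set $\{\ell_i(\gamma)\mid\gamma\in J\}$ lies in $S_i$ but not in $S_{i+1}$. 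This gives $S_{i+1}\subsetneq S_i$, and since $|S_0|\le 2^n$ the run can have at most exponentially many steps, contradicting the existence of an infinite run. Termination with a repair follows as before: if the algorithm halts at $\Onr^{(i)}$ with $\Ons\cup\Onr^{(i)}\models\alpha$, then there is a nonempty justification, so we could still pick $\beta$ in it and weaken it to a tautology, contradicting halting; thus on halting $\Ons\cup\Onr^{(i)}\not\models\alpha$, and monotonicity of weakening gives $\Con(\Ons\cup\Onr^{(i)})\subseteq\Con(\On^{(0)})$, so $\widehat{\On}_s := \Onr^{(i)}$ is a repair.

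I do not anticipate a genuine obstacle here, since the corollary is explicitly flagged in the surrounding text as following by ``the termination argument used in the proof of Proposition~\ref{always:term:prop}.'' The only point requiring a line of care is that in the modified algorithm we only weaken \emph{one} axiom of \emph{one} justification per iteration — so I must make sure the strictness argument uses precisely the single justification $J$ that was processed (whose index set leaves $S_{i+1}$), and not appeal to hitting sets. Correspondingly, the ``halting implies repair'' direction must invoke the fact that from any remaining justification one may pick an axiom and weaken it to a tautology, which is well-defined because justifications are nonempty (as $\Ons\not\models\alpha$ is preserved). Given all of this, I would write the proof as a short paragraph: ``The argument is identical to that of Theorem~\ref{always:term:prop}, except that in the strictness step one uses the single justification $J$ and axiom $\beta\in J$ chosen by the modified algorithm: the index set $\{\ell_i(\gamma)\mid\gamma\in J\}$ belongs to $S_i$ but not to $S_{i+1}$. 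Termination after exponentially many steps and the fact that the output is a repair then follow as before.''
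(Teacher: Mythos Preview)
Your proposal is correct and follows exactly the approach the paper indicates: the paper gives no separate proof for this corollary but simply remarks that ``the termination argument used in the proof of Proposition~\ref{always:term:prop} also applies here,'' which is precisely the adaptation you carry out. Your observation that the strictness step now uses the single justification $J$ and chosen axiom $\beta$ (rather than a hitting set) is the only point where the argument needs adjusting, and you handle it correctly.
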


An important advantage of this modified approach is that the complexity of a single iteration step may decrease
considerably. For example, for the DL $\el$, a single justification can be computed in polynomial time,
while computing all justifications may take exponential time \cite{BaPS07}. In addition, to compute a minimal hitting
set one needs to solve an NP-complete problem \cite{GaJo79} whereas choosing one axiom from a single justification is easy.
However, as usual, there is no free lunch: we can show that the modified gentle repair algorithm may indeed need
exponentially many iteration steps.\footnote{%
It is not clear yet whether this is also the case for the unmodified gentle repair algorithm.
}


\begin{proposition}
There is a sequence of \el ontologies $\On^{(n)}  = \Ons^{(n)}\cup\Onr^{(n)}$ with $\Ons^{(n)} = \emptyset$
and an \el axiom $\alpha$ such that the modified gentle repair algorithm applied to $\On^{(n)}$ and $\alpha$
has a run with exponentially many iterations in the size of $\On^{(n)}$.
\end{proposition}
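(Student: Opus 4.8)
The plan is to construct, for every $n$, a concrete refutable ontology $\Onr^{(n)}$ of size polynomial in $n$ (with $\Ons^{(n)}=\emptyset$) together with a concrete axiom $\alpha$, and then to exhibit \emph{one} run of the modified gentle repair algorithm on $\On^{(n)}=\Onr^{(n)}$ and $\alpha$ that performs $2^{\Omega(n)}$ iterations before $\alpha$ is finally removed. Together with the upper bound of Theorem~\ref{always:term:prop} (which, as remarked, also applies to the modified algorithm), this shows that the exponential bound is essentially optimal.

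The engine of the construction is the observation that, already in $\el$, a single concept of size $O(n)$ admits an \emph{exponentially long chain} of pairwise non-equivalent generalizations. Concretely, one takes a concept such as $E_n=\bigsqcap_{i=1}^{n}\exists r.(A_i\sqcap B_i)$; by Lemma~\ref{subs:char:lem}, the $2^n$ concepts $F_T:=\bigsqcap_{i=1}^{n}\exists r.C_i^{T}$ with $C_i^{T}=A_i$ for $i\in T$ and $C_i^{T}=B_i$ otherwise are pairwise incomparable and each strictly more general than $E_n$. Hence $\exists s.E_n$ can first be weakened to $\bigsqcap_{T\subseteq\{1,\dots,n\}}\exists s.F_T$ and then, dropping the conjuncts $\exists s.F_T$ one at a time, weakened $2^n$ times more, every single step being a genuine weakening in the sense of Definition~\ref{weaker:def} (again by Lemma~\ref{subs:char:lem}). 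I would build $\Onr^{(n)}$ around one ``main'' axiom whose asserted concept (or right-hand side) is $\exists s.E_n$ — this keeps $\abs{\On^{(n)}}$ polynomial — together with a polynomial-size set of auxiliary axioms whose sole purpose is to make $\alpha$ re-derivable from the main axiom at every level of this chain, and in such a way that at each stage there is a justification one of whose links can be replaced so as to both satisfy condition~(\ref{modified:repair:cond}) and leave $\alpha$ entailed via a different, still intact justification (the phenomenon exhibited by the examples following Lemma~\ref{iteration:needed:lem}, now iterated exponentially often). The desired run is then forced step by step along the chain, halting only once the main axiom has become a tautology.

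Concretely I would carry out: (1)~fix $E_n$ and prove, via Lemma~\ref{subs:char:lem}, incomparability of the $F_T$'s and the resulting chain of $2^n$ valid weakening steps; (2)~define the auxiliary axioms and establish two bookkeeping lemmas, namely that $\alpha$ is a consequence of $\Onr^{(n)}$ with the main axiom at any intermediate level, and that passing to the next level down is a weakening satisfying~(\ref{modified:repair:cond}) for the justification the algorithm selects; (3)~read off the run and count its length; (4)~note that $\abs{\On^{(n)}}$ is polynomial, so the iteration count is exponential in $\abs{\On^{(n)}}$. It may be cleaner to organise (2)--(3) self-similarly, i.e.\ so that $\On^{(n)}$ contains (weakened descendants of) the axioms of $\On^{(n-1)}$ and the number of iterations obeys a recurrence $T(n)=2\,T(n-1)+O(1)$, hence $T(n)=2^{\Omega(n)}$.

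The hard part is the design of the auxiliary axioms so that three requirements hold simultaneously: (a)~after each of the exponentially many weakening steps, $\alpha$ must still follow from the whole ontology, or the algorithm halts prematurely — so a second, independent derivation of $\alpha$ must be available at every stage; (b)~at the same time, the single step taken must destroy the particular justification the algorithm selects, so that~(\ref{modified:repair:cond}) is met by exactly the ``one link further down'' move and not by a cheaper one; and (c)~all of this must be realised with only polynomially many auxiliary axioms even though there are exponentially many chain links, which rules out the naive ``one axiom per level'' encoding and is precisely what forces the combinatorial gadget above. Reconciling (a)--(c) — while staying inside $\el$, so that the counter cannot be implemented by genuine counting but must emerge from subsumption combinatorics — is where essentially all the work lies, and a poor choice of gadget would collapse the run back to polynomially many iterations.
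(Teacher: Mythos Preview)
Your overall strategy matches the paper's: blow up a single polynomial-size axiom, via legitimate weakenings, into an exponential conjunction of pairwise incomparable existential conjuncts, and then arrange that each conjunct supports its own justification for $\alpha$, so that peeling them off one by one yields $2^{\Omega(n)}$ iterations. Your gadget $E_n=\bigsqcap_i\exists r.(A_i\sqcap B_i)$ with the $2^n$ generalizations $F_T$ is a close cousin of the paper's $\exists r.(P_1\sqcap Q_1\sqcap\cdots\sqcap P_n\sqcap Q_n)$ and its $2^n$ choice-conjuncts $\exists r.(X_1\sqcap\cdots\sqcap X_n)$ with $X_i\in\{P_i,Q_i\}$.

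The genuine gap is exactly the one you flag yourself: step~(2), the auxiliary axioms. Without them you have a chain of concepts but no run of the algorithm, since at every iteration the weakening you take must satisfy condition~(\ref{modified:repair:cond}) for a \emph{specific} justification while $\alpha$ remains derivable from the rest of the ontology. The paper resolves your requirements (a)--(c) concretely. For every $X\in\{P_i,Q_i\}$, $Y\in\{P_{i+1},Q_{i+1}\}$ with $1\le i<n$ it adds $\exists r.(X\sqcap Y)\sqsubseteq D_{XY}$ and $D_{XY}\sqcap X\sqsubseteq Y$, together with $\exists r.P_1\sqsubseteq P_1$, $\exists r.Q_1\sqsubseteq Q_1$, $P_n\sqsubseteq B$, $Q_n\sqsubseteq B$. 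This is only $O(n)$ axioms, yet for each of the $2^n$ sequences $(X_1,\ldots,X_n)$ they assemble, with the blown-up main axiom, into a distinct minimal justification of $A\sqsubseteq B$: from the conjunct $\exists r.(X_1\sqcap\cdots\sqcap X_n)$ derive $A\sqsubseteq X_1$, climb via the $D_{X_iX_{i+1}}$ axioms to $A\sqsubseteq X_n$, then apply $X_n\sqsubseteq B$. Deleting one concept name from that one conjunct kills precisely this justification and none of the others. The paper also inserts a preliminary phase of $2n$ weakenings (driven by a separate batch $\T_1^{(n)}$) to pass from the small starting axiom $A\sqsubseteq\exists r.\bigsqcap_i(P_i\sqcap Q_i)$ to the exponential conjunction; your one-shot jump to $\bigsqcap_T\exists s.F_T$ is permissible in principle, but you would still need to exhibit the justification that this single step destroys.
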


\begin{proof}
	For $n \geq 1$, consider the set of concept names $I^{(n)}=\{P_i,Q_i\mid 1\le i\le n\}$, and define 
	$\On^{(n)} :=  \Onr^{(n)} := \T_1^{(n)} \cup \T_2^{(n)}$, where 
	$$
	\begin{array}{@{}l@{\ }l@{}}
	\T_1^{(n)}:= & \{A \sqsubseteq \exists r.\bigsqcap I^{(n)}, \ \exists r.(P_n\sqcap Q_n)\sqsubseteq B\} \ \cup  \\ 
	& \{ P_{i}\sqcap Q_{i}\sqsubseteq P_{i+1}, \ P_{i}\sqcap Q_{i}\sqsubseteq Q_{i+1} \mid 1\le i< n\}, \\ [.3em]
	\T_2^{(n)} := & \{ \exists r.(X\sqcap Y) \sqsubseteq D_{XY}, \ D_{XY} \sqcap X \sqsubseteq Y \mid \\ 
	& X\in\{P_i,Q_i\}, Y\in\{P_{i+1},Q_{i+1}\},1\le i<n\} \ \cup \\ 
	& \{\exists r. P_1\sqsubseteq P_1, \ \exists r. Q_1 \sqsubseteq Q_1, \
	P_n \sqsubseteq B, \ Q_n \sqsubseteq B\}.
	\end{array}
	$$
	It is easy to see that the size of $\On^{(n)}$ is polynomial in $n$ and that
        $\On^{(n)}\models A\sqsubseteq B$. Suppose that we want to get rid of this consequence using the
	modified gentle repair approach. First, we can find the justification
	\[
	\{ A \sqsubseteq \exists r.\bigsqcap I^{(n)},  \ \exists r.(P_n\sqcap Q_n)\sqsubseteq B\}.
	\]
	We repair it by weakening the first axiom to 
	$$\gamma:=A\sqsubseteq \exists r.\bigsqcap(I^{(n)}\setminus\{P_n\})\ \ \sqcap\ \ \exists r.\bigsqcap(I^{(n)}\setminus\{Q_n\}).$$ 
	At this point, we can find a justification that uses $\gamma$ and $P_{n-1}\sqcap Q_{n-1}\sqsubseteq P_n$. We
	further weaken $\gamma$ to 
	$$
	\begin{array}{ll}
	A\sqsubseteq & \exists r.\bigsqcap(I^{(n)}\setminus\{P_n,P_{n-1}\}) \ \sqcap \\ [.3em]
	& \exists r.\bigsqcap(I^{(n)}\setminus\{P_n,Q_{n-1}\}) 
	\sqcap\exists r.\bigsqcap(I^{(n)}\setminus\{Q_n\}).
	\end{array}
	$$
	Repeating this approach, 
        after $2n$ weakenings
	we have only changed the first axiom, weakening it to the axiom
	\begin{align}
	A \sqsubseteq \bigsqcap_{X_i\in\{P_i,Q_i\},1\le i\le n} \exists r.(X_1\sqcap\cdots \sqcap X_n),
	\label{axiom:weak}
	\end{align}
	whose right-hand side is a conjunction with $2^n$ conjuncts, each of them representing a possible choice of $P_i$ or $Q_i$ at
	every location $i,1\le i\le n$.
	
	So far, we have just considered axioms from $\T_1^{(n)}$. Taking also 
        axioms from $\T_2^{(n)}$ into account, we obtain for every conjunct 
        $\exists r.(X_1\sqcap\cdots \sqcap X_n)$ in axiom \eqref{axiom:weak}
	a justification for 
	$A\sqsubseteq B$ that consists of \eqref{axiom:weak} and the axioms
	$$
	\begin{array}{l}
	\{\, \exists r.X_1\sqsubseteq X_1, \  X_n \sqsubseteq B \, \} \ \cup \\ [.3em]
	\{ \, \exists r.(X_i\sqcap X_{i+1}) \sqsubseteq D_{X_iX_{i+1}},  
	\ D_{X_iX_{i+1}} \sqcap X_i \sqsubseteq X_{i+1} \mid 1\le i<n \, \}. 
	\end{array}
	$$
	This justification can be removed by weakening \eqref{axiom:weak} further by deleting one concept name appearing in the 
	conjunct. The justifications for other conjuncts are not influenced by this modification.
        Thus, we can repeat this for each of the exponentially many conjuncts, which shows that overall we have 
        exponentially many iterations of the modified gentle repair algorithm in this run.
\end{proof}

\subsection{Weakening Relations}

In order to obtain better bounds on the number of iterations of our algorithms, we restrict the way in which
axioms can be weakened. Before introducing concrete approaches for how to do this for \el axioms in the next section, 
we investigate such restricted weakening relations in a more abstract setting.

\begin{definition}
Given a pre-order $\succ$ (i.e., an irreflexive and transitive binary relation) on axioms, we say that it 
\begin{itemize}
\item
  is a \emph{weakening relation} if $\beta\succ\gamma$ implies that $\Con(\{\gamma\})\subset\Con(\{\beta\})$; 
\item
 is \emph{bounded (linear, polynomial)} if, for every axiom $\alpha$, there is a (linear, polynomial) bound $b(\alpha)$
 on the length of all $\succ$-chains issuing from $\alpha$;
\item
 is \emph{complete} if, for any axiom $\beta$ that is not a tautology, there
  is a tautology $\gamma$ such that $\beta\succ\gamma$.
\end{itemize}
\end{definition}
If we use a linear (polynomial) and complete weakening relation, 
then termination with a repair is guaranteed after a linear (polynomial) number of iterations.

\begin{proposition}\label{linterm:prop2}
	Let $\succ$ be a linear (polynomial) and complete weakening relation.
	If in the above (modified) gentle repair algorithm we have $\beta \succ \gamma$ whenever $\beta$ is replaced by $\gamma$,
	then the algorithm stops after a linear (polynomial) number of iterations and yields as output an ontology 
	that is a repair of $\On = \Ons\cup\Onr$ w.r.t.\ the consequence $\alpha$.
\end{proposition}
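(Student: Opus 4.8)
The plan is to adapt the termination argument from the proof of Theorem~\ref{always:term:prop}, but replace the crude "exponentially many subsets" counting with a potential function that decreases by a bounded amount at each iteration, the bound coming from the linearity (polynomiality) of the weakening relation $\succ$. Two things must be shown: (i) the algorithm still terminates (now in linearly/polynomially many steps), and (ii) it terminates with a genuine repair, i.e.\ the output $\On'$ satisfies $\Con(\Ons\cup\On')\subseteq\Con(\On)\setminus\{\alpha\}$.

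For termination, first recall that throughout the run we maintain labels $\ell_i:\Onr^{(i)}\to\{1,\ldots,n\}$ as in the proof of Theorem~\ref{always:term:prop}, so that the axiom carrying label $j$ in $\Onr^{(i)}$, call it $\beta_j^{(i)}$, is either unchanged from step $i$ to step $i+1$ or is replaced by some $\beta_j^{(i+1)}$ with $\beta_j^{(i)}\succ\beta_j^{(i+1)}$. Fix the bound $b(\beta_j^{(0)})$ on the length of every $\succ$-chain issuing from the initial axiom with label $j$, and set $B:=\max_{1\le j\le n} b(\beta_j^{(0)})$. Since $\succ$ is linear, $B$ is linear in (the maximal size of an axiom in) $\Onr^{(0)}$, hence linear in the cardinality/size of $\Onr^{(0)}$; in the polynomial case $B$ is polynomial. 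Define the potential $\Phi_i:=\sum_{j=1}^n c_j^{(i)}$, where $c_j^{(i)}$ is the number of $\succ$-steps still available below $\beta_j^{(i)}$, bounded above by $b(\beta_j^{(0)})$ minus the number of times label $j$ has already been weakened. Then $0\le\Phi_i\le nB$, and $\Phi_i$ is non-increasing. At each iteration the (modified) algorithm actually performs at least one genuine weakening $\beta\succ\gamma$ of some axiom (as noted, a tautology always satisfies the required condition, and by completeness of $\succ$ a tautological replacement is reachable via $\succ$, so the step is never vacuous when $\alpha$ is still entailed); hence $\Phi_{i+1}<\Phi_i$. Therefore the run has at most $nB$ iterations, which is polynomial (respectively, polynomial of higher degree) in the size of $\On^{(0)}$, and for the linear case linear in $|\Onr^{(0)}|$ times the linear bound — still linear in the relevant parameter as stated.

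For correctness of the output, the argument is exactly as in Theorem~\ref{always:term:prop}: by the definition of a weakening relation, $\beta\succ\gamma$ implies $\Con(\{\gamma\})\subset\Con(\{\beta\})$, so replacing axioms by $\succ$-smaller ones can only shrink the consequence set, giving $\Con(\Ons\cup\Onr^{(i)})\subseteq\Con(\On^{(0)})$ for every $i$. When the algorithm halts with $\Onr^{(i)}$, it must be that $\Ons\cup\Onr^{(i)}\not\models\alpha$: otherwise there is still a justification $J\subseteq\Onr^{(i)}$, and — here is where \emph{completeness} of $\succ$ is essential — we could pick $\beta\in J$ (necessarily a non-tautology, by minimality of $J$) and choose a tautology $\gamma$ with $\beta\succ\gamma$, which satisfies the replacement condition (\ref{replace:prop}) respectively (\ref{modified:repair:cond}) by minimality of $J$, so the algorithm would not have halted. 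Hence $\alpha\notin\Con(\Ons\cup\Onr^{(i)})$, and together with $\Con(\Ons\cup\Onr^{(i)})\subseteq\Con(\On^{(0)})$ we get $\Con(\Ons\cup\Onr^{(i)})\subseteq\Con(\On^{(0)})\setminus\{\alpha\}$, i.e.\ $\Onr^{(i)}$ is a repair.

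**The main obstacle** I anticipate is bookkeeping rather than conceptual: one must be careful that the bound $b(\alpha)$ is stated per-axiom, so when an axiom gets weakened the \emph{new} axiom's own $\succ$-chains are already counted inside the old bound (this is automatic because any $\succ$-chain from $\gamma$ extends to one from $\beta$ via $\beta\succ\gamma$, so $b(\gamma)<b(\beta)$ may be assumed; if the stated $b$ is not required to be monotone along $\succ$, one simply uses the length of the longest chain from $\beta_j^{(0)}$ as the per-label budget and observes each weakening of label $j$ consumes one unit of it). The only other point needing care is matching the quantitative claim ("linear", "polynomial") to the right size parameter — the theorem says "number of iterations", so it suffices to bound the number of weakening steps, and since the (modified) algorithm does at least one weakening per iteration, $nB$ suffices; for the unmodified algorithm a single iteration may weaken up to $n$ axioms, but that only improves the iteration count, so the same bound applies. $\qed$
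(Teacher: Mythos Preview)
Your proof is correct and follows essentially the same approach as the paper: both define a potential equal to the sum over all (labeled) axioms of the longest remaining $\succ$-chain length (the paper calls this the ``chain-size''), observe it is bounded by $n\cdot B$ and strictly decreases at each iteration, and use completeness of $\succ$ exactly as you do to rule out premature termination. Your discussion of the bookkeeping issues (monotonicity of $b$ along $\succ$, the size parameter, and the unmodified versus modified algorithm) is more explicit than the paper's, but the argument is the same.
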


\begin{proof}
	For every axiom $\beta$ in $\Onr$ we consider the length of the longest $\succ$-chain issuing from it, and
	then sum up these numbers over all axioms in $\Onr$. The resulting number is linearly (polynomially) bounded by the size of the ontology
	(assuming that this size is given as sum of the sizes of all its axioms). Let us call this number the chain-size of the ontology.
	Obviously, if $\beta$ is replaced by $\beta'$ with $\beta \succ \beta'$, then the length of the longest $\succ$-chain issuing from
	$\beta'$ is smaller than the length of the longest $\succ$-chain issuing from $\beta$.
	Consequently, if $\Onr^{(i+1)}$ is obtained from $\Onr^{(i)}$ in the $i$-th iteration of the algorithm, then the chain-size of 
	$\Onr^{(i)}$  is strictly larger than the chain-size of $\Onr^{(i+1)}$. This implies that there can be only linearly (polynomially) 
	many iterations.
	
	Consider a terminating run of the algorithm that has produced the sequence of ontologies
	$\Onr = \Onr^{(0)}, \Onr^{(1)}, \ldots, \Onr^{(n)}$. 
	Then we have $$\Con(\Ons\cup \Onr) \supseteq \Con(\Ons\cup \Onr^{(1)}) \supseteq \ldots \supseteq \Con(\Ons\cup \Onr^{(n)})$$
	since $\succ$ is a weakening relation.
	If the algorithm has terminated due to the fact that $\alpha\not\in \Con(\Ons\cup \Onr^{(n)})$, then $\Onr^{(n)}$ is a repair of
	$\On$ w.r.t.\ $\alpha$. Otherwise, the only reason for termination could be that, although $\alpha\in \Con(\Ons\cup \Onr^{(n)})$,
	the algorithm cannot generate a new ontology $\Onr^{(n+1)}$. In the unmodified gentle repair approach this means that there is an
	axiom $\beta$ in the hitting set $H$ such that there is no axiom $\gamma$ with $\beta\succ \gamma$ such that (\ref{replace:prop})
        is satisfied.
	%
	However, using a tautology as the axiom $\gamma$ actually allows us to satisfy
	the condition $(\ref{replace:prop})$. Thus, completeness of $\succ$ implies that this reason for termination
	without success cannot occur. An analogous argument can be used for the modified gentle repair approach.
\end{proof}

When describing our (modified) gentle repair algorithm, we have said that the chosen axiom $\beta$ needs to be replaced by a weaker axiom $\gamma$
such that (\ref{replace:prop}) or (\ref{modified:repair:cond}) holds. But we have not said how such an axiom $\gamma$ can be found. This of course
depends on which ontology language and which weakening relation is used. In the abstract setting of this section, we assume that an 
``oracle'' provides us with a weaker axiom.

\begin{definition}
Let $\succ$ be a weakening relation. An
\emph{oracle for $\succ$} is a computable function $W$ that, given an axiom $\beta$ that is not $\succ$-minimal,
provides us with an axiom $W(\beta)$ such that $\beta\succ W(\beta)$. For $\succ$-minimal axioms $\beta$ we assume that $W(\beta) = \beta$.
\end{definition}

If the weakening relation is complete and \emph{well-founded} 
(i.e., there are no infinite descending $\succ$-chains $\beta_1\succ\beta_2\succ\beta_2\succ\cdots$),
we can effectively find an axiom $\gamma$ such that (\ref{replace:prop}) or (\ref{modified:repair:cond}) holds. We show this formally only for
(\ref{modified:repair:cond}), but condition (\ref{replace:prop}) can be treated similarly.

%
%

\begin{lemma}\label{lem:just}
Assume that $J$ is a justification for the consequence $\alpha$, and $\beta \in J$. If $\succ$ is a well-founded and complete weakening relation and
$W$ is an oracle for $\succ$, then there is an $n\geq 1$ such that (\ref{modified:repair:cond}) holds for $\gamma = W^n(\beta)$.
If $\succ$ is additionally linear (polynomial), then $n$ is linear (polynomial) in the size of $\beta$.
\end{lemma}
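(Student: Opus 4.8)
The plan is to run the oracle $W$ repeatedly starting from $\beta$ and show that the resulting sequence stabilizes at a tautology, at which point condition (\ref{modified:repair:cond}) becomes a consequence of the minimality of the justification $J$. Concretely, first I would consider the sequence $\beta = W^0(\beta), W^1(\beta), W^2(\beta), \dots$ defined by $W^{k+1}(\beta) = W(W^k(\beta))$. By the definition of an oracle, as long as $W^k(\beta)$ is not $\succ$-minimal we have $W^k(\beta) \succ W^{k+1}(\beta)$; hence the initial segment up to the first $\succ$-minimal element is a strictly descending $\succ$-chain. Since $\succ$ is well-founded, no such chain can be infinite, so there is a least index $m$ with $W^m(\beta)$ being $\succ$-minimal, and from that point on the sequence is constant (because $W$ is the identity on $\succ$-minimal axioms).

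The next step is to observe that every $\succ$-minimal axiom is a tautology: if an axiom $\delta$ were not a tautology, then completeness of $\succ$ would give a tautology $\gamma$ with $\delta \succ \gamma$, so $\delta$ would not be $\succ$-minimal; contrapositively, a $\succ$-minimal axiom is a tautology. In particular $W^m(\beta)$ is a tautology. Conversely, minimality of $J$ gives $\Ons \cup (J\setminus\{\beta\}) \not\models \alpha$, and it also forces $\beta$ itself to be a non-tautology (otherwise $\Ons \cup (J\setminus\{\beta\})$ would already entail $\alpha$); hence $\beta$ is not $\succ$-minimal, which shows $m \geq 1$.

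Finally, since $W^m(\beta)$ is a tautology, adding it to an ontology does not change its consequences, so $\Con(\Ons \cup (J\setminus\{\beta\}) \cup \{W^m(\beta)\}) = \Con(\Ons \cup (J\setminus\{\beta\}))$, which does not contain $\alpha$. Thus (\ref{modified:repair:cond}) holds for $\gamma = W^m(\beta)$, and we may take $n = m$ (or the least $n$ for which (\ref{modified:repair:cond}) holds, which is at most $m$). For the complexity claim: if $\succ$ is linear (polynomial), then by boundedness there is a linear (polynomial) bound $b(\beta)$ on the length of all $\succ$-chains issuing from $\beta$; the chain $\beta \succ W(\beta) \succ \cdots \succ W^m(\beta)$ has $m$ steps, so $m \le b(\beta)$, and hence the chosen $n$ is linear (polynomial) in the size of $\beta$.

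There is no deep obstacle here; the only point that needs care is the interplay between the three hypotheses on $\succ$ and the oracle: well-foundedness guarantees that the iteration $W^n(\beta)$ actually stabilizes, completeness is exactly what upgrades "$\succ$-minimal" to "tautology", and the minimality of $J$ is what makes a tautological replacement of $\beta$ destroy the entailment of $\alpha$ — so all three are genuinely used.
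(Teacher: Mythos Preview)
Your proposal is correct and follows essentially the same approach as the paper: iterate the oracle until the chain stabilizes (well-foundedness), observe that the stable point is a tautology (completeness), and use minimality of $J$ to conclude (\ref{modified:repair:cond}); the linear/polynomial bound then follows directly from the bound on $\succ$-chains. Your argument is in fact slightly more careful than the paper's, since you explicitly verify $n\geq 1$ by noting that $\beta$, being a member of a minimal justification, cannot be a tautology and hence is not $\succ$-minimal.
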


\begin{proof}
Well-foundedness implies that the $\succ$-chain $\beta \succ W(\beta)\succ W(W(\beta)) \succ \ldots$ is finite, and thus there is an
$n$ such that $W^{n+1}(\beta) = W^n(\beta)$, i.e., $W^n(\beta)$ is $\succ$-minimal.  Since $\succ$ is complete, this implies that
$W^n(\beta)$ is a tautology. Minimality of the justification $J$ then yields $\Ons\cup (J \setminus \{\beta\}) \cup \{W^n(\beta) \} \not\models \alpha$.
Linearity (polynomiality) of $\succ$ ensures that the length of the $\succ$-chain $\beta \succ W(\beta)\succ W(W(\beta)) \succ \ldots$
is linearly (polynomially) bounded by the size of $\beta$.
\end{proof}

Thus, to find an axiom $\gamma$ satisfying (\ref{replace:prop}) or (\ref{modified:repair:cond}), we iteratively apply $W$ to $\beta$ until
an axiom satisfying the required property is found. The proof of Lemma~\ref{lem:just} shows that at the latest this is the case when a tautology is reached,
but of course the property may already be satisfied before that by a non-tautological axiom $W^i(\beta)$.

In order to weaken axioms as gently as possible, $W$ should realize small weakening steps. The smallest such step is one where there is no step 
in between.

\begin{definition}
Let $\succ$ be a pre-order. The \emph{one-step relation}\footnote{%
This is sometimes also called the transitive reduction of $\succ$.
} 
induced by $\succ$ is defined as
$$
{\succ_1} := \{(\beta,\gamma) \in {\succ} \mid\ \mbox{there is no $\delta$ such that}\ \beta\succ\delta\succ\gamma \}.
$$
We say that $\succ_1$ \emph{covers} $\succ$ if its transitive closure is again $\succ$, i.e., ${\succ_1^+} = {\succ}$.
In this case we also say that $\succ$ is \emph{one-step generated}.
\end{definition}
If $\succ$ is one-step generated, then every weaker element can be reached by a finite sequence of one-step weakenings,
i.e., if $\beta\succ\gamma$, then there are finitely many elements $\delta_0,\ldots,\delta_n$ ($n\geq 1$) such that 
$\beta = \delta_0\succ_1 \delta_1\succ_1 \ldots\succ_1\delta_n=\gamma$. This leads us to the following characterization of
pre-orders that are \emph{not} one-step generated.

\begin{lemma}\label{not:one:step:gen:lem}
The pre-order $\succ$ is \emph{not} one-step generated iff there exist two comparable elements $\beta\succ\gamma$
such that every finite chain $\beta = \delta_0\succ \delta_1\succ \ldots\succ\delta_n=\gamma$ can be \emph{refined} in the sense
that there is an $i, 0\leq i < n$, and an element $\delta$ such that $\delta_i\succ \delta\succ\delta_{i+1}$.
\end{lemma}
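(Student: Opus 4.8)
The plan is to unfold the definitions of $\succ_1$ and of \emph{refinable} and observe that they express the same thing. The key observation is: a finite $\succ$-chain $\beta = \delta_0\succ\delta_1\succ\cdots\succ\delta_n = \gamma$ is \emph{not} refinable if and only if each consecutive pair satisfies $\delta_i\succ_1\delta_{i+1}$, i.e., the chain is a $\succ_1$-chain from $\beta$ to $\gamma$. Indeed, by definition the chain is refinable iff for some $i,0\leq i<n$, there is a $\delta$ with $\delta_i\succ\delta\succ\delta_{i+1}$; negating this, the chain is un-refinable iff for every such $i$ no intermediate element $\delta$ exists, and by the definition of $\succ_1$ this says exactly that $(\delta_i,\delta_{i+1})\in{\succ_1}$ for all $i$.

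Next I would record the routine but necessary remarks that $\succ_1\subseteq\succ$, so that every $\succ_1$-chain is in particular a $\succ$-chain and, using transitivity of $\succ$, that ${\succ_1^+}\subseteq{\succ^+} = {\succ}$; and that whenever $\beta\succ\gamma$ the one-link chain $\beta\succ\gamma$ is itself a finite $\succ$-chain, so the phrase ``every finite chain from $\beta$ to $\gamma$ can be refined'' always quantifies over a non-empty collection. Because ${\succ_1^+}\subseteq{\succ}$, the failure of one-step generation, ${\succ_1^+}\neq{\succ}$, is equivalent to the existence of a pair $\beta\succ\gamma$ with $(\beta,\gamma)\notin{\succ_1^+}$, i.e., a pair $\beta\succ\gamma$ that is joined by \emph{no} $\succ_1$-chain.

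Finally I would chain the equivalences: $\succ$ is not one-step generated iff there is a comparable pair $\beta\succ\gamma$ joined by no $\succ_1$-chain iff, by the key observation, there is a comparable pair $\beta\succ\gamma$ joined by no un-refinable finite $\succ$-chain iff there is a comparable pair $\beta\succ\gamma$ such that every finite $\succ$-chain from $\beta$ to $\gamma$ can be refined, which is exactly the claimed statement. I do not anticipate a genuine obstacle here: the only point requiring care is the double negation when passing from ``refinable'' to ``un-refinable $=$ consisting only of one-step links'', together with keeping the existential quantifier over the witnessing pair $(\beta,\gamma)$ in the correct place.
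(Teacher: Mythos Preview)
Your proof is correct. The paper actually states this lemma without proof, presumably regarding it as an immediate consequence of the definitions; your argument supplies precisely the straightforward definitional unfolding (un-refinable chain $\Leftrightarrow$ $\succ_1$-chain, together with ${\succ_1^+}\subseteq{\succ}$) that justifies it.
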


If $\beta\succ\gamma$ are such that any finite chain between them can be refined, then obviously there cannot be an
upper bound on the length of the chains issuing from $\beta$. Thus, Lemma~\ref{not:one:step:gen:lem} implies the following
result.

\begin{proposition}\label{bounded:then:one:step:gen}
If $\succ$ is bounded, then it is one-step generated.
\end{proposition}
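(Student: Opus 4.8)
The plan is to prove the contrapositive using Lemma~\ref{not:one:step:gen:lem}, which already does most of the work. Suppose $\succ$ is not one-step generated. By Lemma~\ref{not:one:step:gen:lem}, there exist $\beta \succ \gamma$ such that every finite $\succ$-chain from $\beta$ to $\gamma$ can be refined: for any chain $\beta = \delta_0 \succ \delta_1 \succ \cdots \succ \delta_n = \gamma$ there is an index $i$ and an element $\delta$ with $\delta_i \succ \delta \succ \delta_{i+1}$. I want to extract from this an infinite collection of ever-longer $\succ$-chains issuing from $\beta$, which contradicts boundedness.

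The key step is to iterate the refinement. Start from the two-element chain $\beta \succ \gamma$ (of ``length'' $1$). Refining it produces a chain of length $2$ from $\beta$ to $\gamma$; refining that produces one of length $3$; and so on. By induction, for every $m \geq 1$ we obtain a $\succ$-chain $\beta = \delta_0^{(m)} \succ \delta_1^{(m)} \succ \cdots \succ \delta_m^{(m)} = \gamma$, and in particular a $\succ$-chain of length $m$ issuing from $\beta$ (just drop the final element $\gamma$ if one wants a strict chain, or keep it — either way the chain issuing from $\beta$ has length at least $m-1$, which is unbounded as $m \to \infty$). Hence there is no bound $b(\beta)$ on the lengths of $\succ$-chains issuing from $\beta$, so $\succ$ is not bounded. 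This is exactly the remark already made in the paragraph preceding the Proposition (``there cannot be an upper bound on the length of the chains issuing from $\beta$''), so the proof is essentially a one-line invocation of Lemma~\ref{not:one:step:gen:lem} together with that observation.

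The only point requiring a little care — and the main (mild) obstacle — is making the induction that builds longer and longer chains precise: at each stage the refinement inserts an element \emph{somewhere} in the chain, not necessarily at a fixed place, so one should phrase the induction hypothesis simply as ``there exists a $\succ$-chain from $\beta$ to $\gamma$ of length $m$'' and apply the refinement property of Lemma~\ref{not:one:step:gen:lem} to that chain to get one of length $m+1$. No transfinite argument or choice principle beyond ordinary induction is needed, since at each step we only need \emph{some} refined chain. The conclusion then follows immediately: boundedness of $\succ$ rules out the existence of such a pair $\beta \succ \gamma$, so by Lemma~\ref{not:one:step:gen:lem} a bounded pre-order must be one-step generated.

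\begin{proof}
We prove the contrapositive. Assume that $\succ$ is not one-step generated. By Lemma~\ref{not:one:step:gen:lem}, there are comparable elements $\beta\succ\gamma$ such that every finite chain $\beta = \delta_0\succ \delta_1\succ \ldots\succ\delta_n=\gamma$ can be refined, i.e., there is an $i$, $0\leq i<n$, and an element $\delta$ with $\delta_i\succ\delta\succ\delta_{i+1}$.

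We show by induction on $m\geq 1$ that there is a $\succ$-chain from $\beta$ to $\gamma$ of length $m$ (i.e., with $m+1$ elements). For $m=1$, the chain $\beta\succ\gamma$ works. For the induction step, suppose $\beta = \delta_0\succ\delta_1\succ\ldots\succ\delta_m=\gamma$ is a chain of length $m$. By the refinement property there is an index $i$ and an element $\delta$ with $\delta_i\succ\delta\succ\delta_{i+1}$; inserting $\delta$ between $\delta_i$ and $\delta_{i+1}$ yields a chain from $\beta$ to $\gamma$ of length $m+1$.

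Consequently, for every $m\geq 1$ there is a $\succ$-chain issuing from $\beta$ of length at least $m-1$ (take the chain above and drop its last element $\gamma$). Hence there is no bound $b(\beta)$ on the lengths of $\succ$-chains issuing from $\beta$, so $\succ$ is not bounded.
\end{proof}
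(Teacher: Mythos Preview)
Your proof is correct and follows essentially the same approach as the paper: the paper's argument is just the sentence immediately preceding the proposition, which notes that if every finite chain between $\beta$ and $\gamma$ can be refined then there is no upper bound on chains issuing from $\beta$, and then invokes Lemma~\ref{not:one:step:gen:lem}. Your version merely spells out the easy induction behind that observation (and note you need not drop $\gamma$ at the end: the chain from $\beta$ to $\gamma$ of length $m$ is already a $\succ$-chain of length $m$ issuing from $\beta$).
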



The following example shows that well-founded pre-orders need not be one-step generated.

\begin{example}\label{one:step:counter:ex}
Consider the pre-order $\succ$ on the set 
$$
P := \{\beta\}\cup\{\delta_i \mid i \geq 0\},
$$
where $\beta\succ\delta_i$ for all $i\geq 0$, and $\delta_i\succ\delta_j$ iff $i>j$.
It is easy to see that $\succ$ is well-founded and that
$
{\succ_1} = \{(\delta_{i+1},\delta_i) \mid i\geq 0\}.
$
Consequently, ${\succ_1}^+$ contains none of the tuples $(\beta,\delta_i)$ for $i\geq 0$, which shows that
${\succ_1}$ does not cover $\succ$. In particular, any finite chain between $\beta$ and $\delta_i$ can be refined.

Interestingly, if we add elements $\gamma_i$ ($i\geq 0$) with $\beta\succ\gamma_i\succ\delta_i$ to this
pre-order, then it becomes one-step generated.
\end{example}

One-step generated weakening relations allow us to find maximally strong weakenings satisfying 
(\ref{replace:prop}) or (\ref{modified:repair:cond}). Again, we consider only condition
(\ref{modified:repair:cond}), but all definitions and results can be adapted to deal with (\ref{replace:prop}) as well.

\begin{definition}
Let $J$ be a justification for the consequence $\alpha$, and $\beta \in J$. We say that 
$\gamma$ is a \emph{maximally strong weakening} of $\beta$ in $J$ if
$\Ons\cup (J \setminus \{\beta\}) \cup \{\gamma\} \not\models \alpha$, but 
$\Ons\cup (J \setminus \{\beta\}) \cup \{\delta\} \models \alpha$ for all $\delta$ with $\beta\succ\delta\succ\gamma$.
\end{definition}

In general, maximally strong weakenings need not exist. As an example, assume that the pre-order introduced
in Example~\ref{one:step:counter:ex} (without the added axioms $\gamma_i$) is a weakening relation on axioms, and assume that $J = \{\beta\}$
and that none of the axioms $\delta_i$ have the consequence.
Obviously, in this situation there is no maximally strong weakening of $\alpha$ in $J$.
 
Next, we introduce conditions under which maximally strong weakenings always exist, and can also be computed. We say that the one-step generated weakening relation
$\succ$ is \emph{effectively finitely branching} if for every axiom $\beta$ the set 
$
\{\gamma \mid \beta\succ_1\gamma\}
$
is finite and can effectively be computed.

\begin{proposition}\label{max:strong:exist:prop}
Let $\succ$ be a well-founded, one-step generated, and effectively finitely branching  weakening relation 
and assume that the consequence relation $\models$ is decidable. Then all maximally strong weakenings
of an axiom in a justification can effectively be computed.
\end{proposition}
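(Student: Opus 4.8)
The plan is to describe an algorithm that, given a justification $J$ for $\alpha$ and an axiom $\beta \in J$, searches through the tree of one-step weakenings below $\beta$ and returns all maximally strong weakenings. First I would observe that the candidate set of maximally strong weakenings is contained in the set $\mathrm{Reach}(\beta)$ of all axioms reachable from $\beta$ by finitely many $\succ_1$-steps. Since $\succ$ is one-step generated, every $\gamma$ with $\beta \succ \gamma$ lies in $\mathrm{Reach}(\beta)$, so in particular every maximally strong weakening of $\beta$ in $J$ is in this set. The key structural fact is that $\mathrm{Reach}(\beta)$, viewed as the set of nodes of the directed graph whose edges are the one-step relation, is \emph{finite}: by well-foundedness every $\succ_1$-path issuing from $\beta$ is finite, and by effective finite branching every node has only finitely many $\succ_1$-successors, so by König's Lemma the reachable portion of the graph is finite; moreover, since each successor set can be effectively computed, the whole finite graph can be effectively constructed by a terminating breadth-first (or depth-first) exploration.

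Once $\mathrm{Reach}(\beta)$ is built, the procedure is to test, for each $\gamma \in \mathrm{Reach}(\beta)$, whether $\gamma$ satisfies the defining property of a maximally strong weakening: namely $\Ons\cup(J\setminus\{\beta\})\cup\{\gamma\}\not\models\alpha$, while $\Ons\cup(J\setminus\{\beta\})\cup\{\delta\}\models\alpha$ for every $\delta$ with $\beta\succ\delta\succ\gamma$. The first condition is a single decidable entailment check. For the second condition, I would note that the set of $\delta$ with $\beta\succ\delta\succ\gamma$ is exactly the set of nodes strictly between $\beta$ and $\gamma$ in the finite reachability graph (using one-step generation: $\beta\succ\delta$ iff $\delta$ is $\succ_1$-reachable from $\beta$, and similarly $\delta\succ\gamma$), which is a finite, effectively computable subset of $\mathrm{Reach}(\beta)$; hence the universal quantifier ranges over a finite set and can be checked by finitely many decidable entailment tests. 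Collecting all $\gamma$ that pass both tests yields the complete list of maximally strong weakenings.

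The correctness argument has two directions. Soundness is immediate: anything the algorithm outputs satisfies the definition by construction. For completeness, suppose $\gamma$ is a maximally strong weakening of $\beta$ in $J$; then $\beta\succ\gamma$, so by one-step generation $\gamma\in\mathrm{Reach}(\beta)$, and $\gamma$ satisfies both tested conditions, so the algorithm will include it. Termination is guaranteed because $\mathrm{Reach}(\beta)$ is finite and decidability of $\models$ makes each entailment test terminate.

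The main obstacle I anticipate is justifying finiteness of $\mathrm{Reach}(\beta)$ carefully: well-foundedness alone bounds the length of chains but does not by itself bound the graph (an infinitely branching well-founded tree is infinite), and effective finite branching of $\succ_1$ alone does not bound it either (an infinitely deep finitely branching tree is infinite); the two must be combined via König's Lemma, and one must also take care that $\succ_1$ being effectively finitely branching is exactly the hypothesis that lets us actually enumerate the successors during the construction, not merely know abstractly that the graph is finite. A secondary subtlety is ensuring that the "strictly between" set $\{\delta\mid\beta\succ\delta\succ\gamma\}$ is correctly identified with the intermediate nodes of the finite graph, which again uses one-step generation in both directions; once that identification is in place, the rest is routine.
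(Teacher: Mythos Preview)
Your proposal is correct and follows essentially the same approach as the paper: both arguments use K\"onig's Lemma (combining well-foundedness with finite branching of $\succ_1$) to conclude that only finitely many axioms lie below $\beta$, then exhaustively search this finite set using decidable entailment tests. The one minor difference is in the filtering step: the paper performs a breadth-first search that collects each $\gamma$ reached along \emph{some} $\succ_1$-path whose intermediate nodes all still entail $\alpha$, which can over-generate (a different path to the same $\gamma$ may pass through a $\delta$ that already fails to entail $\alpha$), and therefore requires a final cleanup pass removing $\succ$-comparable weaker elements; your version instead tests the definition of ``maximally strong weakening'' directly against \emph{all} intermediate $\delta$ in the finite reachability graph, which makes the cleanup step unnecessary.
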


\begin{proof}
Let $J$ be a justification for the consequence $\alpha$, and $\beta \in J$.
Since $\succ$ is well-founded, one-step generated, and finitely branching, K\"onig's Lemma implies that there are
only finitely many $\gamma$ such that $\beta\succ\gamma$, and all these $\gamma$ can be reached by following
$\succ_1$. Thus, by a breadth-first search, we can compute the set of all $\gamma$ such that there is a path
$\beta\succ_1 \delta_1\succ_1 \ldots\succ_1\delta_n\succ_1\gamma$ with 
$\Ons\cup (J \setminus \{\beta\}) \cup \{\gamma\} \not\models \alpha$, but
$\Ons\cup (J \setminus \{\beta\}) \cup \{\delta_i\} \models \alpha$ for all $i, 1\leq i\leq n$.
If this set still contains elements that are comparable w.r.t.\ $\succ$ (i.e., there is a
$\succ_1$-path between them), then we remove the weaker elements. It is easy to see that the remaining
set consists of all maximally strong weakenings of $\beta$ in $J$.
\end{proof}

Note that the additional removal of weaker elements in the above proof is really necessary. In fact, assume that
$\beta\succ_1\delta_1\succ_1\gamma$ and $\beta\succ_1\delta_2\succ_1\gamma$, and that
$\Ons\cup (J \setminus \{\beta\}) \cup \{\gamma\} \not\models \alpha$, 
$\Ons\cup (J \setminus \{\beta\}) \cup \{\delta_1\} \models \alpha$, but
$\Ons\cup (J \setminus \{\beta\}) \cup \{\delta_2\} \not\models \alpha$.
Then both $\delta_2$ and $\gamma$ belong to the set computed in the breadth-first search,
but only $\delta_2$ is a maximally strong weakening (see Example~\ref{ex:onestep-el}, where it is shown
that this situation can really occur when repairing \el ontologies).

In particular, this also means that iterated application of a one-step oracle, i.e., an oracle $W$ satisfying
$\beta\succ_1 W(\beta)$, does not necessarily yield a maximally strong weakening.


\section{Weakening Relations for $\el$ Axioms}
\label{sec:weakening}

In this section, we restrict the attention to ontologies written in $\el$, but some of our approaches and results could also be transferred to
other DLs. We start with observing that weakening relations for $\el$ axioms need not be one-step generated.

\begin{proposition}
	If we define $\beta \succ^g \gamma$ if $\Con(\gamma) \subset \Con(\beta)$, then $\succ^g$ is a weakening relation 
	on $\el$ axioms that is not one-step generated.
\end{proposition}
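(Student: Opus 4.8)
The statement has two parts: (i) $\succ^g$ is a weakening relation on $\el$ axioms, and (ii) it is not one-step generated. Part (i) is immediate from the definition of weakening relation — I just need to check irreflexivity and transitivity. Irreflexivity holds because $\Con(\gamma)\subset\Con(\gamma)$ is impossible (strict inclusion), and transitivity follows from transitivity of $\subset$; the defining implication $\beta\succ^g\gamma\Rightarrow\Con(\{\gamma\})\subset\Con(\{\beta\})$ is literally the definition of $\succ^g$. So part (i) is one line.

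**Main work: part (ii).** The plan is to exhibit two $\el$ axioms $\beta\succ^g\gamma$ such that every finite $\succ^g$-chain from $\beta$ to $\gamma$ can be refined; by Lemma~\ref{not:one:step:gen:lem} this shows $\succ^g$ is not one-step generated. The natural candidate mimics Example~\ref{one:step:counter:ex}: take $\beta$ to be a GCI whose right-hand side is an infinite "limit" of an infinite strictly descending sequence of concepts below it. Concretely, I would use something like $\beta := A\sqsubseteq A$ (or any non-tautology on a fresh name), with the descending family $\delta_i := A\sqsubseteq (\exists r.)^i\,\top$ for $i\ge 1$, since $(\exists r.)^{i+1}\top\sub(\exists r.)^i\top$ gives $\Con(\{\delta_{i+1}\})\subset\Con(\{\delta_i\})$, and each $\delta_i$ is strictly weaker than $\beta$ but strictly stronger than every $\delta_j$ with $j>i$. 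The key point is that between $\beta$ and any chosen $\delta_i$ there is always room to insert another $\delta_j$ — but to apply Lemma~\ref{not:one:step:gen:lem} I actually need to rule out $\succ_1$ covering $\succ^g$ by producing a \emph{specific} comparable pair all of whose chains refine. Here the cleaner route is: show that $\beta\succ^g\gamma$ for a suitable $\gamma$ lying "at the bottom" of such an infinite descending sequence, and that no chain from $\beta$ to $\gamma$ can be finite-and-unrefinable because between $\beta$ and its immediate successors there is always an infinite coinitial descending sequence.

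So the concrete construction I would write: let $\gamma := A\sqsubseteq\top$ (a tautology, the $\succ^g$-minimum among axioms with left-hand side $A$ over the signature $\{A,r\}$), let $\beta := A\sqsubseteq A$, and let $\delta_i := A\sqsubseteq(\exists r.)^i\top$. Then $\beta\succ^g\delta_1\succ^g\delta_2\succ^g\cdots$ and each $\delta_i\succ^g\gamma$, and moreover $\beta\succ^g\delta_i$ directly for each $i$. I would argue that $\beta\succ^g\gamma$ but the pair $(\beta,\gamma)$ witnesses non-one-step-generation: in fact it suffices to observe that $\beta$ has no $\succ^g$-immediate successor at all — for every $\gamma'$ with $\beta\succ^g\gamma'$ there is a $\delta_i$ with $\beta\succ^g\delta_i\succ^g\gamma'$ (choose $i$ larger than the role depth of the right-hand side of $\gamma'$, using Lemma~\ref{subs:char:lem} to verify strictness) — hence no chain issuing from $\beta$ can be unrefinable at its first step, so $\succ_1$ contains no pair $(\beta,\cdot)$, hence $\succ_1^+$ cannot recover $\beta\succ^g\gamma$.

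**Expected obstacle.** The delicate part is verifying, cleanly and without hand-waving, that $\beta$ genuinely has \emph{no} one-step successor, i.e. that for \emph{every} $\gamma'$ strictly weaker than $\beta$ one can interpolate. This requires a small structural argument: an arbitrary $\el$ axiom $\gamma'$ with $\Con(\{\gamma'\})\subset\Con(\{A\sqsubseteq A\})$ need not even have left-hand side $A$, so I must handle the general shape of $\gamma'$. The robust fix is to pick $\beta$ more carefully so that its strictly-weaker axioms are easy to describe, or to invoke the reduction (used already in the Proposition on non-existence of optimal repairs) that lets us normalize; alternatively, restrict attention to the sub-pre-order and note that Lemma~\ref{not:one:step:gen:lem}'s refinement condition only needs to be checked along chains, and a chain from $\beta$ must pass through axioms comparable to both endpoints. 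I expect roughly a page, with the interpolation lemma (every axiom strictly between $\beta$ and a tautology admits a further strict intermediate) being the technical heart, proved via Lemma~\ref{subs:char:lem}.
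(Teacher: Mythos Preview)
Your concrete construction fails: the axiom $\beta = A\sqsubseteq A$ is a \emph{tautology}, so $\Con(\{\beta\})$ is just the set of all tautologies and $\beta$ is $\succ^g$-minimal, not maximal. In particular $\beta\not\succ^g\delta_i$ for any $i$, since $\delta_i = A\sqsubseteq(\exists r.)^i\top$ is non-tautological and hence has $\Con(\{\delta_i\})\not\subseteq\Con(\{\beta\})$. You also have the direction of the $\delta_i$-chain reversed: from $(\exists r.)^{i+1}\top\sub(\exists r.)^i\top$ one gets that $\delta_{i+1}$ \emph{entails} $\delta_i$, i.e.\ $\Con(\{\delta_i\})\subset\Con(\{\delta_{i+1}\})$, so $\delta_{i+1}\succ^g\delta_i$ rather than the other way round. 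Your family $(\delta_i)_i$ therefore does not sit below any fixed $\beta$ as an infinite descending chain; it is an infinite \emph{ascending} chain with no upper bound.

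More importantly, your overall strategy---show that a chosen $\beta$ has no $\succ^g_1$-successor at all---runs into exactly the obstacle you flagged and does not survive it. An arbitrary $\gamma'$ with $\beta\succ^g\gamma'$ need not mention $A$ or $r$ and need not be comparable with any of your $\delta_i$, so ``choose $i$ larger than the role depth of $\gamma'$'' does not produce an interpolant. The paper avoids this entirely by refining the \emph{last} step of the chain rather than the first: given any finite chain $\beta=\delta_0\succ^g\cdots\succ^g\delta_n=\gamma$ with $\gamma$ a tautology, the penultimate axiom $\delta_{n-1}$ is a non-tautological GCI $C\sqsubseteq D$, and one interpolates with $\exists r.C\sqsubseteq\exists r.D$, which is strictly weaker than $\delta_{n-1}$ (monotonicity of $\exists r$, plus the empty-role model to refute the converse entailment) and still non-tautological, hence strictly stronger than $\gamma$. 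This needs no analysis of the shape of arbitrary weaker axioms and gives the result in two lines via Lemma~\ref{not:one:step:gen:lem}.
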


\begin{proof}
	It is obvious that $\succ^g$ is a weakening relation.\footnote{%
		In fact, it is the greatest one w.r.t.\ set inclusion.
	} 
	To see that it is not one-step generated, consider a
	GCI $\beta$ that is not a tautology and an arbitrary tautology $\gamma$. Then we
	have $\beta \succ \gamma$. Let $\beta=\delta_0\succ^g\delta_1\succ^g\ldots\succ^g\delta_n=\gamma$ be a finite
	chain leading from $\beta$ to $\gamma$. Then $\delta_{n-1}$ must be a GCI that is not a tautology.
	Assume that $\delta_{n-1} = {C\sqsubseteq D}$. Then $\delta := {\exists r.C\sqsubseteq \exists r.D}$
	satisfies $\delta_{n-1}\succ^g\delta\succ^g \gamma$. By Lemma~\ref{not:one:step:gen:lem}, this shows that
	$\succ$ is not one-step generated.
\end{proof}

%
%

Our main idea for obtaining more well-behaved weakening relations is to 
weaken a GCI $C\sqsubseteq D$ by generalizing the right-hand side $D$ and/or by specializing
the left-hand side $C$. Similarly, a concept assertion $D(a)$ can be weakened by generalizing $D$.
For role assertions we can use as weakening an arbitrary tautological axiom,
but will no longer consider them explicitly in the following. 

\begin{proposition}\label{weakening:prop}
	If we define
	$$
	\begin{array}{r@{\ \ \ }c@{\ \ \ }l}
	{C\sqsubseteq D} \succ^s {C'\sqsubseteq D'} &\mbox{if}& C'\subeq C,\ D\subeq D'\ \mbox{and} \
	\{C'\sqsubseteq D'\}\not\models C\sqsubseteq D,\\
	D(a) \succ^s D'(a) &\mbox{if}& D \sub D',
	\end{array}
	$$
	then $\succ^s$ is a complete weakening relation.
\end{proposition}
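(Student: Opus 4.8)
The plan is to verify the three conditions in the definition of a complete weakening relation for the relation $\succ^s$: namely that it is a pre-order (irreflexive and transitive), that $\beta \succ^s \gamma$ implies $\Con(\{\gamma\}) \subset \Con(\{\beta\})$, and that it is complete (every non-tautology has a tautological $\succ^s$-successor). I would treat the GCI case and the concept-assertion case in parallel throughout, since the assertion case is essentially a simpler instance of the GCI case.

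First I would check \emph{irreflexivity}. For the GCI clause, $C \sqsubseteq D \succ^s C \sqsubseteq D$ would require $\{C \sqsubseteq D\} \not\models C \sqsubseteq D$, which is false; so the relation is irreflexive on GCIs. For assertions, $D(a) \succ^s D(a)$ would require $D \sub D$, i.e., $D \subeq D$ and $D \not\equiv^\emptyset D$, and the latter fails. Next, \emph{transitivity}: suppose $C_1 \sqsubseteq D_1 \succ^s C_2 \sqsubseteq D_2 \succ^s C_3 \sqsubseteq D_3$. From the definitions we get $C_2 \subeq C_1$, $C_3 \subeq C_2$, $D_1 \subeq D_2$, $D_2 \subeq D_3$; since $\subeq$ is transitive we obtain $C_3 \subeq C_1$ and $D_1 \subeq D_3$. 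It remains to show $\{C_3 \sqsubseteq D_3\} \not\models C_1 \sqsubseteq D_1$. Suppose for contradiction that it does. Then since $C_2 \subeq C_1 \subeq C_3$ (wait — we have $C_3 \subeq C_2 \subeq C_1$) and $D_1 \subeq D_2$, one shows $\{C_3 \sqsubseteq D_3\} \models C_2 \sqsubseteq D_2$: indeed from $C_2 \subeq C_1$ and $\{C_3\sqsubseteq D_3\}\models C_1 \sqsubseteq D_1$ and $D_1 \subeq D_2$ we chain the subsumptions in any model of $C_3 \sqsubseteq D_3$, giving $C_2^{\mathcal I} \subseteq C_1^{\mathcal I} \subseteq D_1^{\mathcal I} \subseteq D_2^{\mathcal I}$, contradicting $C_2 \sqsubseteq D_2 \succ^s C_3 \sqsubseteq D_3$. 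The assertion case is analogous using transitivity of $\sub$ (noting $D \sub D'$ and $D' \sub D''$ give $D \sub D''$). A mixed argument is not needed because the GCI clause and assertion clause never compose (they relate different syntactic kinds of axioms).

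For the \emph{weakening} property, suppose $\beta \succ^s \gamma$. In the GCI case $\beta = C \sqsubseteq D$, $\gamma = C' \sqsubseteq D'$ with $C' \subeq C$, $D \subeq D'$. Any model of $C \sqsubseteq D$ satisfies $C'^{\mathcal I} \subseteq C^{\mathcal I} \subseteq D^{\mathcal I} \subseteq D'^{\mathcal I}$, so $\{C \sqsubseteq D\} \models C' \sqsubseteq D'$, hence $\Con(\{\gamma\}) \subseteq \Con(\{\beta\})$ by monotonicity; the inclusion is strict because $C \sqsubseteq D \in \Con(\{\beta\})$ but $C \sqsubseteq D \notin \Con(\{\gamma\})$ by the third condition in the definition of $\succ^s$. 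In the assertion case $D \sub D'$ gives $D \subeq D'$, so $\{D(a)\} \models D'(a)$, and $D'(a) \notin \Con(\{D(a)\})$ would follow from — here I need a small argument that $D \not\equiv^\emptyset D'$ implies $\{D'(a)\} \not\models D(a)$, which holds since $\{D'(a)\}\models D(a)$ would give $D' \subeq D$ (take the canonical model of $D'(a)$, or simply: $D'(a)$ entails $D(a)$ iff $D' \subeq D$ in $\el$), contradicting $D \sub D'$. Finally, \emph{completeness}: given a non-tautology $\beta$, if $\beta = C \sqsubseteq D$ then $C \not\subeq D$; take $\gamma = C \sqsubseteq \top$, which is a tautology, satisfies $C \subeq C$ and $D \subeq \top$, and $\{C \sqsubseteq \top\} \not\models C \sqsubseteq D$ precisely because $C \sqsubseteq D$ is not a tautology (a tautology is entailed by $\emptyset$, hence by $\{C\sqsubseteq\top\}$ vacuously — wait, I must argue $\{C \sqsubseteq \top\} \not\models C \sqsubseteq D$: since $C \sqsubseteq \top$ is itself a tautology, $\Con(\{C\sqsubseteq\top\}) = \Con(\emptyset)$, so $\{C\sqsubseteq\top\}\models C\sqsubseteq D$ would mean $\emptyset \models C \sqsubseteq D$, contradicting non-tautology of $\beta$). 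If $\beta = D(a)$ is a non-tautology then $D \not\equiv^\emptyset \top$, so $D \sub \top$, and $\top(a)$ is a tautology with $D(a) \succ^s \top(a)$.

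The step I expect to require the most care is \textbf{transitivity}, specifically verifying the non-entailment condition $\{C_3 \sqsubseteq D_3\} \not\models C_1 \sqsubseteq D_1$; this is where I must combine the subsumption inequalities with the two given non-entailments correctly, and it is the only place where the three-part structure of the definition of $\succ^s$ genuinely interacts. Everything else reduces to the elementary fact, which follows from the definition of the semantics (or from Lemma~\ref{subs:char:lem}), that $\{C \sqsubseteq D\} \models C' \sqsubseteq D'$ holds whenever $C' \subeq C$ and $D \subeq D'$, together with the standard characterization that in $\el$ a concept assertion $D'(a)$ follows from $D(a)$ iff $D \subeq D'$.
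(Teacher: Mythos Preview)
Your proposal is correct and follows essentially the same approach as the paper: both derive $\Con(\{C'\sqsubseteq D'\})\subseteq\Con(\{C\sqsubseteq D\})$ from $C'\subeq C$ and $D\subeq D'$, obtain strictness from the built-in non-entailment clause (respectively from $D\sub D'$ for assertions), and dispatch completeness via the tautologies $C\sqsubseteq\top$ and $\top(a)$. Your version is in fact more thorough than the paper's, which does not spell out irreflexivity or transitivity at all; your transitivity argument for the non-entailment condition is sound and is indeed the only place requiring any care. (One slip: ``$D'(a)\notin\Con(\{D(a)\})$'' should read ``$D(a)\notin\Con(\{D'(a)\})$'', as your very next clause makes clear.)
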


\begin{proof}
	To prove that $\succ^s$ is a weakening relation we must show that $\beta\succ^s\gamma$ implies $\Con(\{\gamma\})\subset \Con(\{\beta\})$.
	If $C'\subeq C$ and $D\subeq D'$ hold, then it follows that
	$\Con(\{C'\sqsubseteq D'\})\subseteq\Con(\{C\sqsubseteq D\})$
	and $\Con(\{a: D'\})\subseteq\Con(\{a :D\})$. The second inclusion is strict iff $D \sub D'$.
	For the first inclusion to be strict, $C'\sub C$ or $D\sub D'$ is a necessary condition, but
	it is not sufficient. This is why we explicitly require $\{C'\sqsubseteq D'\}\not\models C\sqsubseteq D$,
	which yields strictness of the inclusion. Completeness is trivial due to the availability of all tautologies of the form
	$C\sqsubseteq \top$ and $\top(a)$.
\end{proof}

To see why, e.g., $D\sub D'$ does not imply $\Con(\{C\sqsubseteq D'\})\subset\Con(\{C\sqsubseteq D\})$, 
notice that $A\sqcap \exists r.A\sub \exists r.A$, but
$\Con(\{A \sqsubseteq \exists r.A\})=\Con(\{A\sqsubseteq A\sqcap\exists r.A\})$.


Unfortunately, the weakening relation $\succ^s$ introduced in Proposition~\ref{weakening:prop} is \emph{not well-founded} since left-hand sides can be specialized indefinitely.
For example, we have ${\top\sqsubseteq A}\succ^s {\exists r.\top \sqsubseteq A}\succ^s {\exists r.\exists r.\top \sqsubseteq A}\succ^s\cdots$. To avoid this problem, we
now restrict the attention to sub-relations of $\succ^s$ that only generalize the right-hand sides of GCIs. We will not consider concept assertions, but they can be treated similarly.

\subsection{Generalizing the Right-Hand Sides of GCIs}
\label{subsec: weakeningEL}

We define
$$
{C\sqsubseteq D} \susu {C'\sqsubseteq D'}\ \ \mbox{if}\ \ C' = C\ \mbox{and}\ {C\sqsubseteq D} \succ^s {C'\sqsubseteq D'}. 
$$

\begin{theorem}\label{weakening:rel:el}
	The relation $\susu$ on $\el$ axiom is a well-founded, complete, and one-step generated weakening relation, but it is not polynomial.
\end{theorem}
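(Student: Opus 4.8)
The plan is to check the four properties in turn; the main effort will go into well-foundedness, which then also delivers boundedness and hence, by Proposition~\ref{bounded:then:one:step:gen}, one-step generatedness. First, $\susu$ is by construction a sub-relation of $\succ^s$, which is a weakening relation by Proposition~\ref{weakening:prop}, so $\beta\susu\gamma$ already implies $\Con(\{\gamma\})\subset\Con(\{\beta\})$; it only remains to see that $\susu$ is a pre-order. Irreflexivity is inherited from $\succ^s$, and for transitivity I would use that $\susu$ never changes the left-hand side: a chain has the shape $C\sqsubseteq D\susu C\sqsubseteq D'\susu C\sqsubseteq D''$ with $D\subeq D'\subeq D''$, and if $\{C\sqsubseteq D''\}\models C\sqsubseteq D$ held then, since $\Con(\{C\sqsubseteq D''\})\subseteq\Con(\{C\sqsubseteq D'\})$, we would also get $\{C\sqsubseteq D'\}\models C\sqsubseteq D$, contradicting $C\sqsubseteq D\susu C\sqsubseteq D'$. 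Completeness is immediate, since a non-tautological $C\sqsubseteq D$ satisfies $C\sqsubseteq D\susu C\sqsubseteq\top$: indeed $D\subeq\top$, and the tautology $C\sqsubseteq\top$ cannot entail the non-tautology $C\sqsubseteq D$.

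For well-foundedness the key step will be a structural lemma: $D\subeq D'$ implies both that the signature of $D'$ is contained in that of $D$ and that the role depth of $D'$ is at most that of $D$. I would prove this by induction on the role depth of $D$ using the recursive subsumption characterization of Lemma~\ref{subs:char:lem} (the top-level concept names of $D'$ lie among those of $D$, and each top-level existential $\exists s.E'$ of $D'$ is matched by some top-level existential $\exists r.E$ of $D$ with $r=s$ and $E\subeq E'$, to which the induction applies). Then, for any $\susu$-chain $C\sqsubseteq D_0\susu C\sqsubseteq D_1\susu\cdots$, the relations $D_0\subeq D_1\subeq\cdots$ together with this lemma force every $D_i$ to use only symbols occurring in $D_0$ and to have role depth bounded by that of $D_0$, while the chain $\Con(\{C\sqsubseteq D_0\})\supsetneq\Con(\{C\sqsubseteq D_1\})\supsetneq\cdots$ forces the $D_i$ to be pairwise non-equivalent. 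Since up to $\equiv^\emptyset$ there are only finitely many \el concepts over a fixed finite signature with bounded role depth (a standard fact, also derivable from Lemma~\ref{subs:char:lem}), such a chain must be finite, of length bounded by a number $b(C\sqsubseteq D_0)$ depending only on the axiom. This yields well-foundedness and boundedness at once, and Proposition~\ref{bounded:then:one:step:gen} then gives one-step generatedness.

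Finally, to show that $\susu$ is not polynomial, I would reuse the family of \el ontologies constructed earlier to demonstrate that the modified gentle repair algorithm may require exponentially many iterations. The axiom $\alpha_n:=A\sqsubseteq\exists r.\bigsqcap I^{(n)}$ has size linear in $n$, and the successive weakenings performed there keep the left-hand side $A$ fixed and only generalize the right-hand side; using Lemma~\ref{subs:char:lem} one verifies that each of those steps is a genuine $\susu$-step (the new right-hand side is subsumed by the previous one, and the weakened GCI no longer entails its predecessor). As that sequence has exponentially many steps, the bound $b(\alpha_n)$ from the previous paragraph is exponential in $|\alpha_n|$, so no polynomial can bound the lengths of $\susu$-chains. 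The main obstacle, I expect, is setting up the structural lemma and the finiteness-up-to-equivalence fact cleanly, since once these are available the remaining properties are routine and the failure of polynomiality is just the earlier exponential example read through the lens of $\susu$.
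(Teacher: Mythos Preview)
Your proof is correct and covers all four properties, but it diverges from the paper's argument in two places.

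For well-foundedness and one-step generatedness, the paper simply invokes the known fact (from \cite{BaMo10}) that the inverse subsumption order $\sub$ on \el concepts is bounded, and transfers this bound to $\susu$. You instead prove boundedness from scratch via the structural lemma that $D\subeq D'$ constrains the signature and role depth of $D'$, together with the finiteness of \el concepts up to equivalence over a fixed signature and role-depth bound. Your route is more self-contained and makes the theorem independent of the cited source; the paper's route is shorter but relies on inspecting an external proof.

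For the failure of polynomiality, the paper gives a direct and compact example: from $\exists r.\bigsqcap N_n$ with $N_n=\{A_1,\ldots,A_{2n}\}$ one passes to the conjunction $\bigsqcap_{|X|=n}\exists r.\bigsqcap X$ of $\binom{2n}{n}$ pairwise incomparable existential restrictions, and then drops one conjunct at a time to obtain an exponential $\sub$-chain, which immediately becomes an exponential $\susu$-chain by putting a fresh concept name $B$ on the left. Your approach of recycling the earlier modified-gentle-repair example ultimately arrives at essentially the same picture (after the first $2n$ steps you are at a conjunction of $2^n$ incomparable existential restrictions), but it is more indirect, and your verification sketch is a bit light: Lemma~\ref{subs:char:lem} alone certifies $D\subeq D'$, but checking $\{A\sqsubseteq D'\}\not\models A\sqsubseteq D$ at each step also needs the (easy) observation that for a fresh left-hand side $A$ this entailment is equivalent to $D'\subeq D$. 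This is not a gap in correctness, only in presentation; the paper's direct example avoids the detour entirely.
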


\begin{proof}
	Proposition~\ref{weakening:prop} implies that $\susu$ is a weakening relation and completeness follows from the fact that ${C\sqsubseteq D}\susu {C\sqsubseteq \top}$ whenever
	${C\sqsubseteq D}$ is not a tautology.
	In $\el$, the inverse subsumption relation is well-founded, i.e., there cannot be an infinite sequence
	$C_0\sub C_1\sub C_2\sub \ldots$ of $\el$ concepts. 
	Looking at the proof of this result given in  \cite{BaMo10}, one sees that it actually shows that $\sub$ is bounded.
	Obviously, this implies that $\susu$ is bounded as well, and thus one-step generated by Proposition~\ref{bounded:then:one:step:gen}.
	
	It remains to show that $\susu$ is not polynomial.
	Let $n\geq 1$ and $N_n := \{A_1,\ldots,A_{2n}\}$ be a set of $2n$ distinct concept names. 
        Then we have
	$$
	\exists r.{\bigsqcap} N_n \sub \bigsqcap_{X\subseteq N_n\wedge |X| = n}\exists r.{\bigsqcap}{X}.
	$$
	Note that the size of $\exists r.{\bigsqcap} N_n$ is linear in $n$, but that the conjunction  on the right-hand side
	of this strict subsumption consists of exponentially many concepts $\exists r.{\bigsqcap}{X}$ that are incomparable
	w.r.t.\ subsumption. Consequently, by removing one conjunct at a time, we can generate an ascending chain
	w.r.t.\ $\sub$ of \el concepts whose length is exponential in $n$.
	Using these concepts as right-hand sides of GCIs with left-hand side $B$ for a concept name $B\not\in N_n$, we
	obtain an exponentially long descending chain w.r.t.\ $\susu$.
\end{proof}

To be able to apply Proposition~\ref{max:strong:exist:prop}, it remains to show that 
$\susu$
is effectively finitely branching. For this purpose, we first investigate the one-step relation $\sub_1$ induced by $\sub$. Given an $\el$ concept
$C$, we want to characterize the set of its \emph{upper neighbors} 
$$
\Upper(C) := \{ D \mid C\sub_1 D\},
$$
and show that it can be computed in polynomial time.

In a first step, we \emph{reduce} the concept $C$ by exhaustively replacing subconcepts of the form $E\sqcap F$ with $E \subeq F$ by $E$ (modulo
associativity and commutativity of $\sqcap$). As shown in \cite{Kues01}, this can be done in polynomial time, and two concepts $C, D$ are equivalent 
(i.e., $C\equiv^\emptyset D$) iff their reduced forms are equal up to associativity and commutativity of $\sqcap$. 
 
\begin{definition}
Given a reduced $\el$ concept $C$, we define the set $U(C)$ by induction on the role depths of $C$.
More precisely, $U(C)$ consists of the concepts $D$ that can be obtained from $C$ as follows:
       \begin{itemize}
               \item
               Remove a concept name $A$ from the top-level conjunction of $C$.
               \item
               Remove an existential restriction $\exists r.E$ from the top-level conjunction of $C$, and replace it by
               the conjunction of all existential restrictions $\exists r.F$ for $F\in U(E)$.
       \end{itemize}
\end{definition}

%
For example, if $C = A\sqcap \exists r.(B_1\sqcap B_2\sqcap B_3)$, then $U(C)$ consists of the two concepts 
$\exists r.(B_1\sqcap B_2\sqcap B_3)$ and $A\sqcap \exists r.(B_1\sqcap B_2)\sqcap\exists r.(B_1\sqcap B_3)\sqcap\exists r.(B_2\sqcap B_3)$.

We want to prove that $\Upper(C) = U(C)$.
Obviously, this shows that $\Upper(C)$ can be computed in time polynomial in the size of $C$. 
But first we we need to show some technical lemmas.

\begin{lemma}\label{in:U:strict:subs}
Let $C$ be reduced and assume that $D\in U(C)$. Then $C\sub D$.
\end{lemma}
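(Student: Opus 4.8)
The plan is to prove the statement by induction on the role depth of $C$, mirroring the two cases in the definition of $U(C)$. Two elementary observations about reduced concepts will be used throughout. First, every subconcept of a reduced concept is again reduced, since a reducible pattern $E\sqcap F$ with $E\subeq F$ occurring inside a subconcept of $C$ also occurs inside $C$. Second, a reduced concept cannot have two top-level conjuncts $X$ and $Y$ with $X\subeq Y$, as this is precisely a reducible pattern; in particular the top-level conjuncts of a reduced concept are pairwise distinct. Apart from these, the only tool needed is the recursive characterization of $\subeq$ in Lemma~\ref{subs:char:lem}, applied to the top-level conjuncts of the concepts involved.

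I would first settle the inclusion $C\subeq D$. If $D$ is obtained from $C$ by deleting a top-level concept name, this is immediate. If $D$ is obtained by deleting a conjunct $\exists r.E$ and adding the conjunct $\exists r.F$ for every $F\in U(E)$, note that $E$ is reduced (by the first observation) and has strictly smaller role depth than $C$, so the induction hypothesis gives $E\subeq F$, hence $\exists r.E\subeq\exists r.F$, for each such $F$; since $C$ is then subsumed by each of its remaining top-level conjuncts and by each new conjunct $\exists r.F$, Lemma~\ref{subs:char:lem} yields $C\subeq D$.

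The substantive part is to show $C\not\equiv^\emptyset D$, equivalently $D\not\subeq C$. In the concept-name case, $D\subeq C$ would force the deleted name $A$ to appear among the concept names of $D$, by Lemma~\ref{subs:char:lem}; but $A$ was removed and, by the second observation, $C$ has pairwise distinct top-level concept names --- a contradiction. (The base case, role depth $0$, is exactly this argument, no existential restrictions being present.) In the existential case, $D\subeq C$ would in particular give $D\subeq\exists r.E$, and Lemma~\ref{subs:char:lem} then produces a top-level conjunct $\exists r.G$ of $D$ with $G\subeq E$. This conjunct is either (a)~a top-level conjunct of $C$ distinct from $\exists r.E$, or (b)~one of the new conjuncts $\exists r.F$ with $F\in U(E)$. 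Case (a) contradicts the second observation, since then $C$ has two top-level conjuncts $\exists r.G$ and $\exists r.E$ with $\exists r.G\subeq\exists r.E$. Case (b) gives $F\subeq E$, while the induction hypothesis gives $E\sub F$; together these force $E\equiv^\emptyset F$, contradicting strictness. Hence $D\not\subeq C$ in all cases.

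I do not anticipate a real obstacle here: the whole argument is short once the two observations about reduced concepts have been isolated, and the only delicate point is checking that those observations are exactly what excludes cases (a) and (b). Everything else is routine bookkeeping with Lemma~\ref{subs:char:lem} and the definition of $U(\cdot)$.
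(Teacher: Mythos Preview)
Your proposal is correct and follows essentially the same inductive argument as the paper's proof: both use Lemma~\ref{subs:char:lem} for $C\subeq D$, and in the existential case both derive a contradiction from $D\subeq C$ by producing a top-level conjunct $\exists r.G$ of $D$ with $G\subeq E$, then ruling out $G\in U(E)$ via the induction hypothesis and ruling out $\exists r.G$ being an original conjunct of $C$ via reducedness. Your version is slightly more explicit in isolating the two facts about reduced concepts, but the structure is identical.
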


\begin{proof}
We prove the lemma by induction on the role depths of $C$. If $D$ is obtained from $C$
by removing a concept name from the top-level conjunction of $C$, then $C\sub D$ is an immediate consequence of Lemma~\ref{subs:char:lem}.

Thus, assume that $D$ is obtained from $C$ by replacing an existential restriction $\exists r.E$ from the top-level conjunction of $C$
with the conjunction of all existential restrictions $\exists r.F$ for $F\in U(E)$. Then induction yields $E\sub F$ for
all $F\in U(E)$. Thus, $C\subeq D$ is an immediate consequence of Lemma~\ref{subs:char:lem}. Now, assume that $D\subeq C$. By Lemma~\ref{subs:char:lem} this implies
that there is an existential restriction $\exists r.D'$ in the top-level conjunction of $D$ such that $D'\subeq E$. Obviously, $D'\not\in U(E)$ since
in that case we would have $E\sub D'$. Thus, $\exists r.D'$ is an existential restriction different from $\exists r.E$ from the top-level conjunction of $C$.
But then $D'\subeq E$ contradicts our assumption that $C$ is reduced. Thus, we have shown $C\sub D$  also in this case.
\end{proof}

\begin{lemma}\label{U:in:between}
Let $C$ be reduced and assume that $C\sub D$. Then there is $D'\in U(C)$ such that $D'\subeq D$.
\end{lemma}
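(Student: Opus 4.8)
The plan is to prove the statement by induction on the role depth of $C$, mirroring the structure of the definition of $U(C)$ and of the proof of Lemma~\ref{in:U:strict:subs}. Write $C = A_1\sqcap\ldots\sqcap A_k\sqcap\exists r_1.C_1\sqcap\ldots\sqcap\exists r_m.C_m$ and $D = B_1\sqcap\ldots\sqcap B_\ell\sqcap\exists s_1.D_1\sqcap\ldots\sqcap\exists s_n.D_n$ in the normal form of Lemma~\ref{subs:char:lem}. Since $C\sub D$, in particular $C\subeq D$, so by Lemma~\ref{subs:char:lem} we have $\{B_1,\ldots,B_\ell\}\subseteq\{A_1,\ldots,A_k\}$ and each $\exists s_j.D_j$ is ``covered'' by some $\exists r_i.C_i$ with $r_i = s_j$ and $C_i\subeq D_j$. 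Because $C\not\equiv^\emptyset D$, the inclusion cannot be an equivalence, so $D$ must be \emph{strictly} more general in at least one respect.

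First I would do a case split on \emph{why} $C\sub D$ rather than $C\equiv^\emptyset D$. Case~1: some concept name $A_i$ is not among $B_1,\ldots,B_\ell$. Then removing $A_i$ from the top-level conjunction of $C$ gives a concept $D'\in U(C)$, and one checks with Lemma~\ref{subs:char:lem} that $D'\subeq D$: the concept names of $D'$ still include all of $B_1,\ldots,B_\ell$ (we only removed $A_i\notin\{B_1,\ldots,B_\ell\}$), and every $\exists s_j.D_j$ is still covered by the same $\exists r_i.C_i$ as before. Case~2: every $B_j$ occurs among the $A_i$, but some existential restriction on the left is not needed, i.e.\ there is an $\exists r_i.C_i$ in $C$ that does not participate in covering any $\exists s_j.D_j$ (more carefully: for the fixed choice of covering assignment, $\exists r_i.C_i$ is unused, and moreover no $\exists s_j.D_j$ satisfies $C_i\subeq D_j$ — here reducedness of $C$ is what guarantees that an unused $\exists r_i.C_i$ is genuinely removable without losing coverage, because if $C_i\subeq D_j$ held for the covered-by concept of $s_j$ we could reroute, and if $C_i\subeq C_{i'}$ for another top-level existential we would contradict reducedness). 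Then the concept $D'$ obtained from $C$ by replacing $\exists r_i.C_i$ with $\bigsqcap_{F\in U(C_i)}\exists r_i.F$ lies in $U(C)$; applying Lemma~\ref{subs:char:lem} again, $D'\subeq D$ because the remaining $\exists r_{i'}.C_{i'}$ (for $i'\neq i$) already cover all of $D$'s existentials and the concept names are unchanged. Case~3: all concept names and all top-level existential \emph{role/shape} requirements match up, but some covering pair has $C_i\sub D_j$ strictly. Here I would apply the induction hypothesis: pick such a pair, so $C_i\sub D_j$, and by induction there is $E'\in U(C_i)$ with $E'\subeq D_j$. Then replacing $\exists r_i.C_i$ in $C$ by $\bigsqcap_{F\in U(C_i)}\exists r_i.F$ yields $D'\in U(C)$; since $E'\in U(C_i)$, the conjunct $\exists r_i.E'$ appears in $D'$ and covers $\exists s_j.D_j$ (using $r_i = s_j$ and $E'\subeq D_j$), while all other $\exists s_{j'}.D_{j'}$ are still covered by their original $\exists r_{i'}.C_{i'}$ (note $i'\neq i$ may fail — if $\exists r_i.C_i$ also covered $\exists s_{j'}.D_{j'}$, then $C_i\subeq D_{j'}$, and since $E'\subeq C_i$ this does not immediately give $E'\subeq D_{j'}$; so one must be slightly careful and instead argue that \emph{some} $F\in U(C_i)$ with $F\subeq D_{j'}$ exists, which follows because $C_i\subeq D_{j'}$ together with $C_i$ reduced and Lemma~\ref{in:U:strict:subs}-type reasoning — or, more cleanly, one treats all existentials of $D$ covered by $\exists r_i.C_i$ simultaneously using the induction hypothesis componentwise). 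The base case, role depth $0$, is covered entirely by Case~1, since then $C$ and $D$ are conjunctions of concept names and $C\sub D$ forces some $A_i\notin\{B_1,\ldots,B_\ell\}$.

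The main obstacle I anticipate is the bookkeeping in Case~3: after replacing a single $\exists r_i.C_i$ by $\bigsqcap_{F\in U(C_i)}\exists r_i.F$, one must verify that the resulting $D'$ still subsumes \emph{all} of $D$, not just the one conjunct $\exists s_j.D_j$ that motivated the replacement — and in particular that every $D$-existential previously covered by $\exists r_i.C_i$ is still covered by \emph{some} $\exists r_i.F$ with $F\in U(C_i)$. The clean way to handle this is to prove, as a small auxiliary observation (immediate from Lemma~\ref{subs:char:lem} and the definition of $U$), that if $C_i$ is reduced and $C_i\subeq G$ for a reduced $G$, then either $C_i\equiv^\emptyset G$ or there is $F\in U(C_i)$ with $F\subeq G$; applying this to each $D$-existential covered by $\exists r_i.C_i$ gives the needed coverage, and the strictness needed overall is supplied by the one pair where $C_i\sub D_j$. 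With that auxiliary fact in hand, all three cases close using only Lemma~\ref{subs:char:lem}, reducedness of $C$, and the induction hypothesis. \qed
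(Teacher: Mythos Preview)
Your overall strategy---induction on the role depth of $C$ via Lemma~\ref{subs:char:lem}---is exactly the paper's, and your Case~1 coincides with the paper's first case. The gap is in Case~3, specifically in how you choose the index $i$. You pick $i$ merely because \emph{some} covering pair $(i,j)$ has $C_i\sub D_j$. But the same conjunct $\exists r_i.C_i$ may also be the \emph{only} top-level conjunct of $C$ covering another $\exists s_{j'}.D_{j'}$ with $C_i\equiv^\emptyset D_{j'}$. After replacing $\exists r_i.C_i$ by $\bigsqcap_{F\in U(C_i)}\exists r_i.F$, every such $F$ satisfies $C_i\sub F$ (Lemma~\ref{in:U:strict:subs}), so no $F$ can satisfy $F\subeq D_{j'}\equiv^\emptyset C_i$, and $\exists s_{j'}.D_{j'}$ is left uncovered. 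Your auxiliary observation does not save this: its disjunct ``$C_i\equiv^\emptyset G$'' gives you nothing to work with. Concretely, take $C=\exists r.(A\sqcap B)\sqcap\exists r.(A\sqcap B')$ (reduced) and $D=\exists r.(A\sqcap B)\sqcap\exists r.A$. Then $C\sub D$, neither Case~1 nor Case~2 applies, and $(i,j)=(1,2)$ is a strict covering pair; yet choosing $i=1$ yields $D'=\exists r.A\sqcap\exists r.B\sqcap\exists r.(A\sqcap B')$, and $D'\not\subeq D$ because nothing in $D'$ covers $\exists r.(A\sqcap B)$.

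The paper avoids this by not splitting into your Cases~2 and~3 at all. Once the concept-name sets coincide, it picks $i$ as a witness to $D\not\subeq C$: an index such that $D_j\not\subeq C_i$ for \emph{every} $j$ with $s_j=r_i$. With this choice, whenever some $\exists s_j.D_j$ is covered by $\exists r_i.C_i$ (i.e.\ $C_i\subeq D_j$) one automatically has $C_i\sub D_j$, so the induction hypothesis applies uniformly to every $D$-existential that was covered via $i$, and the verification of $D'\subeq D$ goes through without a separate ``$\equiv^\emptyset$'' sub-case. In the example above this rule forces $i=2$, which works. If you merge your Cases~2 and~3 and adopt this selection of $i$, your argument becomes correct and essentially identical to the paper's.
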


\begin{proof}
Again, we prove the lemma by induction on the role depths of $C$.
Let $C = A_1\sqcap\ldots\sqcap A_k\sqcap\exists r_1.C_1\sqcap\ldots\sqcap\exists r_m.C_m$ and
$D = B_1\sqcap\ldots\sqcap B_\ell\sqcap\exists s_1.D_1\sqcap\ldots\sqcap\exists s_n.D_n$ for concept names $A_1,\ldots,A_k,B_1,\ldots B_\ell$.
Since $C\subeq D$, we know by Lemma~\ref{subs:char:lem} that $\{B_1,\ldots, B_\ell\}\subseteq \{A_1,\ldots, A_k\}$ and that for every
$j, 1\leq j\leq n$, there is $i,1\leq i\leq m$ such that $r_i=s_j$ and $C_i\subeq D_j$.

Strictness of the subsumption relationship $C\sub D$ may be due to the fact that $\{B_1,\ldots, B_\ell\}\subset \{A_1,\ldots, A_k\}$.
In this case, let $A \in \{A_1,\ldots, A_k\}\setminus \{B_1,\ldots, B_\ell\}$, and let $D'$ be obtained from $C$
by removing the concept name $A$ from the top-level conjunction of $C$. Then $D'\in U(C)$, and $D'\subeq D$ holds by Lemma~\ref{subs:char:lem}.

Now assume that $\{B_1,\ldots, B_\ell\} = \{A_1,\ldots, A_k\}$. Then $D\not\subeq C$ implies that there is an $i,1\leq i\leq m$ such that for
all $j, 1\leq j\leq n$ with $r_i=s_j$ we have $D_j\not\subeq C_i$. Let $D'$ be obtained from $C$ by replacing the existential restriction 
$\exists r_i.C_i$ from the top-level conjunction of $C$ with the conjunction of all existential restrictions $\exists r_i.F$ for $F\in U(C_i)$.
Then $D'\in U(C)$ and it remains to prove that $D'\subeq D$. 

To show that the conditions of  Lemma~\ref{subs:char:lem} are satisfied, we consider an existential restriction
$\exists s_j.D_j$ in the top-level conjunction of $D$. Since $C\subeq D$, there is an index $\nu, 1\leq \nu\leq m$ such that
$r_\nu = s_j$ and $C_\nu \subeq D_j$. If $\nu\neq i$, then $\exists r_\nu.C_\nu$ is also a top-level conjunct of $D'$, and thus we are done.
Thus, assume that $\nu = i$. In this case, we know that $D_j\not\subeq C_\nu = C_i$, and thus $C_\nu \sub D_j$. By induction,
there is a concept $F\in U(C_i)$ such that $F\subeq D_j$, and we are again done since $\exists r_i.F$ is a top-level conjunct of $D'$.
\end{proof}

\begin{lemma}\label{U:incomp}
Let $C$ be a reduced $\el$ concept.
If $D$ and $D'$ are different elements of $U(C)$, then $D\not\subeq D'$.
\end{lemma}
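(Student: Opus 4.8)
The plan is to take $D,D'\in U(C)$ with $D\neq D'$ and show $D\not\subeq D'$ directly, using the recursive characterization of $\subeq$ from Lemma~\ref{subs:char:lem} throughout. First I would fix a normal form
\[
C = A_1\sqcap\ldots\sqcap A_k\sqcap\exists r_1.C_1\sqcap\ldots\sqcap\exists r_m.C_m,\qquad A_1,\ldots,A_k\in\NC,
\]
and extract the two facts that follow from $C$ being reduced. First, any two distinct top-level conjuncts of $C$ are $\subeq$-incomparable; in particular, $C_l\not\subeq C_i$ whenever $i\neq l$ and $r_i=r_l$, since otherwise $\exists r_l.C_l\sqcap\exists r_i.C_i$ could be reduced inside $C$. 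Second, each $C_i$ is itself reduced, so Lemma~\ref{in:U:strict:subs} gives $C_i\sub F$, and hence $F\not\subeq C_i$, for every $F\in U(C_i)$ (it is the \emph{strictness} of $C_i\sub F$ that is used here). I would also record that, by the definition of $U(C)$, every element of $U(C)$ is produced by exactly one of two operations: $D_B$, obtained by deleting a top-level concept name $B$ while keeping all existential conjuncts; or $E_i$, obtained by deleting $\exists r_i.C_i$ and inserting $\exists r_i.F$ for each $F\in U(C_i)$ while keeping all concept names and all other existential conjuncts.

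The argument then splits according to the shape of $D$. If $D=D_B$, then $B$ is a top-level concept name of $D'$ — if $D'=D_A$ is of the first shape then $A\neq B$ (since $D'\neq D$), and if $D'=E_j$ is of the second shape then $D'$ retains all of $A_1,\ldots,A_k$ — whereas $B$ is not a top-level concept name of $D$; by Lemma~\ref{subs:char:lem} this alone forbids $D\subeq D'$. If instead $D=E_i$, then $\exists r_i.C_i$ is a top-level conjunct of $D'$: either $D'$ is of the first shape and so left the existential conjuncts untouched, or $D'=E_j$ with $j\neq i$ (again as $D'\neq D$) and so did not touch $\exists r_i.C_i$. Were $D\subeq D'$, Lemma~\ref{subs:char:lem} would require a top-level conjunct $\exists r_i.Y$ of $D$ with $Y\subeq C_i$; but the top-level conjuncts of $D=E_i$ with role name $r_i$ are exactly the surviving $\exists r_l.C_l$ with $l\neq i$ and $r_l=r_i$ — for which $C_l\not\subeq C_i$ by the first fact above — together with the new $\exists r_i.F$ with $F\in U(C_i)$ — for which $F\not\subeq C_i$ by the second fact — so no such $Y$ exists and $D\not\subeq D'$.

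The step needing the most care is this last one: ruling out that one of the freshly inserted conjuncts $\exists r_i.F$ could make $D$ subsume $\exists r_i.C_i$. This is exactly where Lemma~\ref{in:U:strict:subs} is indispensable, since without the strictness of $C_i\sub F$ we could not conclude $F\not\subeq C_i$, and I expect a careless argument to stumble precisely here. One boundary case also deserves a sentence: if $C_i=\top$ then $U(C_i)=\emptyset$, so no new conjunct is inserted, but then $C$ being reduced forces $\exists r_i.\top$ to be the only top-level conjunct with role name $r_i$, so $D=E_i$ has no such conjunct at all and the case goes through vacuously. Finally, I would note that — unlike Lemmas~\ref{in:U:strict:subs} and~\ref{U:in:between} — no induction on role depth is needed here, because the recursion has already been absorbed into Lemmas~\ref{subs:char:lem} and~\ref{in:U:strict:subs}.
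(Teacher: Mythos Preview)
Your proof is correct and follows essentially the same route as the paper: a case split on whether $D$ arises by dropping a concept name or by replacing an existential restriction, then using Lemma~\ref{subs:char:lem} together with reducedness of $C$ to rule out $D\subeq D'$. Your handling of the ``new conjunct'' subcase via Lemma~\ref{in:U:strict:subs} (giving $F\not\subeq C_i$ for $F\in U(C_i)$) is in fact cleaner and more explicit than the paper's corresponding step.
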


\begin{proof}
If $D$ and $D'$ are obtained from $C$ by removing different concept names, then $D\subeq D'$ obviously can\emph{not} hold by Lemma~\ref{subs:char:lem}.
The same is true if $D$ is obtained by removing a concept name and $D'$ is obtained by replacing an existential restriction.

Assume that $D$ is obtained from $C$ by replacing an existential restriction $\exists r.E$ 
with the conjunction of the existential restrictions $\exists r.F$ for $F\in U(E)$.
Since $D, D'$ are different elements of $U(C)$, $\exists r.E$ still belongs to the top-level conjunction of $D'$. Now, $D\subeq D'$ implies that 
there is an existential restriction $\exists r.E'$ in the top-level conjunction of $D$ such that $E'\subeq E$. If $\exists r.E'$ is an original
conjunct in the top-level conjunction of $C$, this contradicts our assumption that $C$ is reduced. Otherwise, $\exists r.E'$ must be such that
$E'\in U(G)$ for an existential restriction $\exists r.G$ different from $\exists r.E$ in the top-level conjunction of $C$. But then
$G\sub E'\subeq E$, which again contradicts our assumption that $C$ is reduced.
\end{proof}

\begin{proposition}
Let $C$ be a reduced $\el$ concept.  Then up to equivalence we have $\Upper(C) = U(C)$.
In particular, this implies that the cardinality of $\Upper(C)$ is polynomial in the size of $C$ and that this set can be computed
in polynomial time in the size of $C$.
\end{proposition}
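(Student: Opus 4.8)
The plan is to prove the two inclusions $U(C) \subseteq \Upper(C)$ and $\Upper(C) \subseteq U(C)$ (both up to equivalence), using the three technical lemmas that precede the statement. First I would establish $U(C) \subseteq \Upper(C)$: take $D \in U(C)$. By Lemma~\ref{in:U:strict:subs} we have $C \sub D$, so $D$ is a proper upper bound of $C$ and it remains only to check that it is an \emph{immediate} one, i.e.\ that there is no concept $E$ with $C \sub E \sub D$. Suppose such an $E$ existed. Then $C \sub E$, so by Lemma~\ref{U:in:between} there is a $D' \in U(C)$ with $D' \subeq E$. Combining $D' \subeq E \sub D$ (hence $D' \subeq D$) with Lemma~\ref{U:incomp}, which says distinct elements of $U(C)$ are incomparable, forces $D' = D$, so $D \subeq E$; together with $E \sub D$ this is a contradiction. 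Hence $D \in \Upper(C)$.

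For the reverse inclusion $\Upper(C) \subseteq U(C)$, take $D$ with $C \sub_1 D$. In particular $C \sub D$, so Lemma~\ref{U:in:between} gives $D' \in U(C)$ with $D' \subeq D$. By the inclusion just proved, $D' \in \Upper(C)$, i.e.\ $C \sub_1 D'$, so there is no concept strictly between $C$ and $D'$. Since $C \sub D' \subeq D$ and also $C \sub_1 D$, the chain $C \sub D' \subeq D$ cannot have $D'$ strictly between $C$ and $D$; combined with $C \sub D'$, this forces $D' \equiv^\emptyset D$. Thus $D$ is equivalent to an element of $U(C)$, which is the claim up to equivalence. The final sentence about complexity then follows because $U(C)$ is defined by a recursion that, at each role depth, either deletes one top-level conjunct or expands a single existential restriction $\exists r.E$ into the (polynomially many, by induction on role depth) existential restrictions over $U(E)$; reducing $C$ beforehand is polynomial by \cite{Kues01}, so the whole set $U(C)$ has polynomially many elements each of polynomial size and is computable in polynomial time.

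The main obstacle I expect is the second inclusion, specifically being careful about the interaction of $\sub$ (strict) and $\subeq$ (non-strict) when $D' \subeq D$ — one must rule out that $D'$ is a genuine intermediate concept strictly below $D$, which would contradict $C \sub_1 D$, and the only way out is $D' \equiv^\emptyset D$. This is where the ``up to equivalence'' qualifier in the statement is essential: $\Upper(C)$ and $U(C)$ need not be literally equal, only equal as sets of equivalence classes. A secondary point worth stating explicitly is that everything is phrased for reduced $C$, so one should note (or recall from \cite{Kues01}) that passing to the reduced form does not change the concept up to equivalence, hence does not change $\Upper(C)$ up to equivalence either, which is why the polynomial-time claim is legitimate even though the hypothesis assumes $C$ reduced.
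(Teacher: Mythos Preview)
Your proposal is correct and follows essentially the same approach as the paper: both proofs use Lemma~\ref{in:U:strict:subs} and Lemma~\ref{U:incomp} to show that elements of $U(C)$ are upper neighbors, and Lemma~\ref{U:in:between} together with the definition of $\sub_1$ to show the converse. The only differences are cosmetic: you prove the two inclusions in the opposite order, and in your second inclusion the sentence invoking ``the inclusion just proved'' is superfluous (the conclusion $D'\equiv^\emptyset D$ already follows directly from $C\sub D'\subeq D$ and $C\sub_1 D$, exactly as the paper argues).
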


\begin{proof}
First, assume that $D\in \Upper(C)$, i.e., $C\sub_1 D$. Then Lemma~\ref{U:in:between} implies that there is $D'\in U(C)$ such 
that $D'\subeq D$. But then $C\sub D' \subeq D$ and $C\sub_1 D$ imply $D'\equiv^\emptyset D$, and thus $D$ is equivalent to an element of $U(C)$.

Conversely, assume that $D\in U(C)$. Then Lemma~\ref{in:U:strict:subs} yields $C\sub D$. To show that $C\sub_1 D$, assume to the contrary that
there is a concept $D'$ such that $C\sub D'\sub D$. Then Lemma~\ref{U:in:between} yields the existence of a concept $D''\in U(C)$ such that
$C\sub D''\subeq D'\sub D$. But then $D$ and $D''$ are two different elements of $U(C)$ that are comparable w.r.t.\ $\sub$, which contradicts Lemma~\ref{U:incomp}.

The polynomiality results for $U(C)$ can easily be shown by induction on the role depth of $C$.
\end{proof}

Unfortunately, this result does not transfer immediately from concept subsumption to axiom weakening. In fact, as we have seen before,
strict subsumption need not produce a weaker axiom (see the remark below Proposition~\ref{weakening:prop}). Thus, to find all GCIs
$C\sqsubseteq D'$ with ${C\sqsubseteq D}\susu_1 {C\sqsubseteq D'}$, it is not sufficient to consider only concepts $D'$ with
$D\sub_1 D'$. In case $C\sqsubseteq D'$ is equivalent to $C\sqsubseteq D$, we need to consider upper neighbors of $D'$, etc.

\begin{proposition}
\label{prop:onestep:efb}
	The one-step relation $\susu_1$ induced by $\susu$
	is effectively finitely branching.
\end{proposition}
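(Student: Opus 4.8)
The goal is to show that $\susu_1$ is effectively finitely branching, i.e., for every $\el$ axiom $C \sqsubseteq D$ the set $\{ C \sqsubseteq D' \mid {C\sqsubseteq D} \susu_1 {C\sqsubseteq D'} \}$ is finite and effectively computable. Since $\susu$ only modifies the right-hand side and fixes $C$, this amounts to computing the set of "one-step weaker" right-hand sides $D'$ relative to the left-hand side $C$. The central difficulty, flagged in the discussion just before the statement, is that $\susu$-one-step weakenings need not coincide with $\sub_1$-upper-neighbours of $D$: if $C \sqsubseteq D'$ happens to be equivalent to $C \sqsubseteq D$ even though $D \sub D'$, then $D'$ is a strict $\sub$-weakening that does \emph{not} yield a weaker axiom, and one has to continue to upper neighbours of $D'$, and so on. So the plan is to iterate the $\Upper(\cdot)$ construction, collect the candidates reached, and argue that this process terminates and captures exactly the $\susu_1$-successors.

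\textbf{Key steps.} First I would recall from Proposition~\ref{weakening:prop} and its surrounding remark the precise criterion: ${C\sqsubseteq D}\succ^s{C\sqsubseteq D'}$ holds iff $D\subeq D'$ and $\{C\sqsubseteq D'\}\not\models C\sqsubseteq D$. Since $D\subeq D'$, the condition $\{C\sqsubseteq D'\}\not\models C\sqsubseteq D$ is equivalent to saying $D'\not\subeq D$ \emph{does not suffice}, but more to the point, $\{C\sqsubseteq D'\}\models C\sqsubseteq D$ is equivalent to $C\sqcap D' \subeq D$; so ${C\sqsubseteq D}\succ^s{C\sqsubseteq D'}$ iff $D\subeq D'$ and $C\sqcap D'\not\subeq D$. (When $D'$ is obtained from $D$ by generalizing, $C\sqcap D'\subeq D$ is exactly the degenerate case where the context $C$ re-derives the information we dropped.) Next I would run a breadth-first search through the $\sub_1$-upper-neighbour graph: start with $D$, and repeatedly apply the $U(\cdot)$ operator from the preceding proposition. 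At each concept $D'$ reached, classify it: if $C\sqcap D'\not\subeq D$, then $C\sqsubseteq D'$ is a genuine $\succ^s$-weakening of $C\sqsubseteq D$ and $D'$ is a \emph{candidate} $\susu_1$-successor; if instead $C\sqcap D'\subeq D$ (equivalently $C\sqsubseteq D'$ is equivalent to $C\sqsubseteq D$), then $D'$ is "still too strong as an axiom" and we must expand its upper neighbours $U(D')$ and continue. Termination of this search follows because $\sub$ is bounded (the boundedness of $\sub$ was invoked in the proof of Theorem~\ref{weakening:rel:el}, via \cite{BaMo10}): every path $D \sub D' \sub D'' \sub \cdots$ has length bounded by $b(D)$, so the reachable set is finite and the BFS halts; each step is effective because $U(\cdot)$ is computable and $\subeq$ is decidable in $\el$. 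Finally, from the candidate set I would prune: among all candidates $D'$, keep those that are $\susu$-maximal, i.e., remove $D'$ if there is another candidate $D''$ with ${C\sqsubseteq D'}\succ^s{C\sqsubseteq D''}$ (which can be checked via $D'\subeq D''$ and $C\sqcap D''\not\subeq D$). The claim is that the surviving set is exactly $\{D' \mid {C\sqsubseteq D}\susu_1{C\sqsubseteq D'}\}$ up to equivalence.

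\textbf{Correctness argument.} For soundness, any $D'$ surviving the pruning satisfies ${C\sqsubseteq D}\succ^s{C\sqsubseteq D'}$ by construction, and there is no candidate strictly between; one then has to rule out a \emph{non-candidate} strictly between, i.e., some $E$ with ${C\sqsubseteq D}\succ^s{C\sqsubseteq E}\succ^s{C\sqsubseteq D'}$. Here $D\subeq E\subeq D'$, so $E$ lies on a $\subeq$-chain from $D$ to $D'$; refining this chain to pass through $\sub_1$-steps (possible since $\sub$ is bounded, hence one-step generated by Proposition~\ref{bounded:then:one:step:gen}) and using Lemmas~\ref{in:U:strict:subs}--\ref{U:incomp} to match each $\sub_1$-step with a concept in the iterated $U$-expansion, one locates a concept $E'$ in the BFS tree with $E\equiv^\emptyset E'$; since $C\sqsubseteq E$ is not equivalent to $C\sqsubseteq D$, $E'$ is itself a candidate, contradicting that $D'$ was not pruned by $E'$ (as $E'\sub D'$ up to equivalence). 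For completeness, if ${C\sqsubseteq D}\susu_1{C\sqsubseteq D'}$, then ${C\sqsubseteq D}\succ^s{C\sqsubseteq D'}$ and $D\subeq D'$; refine a $\sub$-chain from $D$ to $D'$ into $\sub_1$-steps and track it through the iterated $U$-construction as above, noting that every intermediate concept $D''$ with $D\sub D''\sub D'$ must satisfy $C\sqsubseteq D''$ equivalent to $C\sqsubseteq D$ (else ${C\sqsubseteq D}\succ^s{C\sqsubseteq D''}\succ^s{C\sqsubseteq D'}$ contradicts the one-step property), hence the BFS does not stop before reaching $D'$, placing $D'$ in the candidate set; and $D'$ survives pruning since no candidate can lie strictly below it, again by the one-step property. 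The main obstacle I anticipate is precisely this bookkeeping in the correctness proof — aligning an arbitrary refined $\sub_1$-chain with the tree of iterated $U$-expansions and verifying at each stage, using Lemmas~\ref{in:U:strict:subs}, \ref{U:in:between}, and \ref{U:incomp}, that "candidate" status is preserved or first attained exactly where it should be; the finiteness and effectiveness parts are comparatively routine given boundedness of $\sub$ and decidability of $\subeq$.
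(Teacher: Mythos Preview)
Your overall strategy --- a breadth-first search through the $\sub_1$-graph starting at $D$, stopping on each branch at the first concept giving a genuinely weaker GCI, then pruning comparable elements --- is exactly the approach the paper takes. The finiteness argument via boundedness of $\sub$ and computability of $U(\cdot)$ is also the same.

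However, there is a genuine error in your reformulation of the equivalence test. You claim that, given $D\subeq D'$, one has
\[
\{C\sqsubseteq D'\}\models C\sqsubseteq D \quad\Longleftrightarrow\quad C\sqcap D'\subeq D,
\]
and you use the right-hand condition as the BFS classification test. Only the implication $\Leftarrow$ holds. For the other direction, take $C=\top$, $D=A\sqcap\exists r.A$, and $D'=A\sqcap\exists r.\top$. Then $D\sub D'$, and $\{\top\sqsubseteq A\sqcap\exists r.\top\}\models \top\sqsubseteq A\sqcap\exists r.A$ (every element is in $A$, and every element has an $r$-successor, which is again in $A$), so the two GCIs are equivalent. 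Yet $C\sqcap D' = A\sqcap\exists r.\top \not\subeq A\sqcap\exists r.A$ by Lemma~\ref{subs:char:lem}. Your algorithm would therefore wrongly declare $D'$ a ``candidate'' weakening and stop the branch there, missing the actual $\susu_1$-successors further up. The point is that a single GCI can be applied \emph{recursively inside existential restrictions}, so ``one-shot'' unfolding via $C\sqcap D'$ does not capture full entailment.

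The fix is immediate and does not affect the structure of your argument: replace the test ``$C\sqcap D'\subeq D$'' by the correct test ``$\{C\sqsubseteq D'\}\models C\sqsubseteq D$'', which is decidable in polynomial time in \el. With that correction your proof coincides with the paper's. (A minor additional slip: in your pruning step you remove $D'$ when ${C\sqsubseteq D'}\succ^s{C\sqsubseteq D''}$; since $\succ^s$ points from stronger to weaker, this would discard the stronger axiom. You want to remove $D'$ when some other candidate $D''$ satisfies ${C\sqsubseteq D''}\succ^s{C\sqsubseteq D'}$.)
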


\begin{proof}
	Since $\sub$ is one-step generated, finitely branching, and well-founded, for a given concept $D$, there are only
	finitely many concepts $D'$ such that $D\sub D'$. Thus, a breadth first search along $\sub_1$ can be
	used to compute all concepts $D'$ such that there is a path $D\sub_1 D_1\sub_1\ldots D_n\sub_1 D'$
	where $C\sqsubseteq D$ is equivalent to $C\sqsubseteq D_i$ for $i=1,\ldots, n$, and
	${C\sqsubseteq D}\susu{C\sqsubseteq D'}$. Since $\sub$ is one-step generated, it is easy to see that all axioms $\gamma$ with
	${C\sqsubseteq D}\susu_1\gamma$ can be obtained this way. However, the computed set of axioms may contain
	elements that are not one-step successors of $C\sqsubseteq D$. Thus, in a final step, we remove all axioms
	that are weaker than some axiom in the set. 
\end{proof}

\begin{figure}
	\centering
	\begin{tikzpicture}[x=7.5mm,y=7.5mm]
	\begin{scope}[every node/.style={draw,rectangle,rounded corners,inner sep=0.3em}]
	\node[double] (0) at ( 0, 0) {$\top\sqsubseteq A\sqcap\exists r.A$}    ;
	\node         (1) at (-2,-2) {$\top\sqsubseteq A\sqcap\exists r.\top$} ;
	\node[double] (2) at ( 2,-2) {$\top\sqsubseteq\exists r.A$}            ;
	\node[double] (3) at ( 0,-4) {$\top\sqsubseteq\exists r.\top$}         ;
	\end{scope}
	\path[-{Classical TikZ Rightarrow[length=1.618mm]},shorten >=1.618mm]
	(0) edge node[below,rotate=-135] {$\models$} node[below,rotate=45]  {$\models$}     (1)
	(0) edge node[above,rotate=-45]  {$\models$} node[above,rotate=135] {$\not\models$} (2)
	(1) edge node[below,rotate=-45]  {$\models$} node[below,rotate=135] {$\not\models$} (3)
	(2) edge node[above,rotate=-135] {$\models$} node[above,rotate=45]  {$\not\models$} (3)
	;
	\end{tikzpicture}
	\caption{One-step weakening}
	\label{fig:gentle-weakening-with-neighbors}
\end{figure}
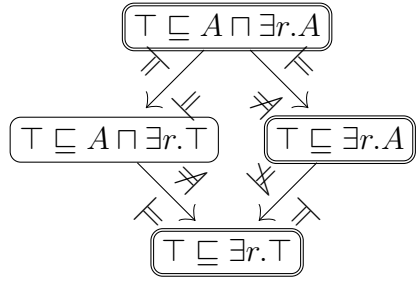

\begin{example}\label{ex:onestep-el}
	To see that the final step of removing axioms in the proof of Proposition~\ref{prop:onestep:efb} is needed, 
	consider the axiom $\beta = \top\sqsubseteq A\sqcap\exists r.A$ in Fig.~\ref{fig:gentle-weakening-with-neighbors}.
	The right-hand side $A\sqcap\exists r.A$ has two upper neighbors, namely $\exists r.A$ and $A\sqcap\exists r.\top$.
	The first yields the axiom $\top\sqsubseteq \exists r.A$, which satisfies 
	${\top\sqsubseteq A\sqcap\exists r.A}\susu_1{\top\sqsubseteq \exists r.A}$.
	The second yields the axiom $\top\sqsubseteq A\sqcap\exists r.\top$, which is equivalent to $\beta$.
	Thus, the only upper neighbor $\top\sqsubseteq\exists r.\top$ is considered, but this concept yields
	an axiom that is actually weaker than $\top\sqsubseteq \exists r.A$, and thus needs to be removed.
	
	A similar, but simpler example can be used to show
	that the additional removal of weaker elements in the proof of Proposition~\ref{max:strong:exist:prop} is needed.
	Let $\alpha$ be the consequence $\top \sqsubseteq A$, $J = \{\beta\}$ for $\beta := \top\sqsubseteq A\sqcap B$, 
	$\delta_1 := \top \sqsubseteq A$, $\delta_2 := \top\sqsubseteq B$, and $\gamma := \top\sqsubseteq \top$.
	Then we have exactly the situation described below the proof of Proposition~\ref{max:strong:exist:prop}, with
	$\susu$ as the employed weakening relation.
\end{example}

\begin{corollary}
	All maximally strong weakenings w.r.t.\ $\susu$ of an axiom in a justification can effectively be computed.
\end{corollary}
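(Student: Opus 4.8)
The plan is to recognise that this corollary is an immediate instance of Proposition~\ref{max:strong:exist:prop}, so the entire task reduces to verifying that the three hypotheses of that proposition hold for $\susu$ — well-foundedness, being one-step generated, and effective finite branching — together with decidability of the consequence relation. All of these have essentially been established in the preceding results, so the proof itself will be only a few lines; the substantive work has already been done.

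First I would invoke Theorem~\ref{weakening:rel:el}, which states that $\susu$ is a well-founded, complete, one-step generated weakening relation (the ``one-step generated'' part coming from boundedness of $\sub$, hence of $\susu$, via Proposition~\ref{bounded:then:one:step:gen}). Next I would cite Proposition~\ref{prop:onestep:efb}, which says precisely that the induced one-step relation $\susu_1$ is effectively finitely branching: for a GCI $C\sqsubseteq D$ one computes the upper neighbours of $D$ in polynomial time, performs a bounded breadth-first search along $\sub_1$ through the equivalence class of $C\sqsubseteq D$ until a properly weaker GCI appears, and then discards the non-minimal axioms. Finally, I would record that the $\el$ consequence relation $\models$ is decidable (in fact subsumption, and hence entailment of $\el$ axioms, is polynomial-time~\cite{Bran04}), so the entailment tests $\Ons\cup(J\setminus\{\beta\})\cup\{\delta\}\models\alpha$ occurring inside the algorithm of Proposition~\ref{max:strong:exist:prop} are effective.

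With these four ingredients assembled, Proposition~\ref{max:strong:exist:prop} applies verbatim with $\susu$ in the role of $\succ$ and yields the claim: for any justification $J$ for $\alpha$ and any $\beta \in J$ of the form $C\sqsubseteq D$, run the breadth-first search along $\susu_1$ starting from $C\sqsubseteq D$, collect the axioms $\gamma$ at which $\alpha$ first ceases to be a consequence of $\Ons\cup(J\setminus\{\beta\})\cup\{\gamma\}$, and then delete from that set every axiom that is $\susu$-weaker than another one in it; what remains is exactly the set of maximally strong weakenings of $\beta$ in $J$. I expect no genuine obstacle here — the only delicate point, that the search may turn up an axiom that is not maximally strong because a sibling on a parallel $\susu_1$-path already destroys the entailment, is precisely what the final pruning step handles, and Example~\ref{ex:onestep-el} confirms that this situation really arises for $\el$. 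In short, the corollary is just the bookkeeping that combines Theorem~\ref{weakening:rel:el}, Proposition~\ref{prop:onestep:efb}, and Proposition~\ref{max:strong:exist:prop}.
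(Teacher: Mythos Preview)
Your proposal is correct and follows essentially the same approach as the paper: the corollary is obtained by invoking Proposition~\ref{max:strong:exist:prop} once the hypotheses (well-foundedness, one-step generatedness, effective finite branching) have been established for $\susu$ via Theorem~\ref{weakening:rel:el} and Proposition~\ref{prop:onestep:efb}. You are slightly more thorough in explicitly noting decidability of the $\el$ consequence relation, which the paper leaves implicit, but otherwise the argument is identical.
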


\begin{proof}
	By Proposition~\ref{max:strong:exist:prop}, this is an immediate consequence of the fact that $\susu$
	is well-founded, one-step generated, and effectively finitely branching.
\end{proof}

The algorithm for computing maximally strong weakenings described in the proof of Proposition~\ref{max:strong:exist:prop}
has non-elementary complexity for $\susu$. In fact, the bound for the depth of the tree that must be searched grows by one
exponential for every increase in the role-depth of the concept on the right-hand side.
It is not clear how to obtain an algorithm with a better complexity. Example~\ref{exp:max:strong:ex} below yields an
exponential lower-bound, which still leaves a huge gap. We can also show that even deciding whether a given axiom
is a maximally strong weakening w.r.t.\ $\susu$ is coNP-hard.

Before we can prove this hardness result, we must introduce the coNP-complete problem that will be used in
our proof by reduction. A \emph{monotone Boolean formula} $\varphi$ is built from propositional variables using
the connectives conjunction ($\wedge$) and disjunction ($\vee$) only. If $V$ is the set of propositional variables occurring in $\varphi$,
then propositional valuations can be seen as subsets $W$ of $V$. Since $\varphi$ is monotone, the valuation $V$ clearly
satisfies $\varphi$, and the valuation $\emptyset$ falsifies $\varphi$. We are now interested in \emph{maximal} valuations falsifying
$\varphi$, where valuations are compared using set inclusion.

\begin{definition}
The \emph{all-maximal-valuations} problem receives as input 
\begin{itemize}
\item a monotone Boolean formula $\varphi$ with propositional variables $V$, and
\item a set $\Vmc$ of maximal valuations falsifying $\varphi$.
\end{itemize}
The question is then whether $\Vmc$ is the set of \emph{all} maximal valuations falsifying $\varphi$.
\end{definition}

As shown in \cite{Pena09} (Lemma~6.13), the all-maximal-valuations problem is coNP-complete.


\begin{proposition}
The problem of
deciding whether a given \el GCI $C\sqsubseteq D'$ is a maximally strong weakening of the \el GCI $C\sqsubseteq D$ w.r.t.\
$\susu$ is coNP-hard.
\end{proposition}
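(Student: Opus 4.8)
The plan is a polynomial many-one reduction from the \emph{all-maximal-valuations} problem, which is coNP-complete by \cite{Pena09} (Lemma~6.13). Given an instance of that problem I will construct a static ontology $\Ons$, a consequence $\alpha$, a justification $J=\{\beta\}$ for $\alpha$ (with $\beta$ of the form $C\sqsubseteq D$), and a GCI $\gamma$ of the form $C\sqsubseteq D'$, so that $\gamma$ is a maximally strong weakening of $\beta$ in $J$ w.r.t.\ $\susu$ iff the given instance is a yes-instance. Let $\varphi$ be a monotone Boolean formula over variables $V=\{p_1,\dots,p_k\}$ and $\Vmc$ a set of maximal valuations falsifying $\varphi$; we may assume $\varphi$ is constant-free, so that $\emptyset$ falsifies $\varphi$ and $V$ satisfies it, the degenerate formulas being decided directly.

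For the encoding I use fresh concept names $B,P_1,\dots,P_k$, a fresh role name $r$, and a concept name $S_\psi$ for every subformula $\psi$ of $\varphi$, identifying $S_{p_i}$ with $P_i$ and setting $A:=S_\varphi$. For $W\subseteq V$ put $E_W:=\bigsqcap_{p_i\in W}P_i$, with $E_\emptyset:=\top$. I take $\beta:=B\sqsubseteq\exists r. E_V$, $\alpha:=B\sqsubseteq\exists r. A$, $\gamma:=B\sqsubseteq\bigsqcap_{W\in\Vmc}\exists r. E_W$ (so $\gamma$ is $B\sqsubseteq\top$ when $\Vmc=\emptyset$), $J:=\{\beta\}$, and let $\Ons:=\mathcal{T}_\varphi$ contain $S_{\psi_1}\sqcap S_{\psi_2}\sqsubseteq S_{\psi_1\wedge\psi_2}$ for each conjunction subformula and $S_{\psi_i}\sqsubseteq S_{\psi_1\vee\psi_2}$ ($i=1,2$) for each disjunction subformula. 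The construction is polynomial, and since $B$ and $r$ are unconstrained by $\mathcal{T}_\varphi$ we have $\Ons\not\models\alpha$ while $\Ons\cup\{\beta\}\models\alpha$ (because $V\models\varphi$), so $J$ is indeed a justification for $\alpha$.

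Correctness rests on two facts. First, a standard least-model argument shows that, for $W\subseteq V$, an element satisfying $E_W$ is forced by $\mathcal{T}_\varphi$ to satisfy $S_\psi$ exactly when $W\models\psi$, hence to satisfy $A$ iff $W\models\varphi$; therefore, for every $\mathcal{F}\subseteq 2^V$, we have $\Ons\cup\{B\sqsubseteq\bigsqcap_{W\in\mathcal{F}}\exists r. E_W\}\models\alpha$ iff some $W\in\mathcal{F}$ satisfies $\varphi$. Second, using Lemma~\ref{subs:char:lem} and reduced forms one checks that the $\susu$-weakenings of $\beta$ are, up to equivalence, exactly the GCIs $B\sqsubseteq\bigsqcap_{W\in\mathcal{F}}\exists r. E_W$ with $\mathcal{F}\subseteq 2^V$ an antichain not containing $V$, and that $B\sqsubseteq\bigsqcap_{W\in\mathcal{F}_1}\exists r. E_W\susu B\sqsubseteq\bigsqcap_{W\in\mathcal{F}_2}\exists r. E_W$ holds iff the downward closure of $\mathcal{F}_1$ strictly contains that of $\mathcal{F}_2$. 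Combining these, the maximally strong weakenings of $\beta$ in $J$ are precisely the $B\sqsubseteq\bigsqcap_{W\in\mathcal{F}}\exists r. E_W$ for which $\mathcal{F}$ is the set of \emph{all} maximal falsifying valuations of $\varphi$. For the ``if'' direction: if $\mathcal{F}$ is that set, no member satisfies $\varphi$ (so $\alpha$ is not entailed), while any axiom strictly between it and $\beta$ corresponds to an $\mathcal{F}'$ with strictly larger downward closure, hence containing a valuation lying below no maximal falsifying valuation; such a valuation must satisfy $\varphi$, and so does the member of $\mathcal{F}'$ above it, so $\alpha$ is entailed. For ``only if'': if $\gamma$ is maximally strong, the first fact forces every member of $\Vmc$ to falsify $\varphi$; no member can be a non-maximal falsifying valuation, since adjoining a strictly larger falsifying valuation to $\Vmc$ would give an axiom strictly between $\gamma$ and $\beta$ not entailing $\alpha$; and no maximal falsifying valuation can be missing from $\Vmc$, since adjoining it gives such an axiom. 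Hence $\gamma$ is a maximally strong weakening of $\beta$ iff $\Vmc$ is the set of all maximal falsifying valuations of $\varphi$, which is exactly the answer to the given all-maximal-valuations instance; coNP-hardness follows.

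I expect the second fact above to be the main obstacle: since $\susu$ compares GCIs rather than concepts, and strict concept subsumption need not yield a strictly weaker GCI (cf.\ the remark after Proposition~\ref{weakening:prop}), one must argue carefully --- via the reduced-form/upper-neighbour analysis already used for Theorem~\ref{weakening:rel:el} --- that every weakening of $\beta$ has the claimed shape, that here dropping an $\exists r. E_W$ conjunct or shrinking a set $W$ genuinely weakens $\beta$ (because its left-hand side $B$ plays no role), and that the induced order on antichains of valuations coincides with downward-closure inclusion. Checking the boundary valuation $\emptyset$ and the degenerate formulas is routine but should be done explicitly.
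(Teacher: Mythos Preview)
Your proposal is correct and follows essentially the same approach as the paper's proof: both reduce from the all-maximal-valuations problem via the same TBox encoding of the monotone formula, take a single refutable GCI whose right-hand side is $\exists r.\bigsqcap_{p\in V}B_p$, and identify the candidate weakening with the conjunction $\bigsqcap_{W\in\Vmc}\exists r.\bigsqcap_{p\in W}B_p$. The only noteworthy difference is cosmetic: the paper adds one more static axiom $\exists r.B_\varphi\sqsubseteq C$ and uses the atomic consequence $A\sqsubseteq C$, whereas you use $B\sqsubseteq\exists r.S_\varphi$ directly; your more explicit analysis of the $\susu$-weakenings of $\beta$ as antichains of valuations ordered by downward-closure inclusion is a welcome addition that the paper leaves implicit.
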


\begin{proof}
Given an instance $\varphi,\Vmc$ of the all-maximal-valuations problem, we construct an instance of our
problem as follows. For every subformula $\psi$ of $\varphi$, we introduce a new concept name $B_\psi$.
If $\psi$ is not a propositional variable, we define the TBox:
\[
\Tmc_\psi :=
        \begin{cases}
                \{ B_{\psi_1}\sqcap B_{\psi_2}\sqsubseteq B_\psi\} & \psi=\psi_1\land\psi_2, \\
                \{ B_{\psi_1}\sqsubseteq B_\psi, B_{\psi_2}\sqsubseteq B_\psi\} & \psi=\psi_1\lor\psi_2.
        \end{cases}
\]
Let $V$ be the set of all propositional variables appearing in $\varphi$, and let $csub(\varphi)$ be the set
of all subformulas of $\varphi$ that are not in $V$.

We construct the ontology that has only one refutable axiom
$$A\sqsubseteq \exists r.\bigsqcap\{B_p\mid p\in V\},$$
and as static part the ontology
\[
\Tmc_s = \bigcup_{\psi\in csub(\varphi)}\Tmc_\psi \cup \{ \exists r. B_\varphi\sqsubseteq C\}.
\]
Clearly, the refutable axiom is a justification for $A\sqsubseteq C$.

Given a set \Wmc of valuations, define the concept
$$
X_\Wmc := \bigsqcap_{W\in\Wmc}\exists r.\bigsqcap\{B_p\mid p\in W\}.
$$
It follows that
$\{A\sqsubseteq X_\Wmc\}\cup\Tmc_s\not\models A\sqsubseteq C$
iff no valuation in \Wmc satisfies $\varphi$.

We \emph{claim} that \Vmc is the set of all maximal valuations not satisfying $\varphi$ iff $A\sqsubseteq X_\Vmc$  is a maximally strong
weakening of $A\sqsubseteq\exists r.\bigsqcap\{B_p\mid p\in V\}$.

First, assume that \Vmc is the set of all maximal valuations not satisfying $\varphi$. 
Then $\{A\sqsubseteq X_\Vmc\}\cup\Tmc_s\not\models A\sqsubseteq C$ and clearly 
${A\sqsubseteq \exists r.\bigsqcap\{B_p\mid p\in V\}}\susu{A\sqsubseteq X_\Vmc}$. If $A\sqsubseteq X_\Vmc$ is not maximally strong,
then there is a concept $E$ such that $\exists r.\bigsqcap\{B_p\mid p\in V\}\sub E \sub X_\Vmc$
and $\{A\sqsubseteq E\}\cup\Tmc_s\not\models A\sqsubseteq C$. The strict subsumption relationships imply the $E$ contains
a top-level conjunct $\exists r.\bigsqcap\{B_p\mid p\in U\}$ for a set $U\subseteq V$ such that $U$ is incomparable w.r.t.\
set inclusion with all the sets in $\Vmc$. Since \Vmc is the set of all maximal valuations not satisfying $\varphi$, this implies
that $U$ satisfies $\varphi$. Consequently, $\{A\sqsubseteq E\}\cup\Tmc_s\models A\sqsubseteq C$, which yields a contradiction to our
assumption that $A\sqsubseteq X_\Vmc$ is not maximally strong.

Conversely, assume that \Vmc is not the set of all maximal valuations not satisfying $\varphi$, i.e., there
is a maximal valuation $U$ not satisfying $\varphi$ such that $U\not\in \Vmc$. This implies that $U$ is incomparable w.r.t.\ inclusion
with any of the elements of $\Vmc$, and thus $\exists r.\bigsqcap\{B_p\mid p\in V\}\sub X_{\Vmc\cup\{U\}} \sub X_\Vmc$. 
In addition, we know that $\{A\sqsubseteq X_{\Vmc\cup\{U\}}\}\cup\Tmc_s\not\models A\sqsubseteq C$, which shows
that $A\sqsubseteq X_\Vmc$ is not maximally strong.
\end{proof}



\subsection{Syntactic Generalization}

In order to obtain a weakening relation that has better algorithmic properties than $\susu$, we
consider a syntactic approach for generalizing $\el$ concepts. Basically, the concept $D$ is a syntactic
generalization of the concept $C$ if $D$ can be obtained from $C$ by removing occurrences of subconcepts.
To ensure that such a removal really generalizes the concept, we work here with reduced concepts.

\begin{definition}\label{def: syntactic generalization}
	Let $C, D$ be \el concepts. Then
	$D$ is a \emph{syntactic generalization} of $C$ (written $C \sqsubsyn D$)
	if it is obtained from the reduced form of $C$ by replacing some occurrences of subconcepts $\neq\top$ with $\top$.
\end{definition}

For example, the concept $C = A_1\sqcap \exists r.(A_1\sqcap A_2)$ is already in reduced form, and its syntactic
generalizations include, among others, $\top\sqcap \exists r.(A_1\sqcap A_2) \equiv^\emptyset \exists r.(A_1\sqcap A_2)$, 
$A_1\sqcap \exists r.(\top \sqcap A_2) \equiv^\emptyset A_1\sqcap \exists r.A_2$, $\exists r.\top$, and $\top$.

\begin{lemma}\label{lem:sqsubsyn}
If $C\sqsubsyn D$, then $C\sub D$, and the length of any $\sqsubsyn$-chain issuing from $C$ is linearly bounded by the size of $C$.
\end{lemma}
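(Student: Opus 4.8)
The plan is to prove the two assertions of Lemma~\ref{lem:sqsubsyn} separately, each by induction on the structure of the reduced form of $C$. Throughout I will write $\widehat{C}$ for the reduced form of $C$, and recall from the definition that $C\sqsubsyn D$ means $D$ is obtained from $\widehat{C}$ by replacing some occurrences of subconcepts $\neq\top$ with $\top$. Since $\widehat{C}\equiv^\emptyset C$, proving $\widehat{C}\sub D$ suffices for the subsumption claim, and since the size of $\widehat{C}$ is bounded by the size of $C$ (reduction only removes material), a linear bound on chains issuing from $\widehat{C}$ gives the bound for $C$.

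First I would establish $C\sub D$. For $C\subeq D$: a single replacement of a subconcept $E\neq\top$ with $\top$ inside $\widehat{C}$ yields a concept $D_1$ with $\widehat{C}\subeq D_1$, because replacing $E$ by $\top$ within any context only makes the concept larger — formally, this is an easy structural induction on the context using monotonicity of $\sqcap$ and of $\exists r.(\cdot)$ together with $E\subeq\top$; iterating over all the replacements and using transitivity of $\subeq$ gives $\widehat{C}\subeq D$. For strictness I need $\widehat{C}\not\equiv^\emptyset D$ whenever at least one genuine replacement was made. Here is where reducedness is essential: if $\widehat{C}\equiv^\emptyset D$ then, by the result of \cite{Kues01} cited above Definition~\ref{def: syntactic generalization}, their reduced forms coincide up to associativity and commutativity of $\sqcap$; but $\widehat{C}$ is already reduced, and I would argue that $D$, obtained by replacing some non-$\top$ subconcepts of $\widehat{C}$ by $\top$, has a reduced form strictly smaller (e.g.\ in number of occurrences of concept names, role names and $\top$-that-are-not-removable) than $\widehat{C}$ — contradiction. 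Care is needed because replacing a subconcept by $\top$ can make a previously irredundant conjunct redundant, so I would phrase the measure in terms of the reduced form of $D$ rather than $D$ itself, and observe that reduction never increases this count while at least one replacement strictly decreased it relative to $\widehat{C}$. Alternatively, one can invoke Lemma~\ref{subs:char:lem} directly at the top level (the replaced occurrence is either inside some conjunct or is a whole conjunct) and recurse.

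Next I would bound the length of $\sqsubsyn$-chains. Define the measure $n(C)$ to be the size of the reduced form $\widehat{C}$, i.e.\ the number of occurrences of $\top$, concept names and role names in $\widehat{C}$. The key claim is: if $C\sqsubsyn_1 D$ (a one-step syntactic generalization — a single replacement), then $n(D) < n(C)$. Indeed, replacing one occurrence of a subconcept $E\neq\top$ of $\widehat{C}$ by $\top$ produces a concept whose size is at most that of $\widehat{C}$ (it drops the size of $E$, which is $\geq 1$, and adds one $\top$; since $E\neq\top$ its size is $\geq 1$, and if $E$ is a single $\top$... but that is excluded, so $\mathrm{size}(E)\geq 1$ with strict decrease unless $E$ was a single concept name, in which case size drops by $1$ and rises by $1$ for the new $\top$ — net zero). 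So a purely syntactic size argument does not immediately work; I would instead take the measure to be the number of occurrences of concept names and role names only (not counting $\top$) in $\widehat{C}$. Then any genuine replacement strictly decreases this count — replacing $E\neq\top$ by $\top$ removes at least one name occurrence (since $E$ contains a name: every $\el$ concept other than $\top$ contains a concept or role name), and reducing the result afterwards cannot increase it. Since each $\sqsubsyn$-step is a composition of at least one such one-step replacement, this measure strictly decreases along any $\sqsubsyn$-chain, and it is bounded by the size of $C$; hence chains have length at most linear in the size of $C$.

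The main obstacle I anticipate is the interaction between syntactic replacement and reduction in the strictness argument: after replacing a subconcept by $\top$, the resulting concept may no longer be reduced, and its reduced form could in principle coincide with $\widehat{C}$ if we are not careful — this is exactly what we must rule out. I would handle it by the name-counting measure described above, which is manifestly invariant under the direction "reduce" (reduction removes redundant conjuncts, never adding names) and strictly decreased by any real replacement, so $\widehat{C}$ and the reduced form of $D$ have different name counts and thus cannot be equal. This simultaneously delivers strictness ($C\sub D$) and the linear chain bound, so the two halves of the lemma share the same core computation.
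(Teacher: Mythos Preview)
Your proposal is correct and ultimately takes the same approach as the paper: the ``name-counting measure'' you settle on (counting only occurrences of concept and role names, not $\top$) is exactly what the paper calls the \emph{m-size}, and you use it in the same way---reduction never increases it, each genuine replacement strictly decreases it, hence strictness of the subsumption and the linear chain bound follow simultaneously. Your write-up is more exploratory (you first try the full size, see it fails on single concept names, then correct to the name-only count), whereas the paper simply introduces the m-size up front; but the argument is the same.
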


\begin{proof}
We use a modified definition of size (called m-size) where only occurrences of concept and role names are counted.
Reducing a concept preserves equivalence and never increases the m-size. Since the concept constructors of $\el$ are monotonic,
$C\sqsubsyn D$ implies $C\subeq D$. In addition, the m-size of the reduced form of $C$ is strictly larger than the m-size
of the reduced form of $D$ since 
concepts $\neq\top$ have an m-size $> 0$ whereas $\top$ has m-size $0$.
This shows $C\not\equiv^\emptyset D$ (and thus $C\sub D$), since these reduced forms then cannot be equal up to associativity and commutativity of $\sqcap$.
In addition, it clearly yields the desired linear bound on the length of $\sqsubsyn$-chains.
\end{proof}

By Proposition~\ref{bounded:then:one:step:gen}, this linear bound implies that $\sqsubsyn$ is one-step generated. In the 
corresponding one-step relation $\sqsubsyn_1$, the replacements can be restricted to subconcepts that are concept names 
or existential restriction  of the form $\exists r.\top$. 
For example, we have (modulo equivalence)
$$\exists r.(A_1\sqcap A_2\sqcap A_3) \sqsubsyn_1 \exists r.(A_1\sqcap A_2)\sqsubsyn_1 \exists r.A_2 \sqsubsyn_1 \exists r.\top \sqsubsyn_1\top.$$
However, not all such restricted replacements lead to single steps w.r.t.\ $\sqsubsyn$. For example,
consider the concept $C = \exists r.(A_1\sqcap A_2) \sqcap \exists r.(A_2\sqcap A_3)$. Then replacing $A_3$ by $\top$ leads
to $D = \exists r.(A_1\sqcap A_2) \sqcap \exists r.(A_2\sqcap \top) \equiv \exists r.(A_1\sqcap A_2)$, but we have
$C\sqsubsyn \exists r.(A_1\sqcap A_2) \sqcap \exists r.A_3 \sqsubsyn D$.

Before proving that every $\sqsubsyn_1$-step can be realized by such restricted replacements, 
we use the fact that any $\el$ concept can be written as a conjunction of concept names and existential restrictions to give a recursive
characterization of $\sqsubsyn$. Let $C$ be an $\el$ concept, and assume that
its reduced form is 
$$
C' = A_1\sqcap\ldots\sqcap A_k\sqcap\exists r_1.C_1\sqcap \ldots\sqcap \exists r_\ell.C_\ell.
$$
Then we have $A_i\neq A_j$ for all $i \neq j$ in $\{1,\ldots,k\}$ and 
$r_\mu\neq r_\nu$ or $C_\mu\not\subeq C_\nu$ for all $\nu \neq \mu$ in $\{1,\ldots,\ell\}$, since otherwise
$C'$ would not be reduced. Replacing some occurrences of subconcepts with $\top$ then corresponds (modulo equivalence) to
\begin{itemize}
\item
   removing some of the conjuncts of the form $A_i$,
\item
   removing some of the conjuncts of the form $\exists r_\mu.C_\mu$,
\item
   replacing some of the conjuncts of the form $\exists r_\nu.C_\nu$ with a conjunct of the form $\exists r_\nu.D_\nu$
   where $C_\nu\sqsubsyn D_\nu$
\end{itemize}
such that at least one of these actions is really taken. Thus, $C\sqsubsyn_1 D$ implies that $D$ can be obtained from
the reduced form of $C$ by taking exactly one of these actions for exactly one conjunct. In fact, either taking several actions has the same
effect as taking one of them, or taking the actions one after another leads to a sequence of several strict syntactic generalizations steps,
which is precluded by the definition of $\sqsubsyn_1$.

\begin{lemma}
Let $C\not\equiv^\emptyset \top$ with reduced form $C' = A_1\sqcap\ldots\sqcap A_k\sqcap\exists r_1.C_1\sqcap \ldots\sqcap \exists r_\ell.C_\ell$,
and assume that $C\sqsubsyn_1 D$. Then $D$ is obtained (modulo equivalence) from $C'$ by either
\begin{enumerate}
\item\label{type:one} 
  removing exactly one of the concept names $A_i$,
\item\label{type:two}
  removing exactly one of the existential restrictions $\exists r_\mu.C_\mu$ for $C_\mu\equiv^\emptyset \top$, or
\item\label{type:three} 
  replacing exactly one of the existential restrictions $\exists r_\nu.C_\nu$ with $\exists r_\nu.D_\nu$\\ for $C_\nu\sqsubsyn_1 D_\nu$.
\end{enumerate}
\end{lemma}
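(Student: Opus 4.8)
The plan is to reduce the general statement to the three-way case analysis laid out just before the lemma, and then for each case argue that any further simplification would contradict the one-step property $C\sqsubsyn_1 D$. First I would invoke the preceding discussion: since $C\not\equiv^\emptyset\top$ has reduced form $C' = A_1\sqcap\ldots\sqcap A_k\sqcap\exists r_1.C_1\sqcap\ldots\sqcap\exists r_\ell.C_\ell$, any $D$ with $C\sqsubsyn D$ is obtained (modulo equivalence) by some combination of removing concept names $A_i$, removing existential conjuncts $\exists r_\mu.C_\mu$, and replacing conjuncts $\exists r_\nu.C_\nu$ by $\exists r_\nu.D_\nu$ with $C_\nu\sqsubsyn D_\nu$, where at least one action is genuinely taken. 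The task is to show that when $D$ is a one-step successor, exactly one such action is taken, it involves exactly one conjunct, and in the replacement case the inner step is itself a one-step step $C_\nu\sqsubsyn_1 D_\nu$.

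The core argument is the same for all three situations: if two distinct actions were performed (or one action touched two conjuncts, or a replacement $\exists r_\nu.C_\nu \rightsquigarrow \exists r_\nu.D_\nu$ used a non-one-step inner generalization $C_\nu\sqsubsyn D_\nu$ that is not $\sqsubsyn_1$), then one can exhibit an intermediate concept $E$ with $C\sqsubsyn E\sqsubsyn D$ and $C\not\equiv^\emptyset E\not\equiv^\emptyset D$, contradicting $C\sqsubsyn_1 D$ by the definition of the one-step relation. Concretely, for two independent actions I would apply only the first of them to $C'$ to get $E$; Lemma~\ref{lem:sqsubsyn} (strictness of $\sqsubsyn$) together with the m-size argument shows $C\sub E$, and then applying the second action to $E$ gives $E\sqsubsyn D$ with $E\sub D$, since again the m-size strictly drops. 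For a replacement that uses a non-minimal inner generalization, pick $C_\nu\sqsubsyn C_\nu'\sqsubsyn D_\nu$ with both steps strict (possible because $\sqsubsyn$ is one-step generated, by Proposition~\ref{bounded:then:one:step:gen} applied via Lemma~\ref{lem:sqsubsyn}), and lift this to $\exists r_\nu.C_\nu' $ inside $C'$; monotonicity of the constructors and the m-size bookkeeping give the required strict intermediate. Finally, the three enumerated cases are exactly the three possible single actions, so I would just need to record that in case~\ref{type:two} removing $\exists r_\mu.C_\mu$ is only a genuine generalization when $C_\mu\equiv^\emptyset\top$ — otherwise the existential restriction could instead be shrunk step by step, again contradicting one-stepness.

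The main obstacle I anticipate is the interaction between the reduced-form bookkeeping and equivalence: after one action the resulting concept need not itself be reduced (e.g., removing $A_3$ from $\exists r.(A_1\sqcap A_2)\sqcap\exists r.(A_2\sqcap A_3)$ collapses a conjunct), so I must argue everything "modulo equivalence" and be careful that the m-size of the reduced form still strictly decreases at each step — this is exactly the subtlety flagged by the example right before the lemma. Lemma~\ref{lem:sqsubsyn}'s proof already packages the key fact that reduction never increases m-size and that non-$\top$ concepts have positive m-size, so the strictness claims I need follow from it; I would just have to apply it to the appropriate concepts (the reduced forms of $C$, $E$, $D$, and of the inner concepts $C_\nu$, $C_\nu'$, $D_\nu$) rather than re-deriving it. With that in hand the case analysis is routine, and the statement drops out.
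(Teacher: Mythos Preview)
Your overall strategy mirrors the paper's proof, but there is a real gap in how you handle the possibility that the intermediate concept you construct is \emph{equivalent} to $D$ rather than strictly between $C$ and $D$. You correctly flag this as the main obstacle, yet your proposed resolution via the m-size argument does not close it: the m-size of the reduced form is an equivalence invariant, so when your intermediate $E$ satisfies $E\equiv^\emptyset D$, its reduced m-size equals that of $D$ and you cannot conclude $E\sqsubsyn D$. Indeed, Lemma~\ref{lem:sqsubsyn} says $\sqsubsyn$ forces strict subsumption, so $E\equiv^\emptyset D$ outright precludes $E\sqsubsyn D$; no contradiction to one-stepness arises.

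A concrete instance where this bites: take $C' = \exists r.A \sqcap \exists r.B$ (reduced, with $A,B$ distinct concept names) and let $D = \exists r.A$, obtained by removing the conjunct $\exists r.B$. One checks directly that $C'\sqsubsyn_1 D$. Here $B\not\equiv^\emptyset\top$, so your treatment of case~\ref{type:two} would ``shrink $B$ first'', producing $E = \exists r.A \sqcap \exists r.\top$. But $E\equiv^\emptyset \exists r.A = D$, so $E$ is not a strict intermediate and one-stepness is \emph{not} contradicted. The right conclusion is not a contradiction but a redirection: modulo equivalence, $D$ is also obtained via case~\ref{type:three} (replace $\exists r.B$ by $\exists r.\top$, with the one-step inner move $B\sqsubsyn_1\top$), and the lemma is satisfied that way. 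The same phenomenon occurs in case~\ref{type:three} when the lifted intermediate $\exists r_\nu.C_\nu'$ becomes redundant next to another conjunct of $C'$.

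The paper's proof makes exactly this move. In each of the removal and replacement cases it forms the candidate intermediate $D'$ and then argues a dichotomy: either $C\sqsubsyn D'\sqsubsyn D$ strictly (contradicting $\sqsubsyn_1$), or $D'\equiv^\emptyset D$, in which case $D$ is witnessed, modulo equivalence, by a single action that already satisfies the restriction stated in the lemma (in the removal case one lands in case~\ref{type:three} since $C_\mu\sqsubsyn\top$; in the replacement case one lands in case~\ref{type:three} with the one-step inner $C_\nu\sqsubsyn_1 D_\nu'$). Your proposal carries only the first branch of this dichotomy; adding the second branch is what is needed to finish the argument.
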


\begin{proof}
As argued above, $C\sqsubsyn_1 D$ implies that $D$ is obtained from $C'$
by performing one of the following three actions:
\begin{itemize}
\item
   \emph{Removing exactly one of the conjuncts of the form $A_i$:} in this case, we are done.
\item
   \emph{Removing exactly one of the conjuncts of the form $\exists r_\mu.C_\mu$:} in this case we are done if
   $C_\mu\equiv^\emptyset \top$. Thus, assume that $C_\mu\not\equiv^\emptyset \top$. Let $D'$ be obtained from
   $C'$ by replacing $\exists r_\mu.C_\mu$ with $\exists r_\mu.\top$. Then we either have
   $C\sqsubsyn D'\sqsubsyn D$ or $D'\equiv^\emptyset D$. The first case contradicts our assumption
   that $C\sqsubsyn_1 D$. The second case is dealt with below since $C_\mu\sqsubsyn\top$.
\item
   \emph{Replacing exactly one of the conjuncts of the form $\exists r_\nu.C_\nu$ with a conjunct of the form $\exists r_\nu.D_\nu$
   where $C_\nu\sqsubsyn D_\nu$:} in this case we are done if $C_\nu\sqsubsyn_1 D_\nu$. Thus, assume that there is
   an $\el$ concept $D_\nu'$ such that $C_\nu\sqsubsyn D_\nu' \sqsubsyn D_\nu$. Since we already know that $\sqsubsyn$
   is one-step generated, we can assume without loss of generality that $C_\nu\sqsubsyn_1 D_\nu'$. Let $D'$ be obtained from
   $C'$ by replacing $\exists r_\nu.C_\nu$ with $\exists r_\mu.D_\nu'$. Then we either have
   $C\sqsubsyn D'\sqsubsyn D$ or $D'\equiv^\emptyset D$. The first case contradicts our assumption
   that $C\sqsubsyn_1 D$. In the second case, we are done.
\end{itemize}
Since there are no other cases, this completes the proof of the lemma.
\end{proof}

Based on this lemma, the following proposition can now easily be shown by induction on the role depth of $C$.

\begin{proposition}\label{syn:one:char}
Let $C$ be an $\el$ concept and $C'$ its reduced form. If $C \sqsubsyn_1 D$, then $D$ can be obtained (modulo equivalence) from
$C'$ by either replacing a concept name or a subconcept of the form $\exists r.\top$ by $\top$.
\end{proposition}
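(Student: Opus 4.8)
The plan is to prove Proposition~\ref{syn:one:char} by induction on the role depth of $C$, using the preceding lemma as the engine for the inductive step. The statement to establish is that every $\sqsubsyn_1$-successor $D$ of $C$ is obtained (modulo equivalence) from the reduced form $C'$ of $C$ by replacing a single concept name, or a single subconcept of the form $\exists r.\top$, with $\top$. The key observation is that the lemma already classifies $\sqsubsyn_1$-steps into three types: removing a concept name $A_i$ (type~\ref{type:one}), removing an existential restriction $\exists r_\mu.C_\mu$ with $C_\mu\equiv^\emptyset\top$ (type~\ref{type:two}), and replacing a conjunct $\exists r_\nu.C_\nu$ by $\exists r_\nu.D_\nu$ where $C_\nu\sqsubsyn_1 D_\nu$ (type~\ref{type:three}).

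First I would handle the base case, where $C$ has role depth $0$. Then $C'$ is a conjunction of concept names only, and by the lemma the only possible $\sqsubsyn_1$-step is of type~\ref{type:one}, i.e.\ removing one of the $A_i$; removing $A_i$ from the top-level conjunction is exactly the same (modulo equivalence) as replacing the occurrence of $A_i$ by $\top$, so the claim holds. Then I would do the inductive step for role depth $n>0$. A type~\ref{type:one} step is again just replacing a concept name by $\top$, so there is nothing to do. A type~\ref{type:two} step removes a conjunct $\exists r_\mu.C_\mu$ with $C_\mu\equiv^\emptyset\top$; since $C'$ is reduced and hence $C_\mu$ is already in reduced form, $C_\mu\equiv^\emptyset\top$ means $C_\mu$ is literally $\top$ (the empty conjunction), so the removed conjunct is $\exists r_\mu.\top$, and removing $\exists r_\mu.\top$ from the top-level conjunction is the same (modulo equivalence) as replacing that subconcept by $\top$ --- exactly the allowed second kind of replacement. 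For a type~\ref{type:three} step, $D$ is obtained from $C'$ by replacing $\exists r_\nu.C_\nu$ with $\exists r_\nu.D_\nu$ where $C_\nu\sqsubsyn_1 D_\nu$; since the role depth of $C_\nu$ is strictly less than that of $C$, the induction hypothesis applies to $C_\nu$ and tells us that $D_\nu$ is obtained (modulo equivalence) from the reduced form of $C_\nu$ by replacing a single concept name or a single $\exists r.\top$ subconcept with $\top$. Pushing this replacement back under the outer $\exists r_\nu$ gives $D$ from $C'$ by a single replacement of a concept name or of an $\exists r.\top$ subconcept, as required.

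The main subtlety --- and the step I would be most careful about --- is the bookkeeping about reduced forms and equivalence in the type~\ref{type:three} case: the induction hypothesis speaks about the \emph{reduced form} of $C_\nu$, but inside $C'$ the conjunct is $\exists r_\nu.C_\nu$ with $C_\nu$ already reduced (because $C'$ is reduced), so there is no mismatch; moreover the replacement $C_\nu \rightsquigarrow D_\nu$ happening "modulo equivalence" combines cleanly with the "modulo equivalence" already permitted in the statement, using that $\el$'s concept constructors are monotonic so that $C_\nu\equiv^\emptyset C_\nu''$ implies $\exists r_\nu.C_\nu\equiv^\emptyset\exists r_\nu.C_\nu''$. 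I would also note explicitly that in the type~\ref{type:two} case the reducedness of $C'$ is what forces $C_\mu$ to actually equal $\top$ rather than merely be equivalent to it, which is exactly what makes the removed subconcept syntactically of the form $\exists r_\mu.\top$. With these points pinned down, the induction closes and the proposition follows.
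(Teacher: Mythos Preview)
Your proposal is correct and follows exactly the approach the paper indicates: the paper's own proof consists of the single sentence ``Based on this lemma, the following proposition can now easily be shown by induction on the role depth of $C$,'' and you have simply spelled out that induction in detail. Your handling of the three cases from the lemma, including the observation that reducedness of $C'$ forces $C_\mu$ to be literally $\top$ in the type~\ref{type:two} case and that $C_\nu$ is already reduced in the type~\ref{type:three} case, is sound.
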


As an immediate consequence we obtain that $\sqsubsyn$ is effectively linearly branching.

\begin{corollary}\label{syn:poly:branch}
For a given $\el$ concept $C$, the set $\{D\mid C \sqsubsyn_1 D\}$ has a cardinality that is linear in the
size of $C$ and it can be computed in polynomial time.
\end{corollary}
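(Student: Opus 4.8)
The plan is to read the corollary off Proposition~\ref{syn:one:char} almost immediately, the only extra ingredient being the polynomiality of concept reduction recalled above.

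First I would compute the reduced form $C'$ of $C$, which takes polynomial time. By Proposition~\ref{syn:one:char}, every concept $D$ with $C\sqsubsyn_1 D$ is, modulo equivalence, obtained from $C'$ by replacing a single occurrence of a concept name, or a single occurrence of a subconcept of the form $\exists r.\top$, by $\top$ (and reducing the result). Each such replacement is pinned down by an occurrence of a name in $C'$, and the number of those is bounded by the size of $C'$, which does not exceed the size of $C$ because reduction does not increase the size (compare the proof of Lemma~\ref{lem:sqsubsyn}). Hence there are only linearly many candidates, so $\{D\mid C\sqsubsyn_1 D\}$ has, up to equivalence, cardinality linear in the size of $C$; this already settles the cardinality claim.

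For the computation I would then enumerate these linearly many replacements, perform each one on $C'$ and reduce, obtaining in polynomial time a set $G$ of reduced concepts of linear size. By Proposition~\ref{syn:one:char}, $G$ contains $\{D\mid C\sqsubsyn_1 D\}$ up to equivalence, and every $D\in G$ satisfies $C\sqsubsyn D$; but $G$ may contain spurious elements, because --- exactly as in the example $C=\exists r.(A_1\sqcap A_2)\sqcap\exists r.(A_2\sqcap A_3)$ discussed earlier in this subsection --- a single $\top$-replacement may land strictly below another candidate. I would therefore add a clean-up step, in the spirit of the last step in the proof of Proposition~\ref{prop:onestep:efb}: discard every $D\in G$ for which some $D'\in G$ satisfies $C\sqsubsyn D'\sqsubsyn D$. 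Testing $\sqsubsyn$ between reduced concepts is polynomial, so the clean-up stays polynomial.

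Finally I would check that the output is exactly $\{D\mid C\sqsubsyn_1 D\}$. No genuine one-step successor is discarded, since a witness $D'\in G$ with $C\sqsubsyn D'\sqsubsyn D$ would contradict $C\sqsubsyn_1 D$. Conversely, if $D\in G$ is not a one-step successor, then --- $\sqsubsyn$ being the transitive closure of $\sqsubsyn_1$ --- there is a chain $C\sqsubsyn_1\delta_1\sqsubsyn_1\cdots\sqsubsyn_1 D$ of length at least two, whose first link $\delta_1$ is a one-step successor of $C$, hence lies (up to equivalence) in $G$ and satisfies $C\sqsubsyn\delta_1\sqsubsyn D$, so $D$ is removed. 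I expect this clean-up to be the only place that needs care; since Proposition~\ref{syn:one:char} has already done the combinatorial heavy lifting, verifying its soundness and polynomiality is routine.
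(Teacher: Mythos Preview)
Your proposal is correct and follows essentially the same approach as the paper: enumerate the linearly many single-replacement candidates given by Proposition~\ref{syn:one:char}, then prune those that are not minimal. The only minor difference is that you prune using $\sqsubsyn$ while the paper phrases the clean-up via subsumption minimality, and you spell out the correctness of the clean-up more carefully than the paper does.
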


\begin{proof}
That the cardinality of $\{D\mid C \sqsubsyn_1 D\}$ is linearly bounded by the size of $C$ is an immediate
consequence of Proposition~\ref{syn:one:char}. To compute the set, one first computes all concepts that
can be obtained by replacing in the reduced form of $C$ a concept name or a subconcept of the form $\exists r.\top$ by $\top$.
The polynomially many concepts obtained this way contain all the elements of $\{D\mid C \sqsubsyn_1 D\}$. Additional elements
in this set are obviously strictly subsumed by an element of $\{D\mid C \sqsubsyn_1 D\}$, and thus we can remove them
by removing elements that are not subsumption minimal.
\end{proof}

Now, we define our new weakening relation, which syntactically generalizes the right-hand sides of GCIs:
$$
\begin{array}{r@{\ \ \ }c@{\ \ \ }l}
{C\sqsubseteq D} \susy {C'\sqsubseteq D'} &\mbox{if}& C=C', D\sqsubsyn D'\  \mbox{and}\\
&&\{C'\sqsubseteq D'\}\not\models C\sqsubseteq D.
\end{array}
$$
The following theorem is an easy consequence of the properties of $\sqsubsyn$ and of 
Corollary~\ref{syn:poly:branch}.

\begin{theorem}\label{syn:weakening:rel:el}
        The relation $\susy$ on $\el$ axiom is a linear, complete, one-step generated, and effectively linearly branching weakening relation.
\end{theorem}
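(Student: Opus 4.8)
The plan is to check the five asserted properties of $\susy$ in turn, exploiting that $\susy$ is the syntactic restriction of $\susu$: it is obtained by replacing the semantic generalization $\sub$ of right-hand sides by the syntactic one $\sqsubsyn$. Indeed, if ${C\sqsubseteq D}\susy{C\sqsubseteq D'}$, then $D\sqsubsyn D'$ yields $D\sub D'$ by Lemma~\ref{lem:sqsubsyn}, so together with $C=C'$ and the side condition $\{C\sqsubseteq D'\}\not\models C\sqsubseteq D$ we get ${C\sqsubseteq D}\susu{C\sqsubseteq D'}$; hence $\susy\subseteq\susu$. From this, irreflexivity and the defining inclusion $\Con(\{C\sqsubseteq D'\})\subset\Con(\{C\sqsubseteq D\})$ are inherited, the inclusion being strict exactly because of the explicit non-entailment condition, as in the proof of Proposition~\ref{weakening:prop}. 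Transitivity of $\susy$ additionally needs transitivity of $\sqsubsyn$ on reduced concepts (routine) plus the fact that the side condition propagates: if $\{C\sqsubseteq D'\}\not\models C\sqsubseteq D$ and $D'\sqsubsyn D''$, then $D'\subeq D''$ forces $\Con(\{C\sqsubseteq D''\})\subseteq\Con(\{C\sqsubseteq D'\})\subset\Con(\{C\sqsubseteq D\})$, so $\{C\sqsubseteq D''\}\not\models C\sqsubseteq D$. Thus $\susy$ is a weakening relation. Completeness is immediate: a non-tautological ${C\sqsubseteq D}$ has $D\not\equiv^\emptyset\top$, so $D\sqsubsyn\top$ and $\{C\sqsubseteq\top\}\not\models C\sqsubseteq D$, i.e. ${C\sqsubseteq D}\susy{C\sqsubseteq\top}$ with the latter a tautology.

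For linearity and one-step generation, I would observe that a $\susy$-chain issuing from ${C\sqsubseteq D}$ changes only the right-hand side, and each step is a $\sqsubsyn$-step, so it projects onto a $\sqsubsyn$-chain issuing from $D$; by Lemma~\ref{lem:sqsubsyn} such a chain has length linearly bounded in the size of $D$, hence in the size of ${C\sqsubseteq D}$. Thus $\susy$ is linear, hence bounded, hence one-step generated by Proposition~\ref{bounded:then:one:step:gen}.

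It remains to establish effective linear branching, and this is the least mechanical part; it parallels the treatment of $\susu_1$ in Proposition~\ref{prop:onestep:efb}. To compute $\{\gamma\mid {C\sqsubseteq D}\susy_1\gamma\}$, run a breadth-first search along $\sqsubsyn_1$ from $D$: descend through right-hand sides $D_i$ for which ${C\sqsubseteq D_i}$ is still equivalent to ${C\sqsubseteq D}$, and record every $D'$ first reached for which $\{C\sqsubseteq D'\}\not\models C\sqsubseteq D$. This search terminates because $\sqsubsyn$ is well-founded and, by Corollary~\ref{syn:poly:branch}, effectively linearly branching, and because $\models$ is decidable in \el. As in Proposition~\ref{prop:onestep:efb}, the recorded set can still contain axioms that are not one-step successors (one $\susy$-successor being weaker than another), so a final pass removes every recorded axiom that is weaker than another recorded one; what remains is exactly the set of $\susy_1$-successors. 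For the linear cardinality bound, by Proposition~\ref{syn:one:char} the single genuine generalization step producing such a successor replaces one concept-name occurrence, or one subconcept of the form $\exists r.\top$, by $\top$ (possibly after equivalence-preserving simplifications elsewhere), and $D$ has only linearly many such occurrences; if each occurrence contributes boundedly many surviving one-step successors, the branching is linear in the size of ${C\sqsubseteq D}$.

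The hard part will be exactly this last step: verifying rigorously that the equivalence collapses occurring during the breadth-first search cannot multiply the number of surviving maximal (one-step) successors attributable to a single occurrence beyond a constant, so that the branching bound is genuinely linear rather than merely polynomial. Everything else is a direct reassembly of Lemma~\ref{lem:sqsubsyn}, Proposition~\ref{bounded:then:one:step:gen}, Corollary~\ref{syn:poly:branch}, and the pattern of Proposition~\ref{prop:onestep:efb}.
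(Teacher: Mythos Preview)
Your approach is correct and matches the paper's intended argument; in fact you have done considerably more than the paper, whose entire proof is the single sentence ``an easy consequence of the properties of $\sqsubsyn$ and of Corollary~\ref{syn:poly:branch}.'' Your derivations of the weakening-relation axioms, completeness, linearity (via Lemma~\ref{lem:sqsubsyn}), and one-step generation (via Proposition~\ref{bounded:then:one:step:gen}) are exactly what that sentence is gesturing at.

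Your caution about the ``effectively linearly branching'' clause is well placed and is the only point worth flagging. The paper appears to take Corollary~\ref{syn:poly:branch} (linear branching of $\sqsubsyn_1$) as directly yielding linear branching of $\susy_1$, without working through the equivalence-collapse issue you identify (the same phenomenon that forced the extra pruning step in Proposition~\ref{prop:onestep:efb} and Example~\ref{ex:onestep-el}). The paper gives no argument that the BFS through equivalence-preserving $\sqsubsyn_1$-steps cannot inflate the number of surviving $\susy_1$-successors beyond linear. So you have not missed an argument that is present in the paper; you have correctly isolated a claim the paper asserts but does not substantiate in detail. Note also that for the paper's sole downstream use of this property (the exponential bound in the corollary following Theorem~\ref{syn:weakening:rel:el}), a polynomial branching bound would already suffice, which may explain why the authors did not feel the need to pin down the constant.
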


Due to fact that $\susy_1$-steps do not increase the size of axioms, the linear bounds on the branching of $\susy_1$ and the length
of $\susy$-chains imply that the algorithm described in the proof of
Proposition~\ref{max:strong:exist:prop} has an exponential search space.

\begin{corollary}
All maximally strong weakenings w.r.t.\ $\susy$ of an axiom in a justification can be computed in exponential time.
\end{corollary}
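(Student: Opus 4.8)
The plan is to derive the corollary by instantiating the abstract machinery of Section~\ref{sec:gentle} with the relation $\susy$ and then analysing the running time of the resulting procedure. First I would check that all hypotheses of Proposition~\ref{max:strong:exist:prop} are met: by Theorem~\ref{syn:weakening:rel:el}, $\susy$ is a linear (hence bounded, hence well-founded), one-step generated, and effectively linearly (in particular, finitely) branching weakening relation, and the consequence relation $\models$ for $\el$ is decidable --- indeed in polynomial time by \cite{Bran04}. Hence the breadth-first search described in the proof of Proposition~\ref{max:strong:exist:prop} terminates and outputs exactly the set of maximally strong weakenings w.r.t.\ $\susy$ of the chosen axiom $\beta = C\sqsubseteq D$ in the justification $J$. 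So the only remaining task is to show that this search runs in exponential time.

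For the complexity analysis I would fix the size $m$ of the input axiom $\beta$ and let $N$ be the combined size of $\Ons$, $J$, and $\alpha$. The crucial structural fact is that $\susy_1$-steps never increase the size of an axiom: by Proposition~\ref{syn:one:char}, a one-step successor of $C\sqsubseteq D'$ is obtained (modulo equivalence) from the reduced form of $D'$ by replacing a single concept name or a single subconcept $\exists r.\top$ with $\top$, and by \cite{Kues01} reduction never increases size. Consequently every axiom encountered in the search has size at most $m$, so (i) by Corollary~\ref{syn:poly:branch} each node of the search tree has at most $c\cdot m$ one-step successors, all computable in time polynomial in $m$, and (ii) by Lemma~\ref{lem:sqsubsyn} every $\susy$-chain issuing from $\beta$ has length at most $d\cdot m$, so the tree has depth at most $d\cdot m$. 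Multiplying branching by depth bounds the number of nodes by $(c m)^{d m}$, which is of exponential size; alternatively, counting distinct nodes directly, every reachable right-hand side is determined by which of the at most $m$ subconcept occurrences of the reduced form of $D$ get replaced by $\top$, so there are at most $2^m$ of them up to equivalence. Either way the search space is exponential. At each node we perform one entailment test $\Ons\cup(J\setminus\{\beta\})\cup\{C\sqsubseteq D'\}\models\alpha$, which takes time polynomial in $m+N$, and compute the polynomially many successors in polynomial time; finally, the pruning step at the end compares at most exponentially many candidates pairwise, each comparison being doable in polynomial time. Summing over all nodes yields an exponential bound overall.

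The main obstacle I expect is the size-invariance bookkeeping: the branching and chain-length bounds of Lemma~\ref{lem:sqsubsyn} and Corollary~\ref{syn:poly:branch} are stated relative to the size of the concept currently under consideration, and one has to argue that throughout the search these can be bounded uniformly by a linear function of the size of the original right-hand side $D$. This hinges on the two facts already used --- that reducing a concept never increases its size (\cite{Kues01}) and that each $\susy_1$-step only deletes material --- so that no intermediate axiom can ever be larger than $\beta$; this is precisely what fails for $\susu$, where one-step weakenings may blow up the right-hand side and the complexity becomes non-elementary. Once size-invariance is in place, the ``branching to the power of depth'' count is routine, and the sharper ``$\le 2^m$ distinct nodes'' argument merely removes a spurious logarithmic factor in the exponent and is not needed for the bare exponential-time claim. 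A second, minor point to keep track of is that the entailments tested are against the fixed ambient ontology $\Ons\cup(J\setminus\{\beta\})$, so each test stays polynomial in $m+N$ and does not interact with the exponential growth of the search tree.
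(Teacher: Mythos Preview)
Your proposal is correct and follows essentially the same approach as the paper. The paper's own justification is a single sentence preceding the corollary---``Due to [the] fact that $\susy_1$-steps do not increase the size of axioms, the linear bounds on the branching of $\susy_1$ and the length of $\susy$-chains imply that the algorithm described in the proof of Proposition~\ref{max:strong:exist:prop} has an exponential search space''---and you have faithfully expanded this into a detailed argument, including the size-invariance bookkeeping that the paper leaves implicit and the alternative $2^m$ count of distinct right-hand sides.
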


The following example shows that there may be exponentially many maximally strong weakenings w.r.t.\ $\susy$, and thus
the exponential complexity stated above is optimal.

\begin{example}\label{exp:max:strong:ex}
Let $\beta_i := P_i \sqcap Q_i \sqsubseteq B$ for $i = 1,\ldots, n$
and $\beta := A \sqsubseteq P_1\sqcap Q_1\sqcap \ldots \sqcap P_n\sqcap Q_n$.
We consider the ontology $\On = \Ons\cup\Onr$, where
$\Ons := \{\beta_i \mid 1\leq i\leq n\}$ and
$\Onr := \{\beta\}$.
Then $J = \{\beta\}$ is a justification for the consequence $\alpha = {A\sqsubseteq B}$,
and all axioms of the form $A\sqsubseteq X_1\sqcap X_2\sqcap \ldots \sqcap X_n$ with $X_i\in\{P_i,Q_i\}$
are maximally strong
weakenings w.r.t.\ $\susy$ of $\beta$ in $J$. The same is true for $\susu$ since in the absence of roles,
these two weakening relations coincide.
\end{example}

A single maximally strong weakening can however be computed in polynomial time.

\begin{proposition}
\label{thm:syn:poly}
A single maximally strong weakening w.r.t.\ $\susy$ can be computed in polynomial time.
\end{proposition}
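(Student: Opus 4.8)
The plan is to compute a maximally strong weakening of $\beta = {C\sqsubseteq D}$ in $J$ by a greedy \emph{reinstatement} procedure: start from the tautology ${C\sqsubseteq\top}$ and repeatedly make the right-hand side syntactically more specific by one $\sqsubsyn_1$-step, staying a generalization of $D$, as long as the consequence $\alpha$ does not reappear. First I would abbreviate $\Phi := \Ons\cup(J\setminus\{\beta\})$ and assume w.l.o.g.\ that $D$ is reduced (by~\cite{Kues01} this is a polynomial-time preprocessing step, and it changes neither $\Phi$, nor the relation $\susy$ issuing from $\beta$, nor the maximally strong weakenings of $\beta$ in $J$); note $\Phi\not\models\alpha$ by minimality of $J$, while $\Phi\cup\{\beta\}\models\alpha$. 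The crucial observation is that for any syntactic generalization $E$ of $D$ with $\Phi\cup\{C\sqsubseteq E\}\not\models\alpha$ one automatically has $\{C\sqsubseteq E\}\not\models C\sqsubseteq D$ (otherwise every model of $\Phi\cup\{C\sqsubseteq E\}$ would be a model of $\Phi\cup\{\beta\}$, hence of $\alpha$), so that ${C\sqsubseteq D}\susy{C\sqsubseteq E}$ by Lemma~\ref{lem:sqsubsyn}. Writing $\mathcal{F}$ for the set of all syntactic generalizations $E$ of $D$ with $\Phi\cup\{C\sqsubseteq E\}\not\models\alpha$, this set is nonempty (it contains $\top$) and upward closed with respect to $\sqsubsyn$ (a weaker right-hand side can only destroy more consequences). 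I claim that $C\sqsubseteq E$ is a maximally strong weakening of $\beta$ in $J$ whenever $E$ is a $\sqsubsyn$-minimal element of $\mathcal{F}$: every $\delta$ with $\beta\succ\delta\succ{C\sqsubseteq E}$ is of the form $C\sqsubseteq E''$ with $D\sqsubsyn E''\sqsubsyn E$, and $\sqsubsyn$-minimality of $E$ forces $E''\notin\mathcal{F}$; since $E''$ is a syntactic generalization of $D$ this means $\Phi\cup\{\delta\}\models\alpha$.

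So it suffices to compute \emph{some} $\sqsubsyn$-minimal element of $\mathcal{F}$, and such elements exist because by Lemma~\ref{lem:sqsubsyn} every $\sqsubsyn$-chain issuing from $D$ has linearly bounded length. The algorithm keeps a current concept $E\in\mathcal{F}$, initialised to $\top$. In each round it computes all $E'$ with $D\sqsubsyn E'\sqsubsyn_1 E$, i.e.\ the syntactic generalizations of $D$ lying exactly one $\sqsubsyn_1$-step below $E$: by Proposition~\ref{syn:one:char} these arise from $E$ by reinstating a single concept name or a single subconcept $\exists r.\top$ that stems from $D$, so there are only polynomially many of them in the size of $D$ and they can be listed in polynomial time, essentially as in Corollary~\ref{syn:poly:branch}. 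If one of them lies in $\mathcal{F}$ --- which is decided by a single polynomial-time $\el$ entailment test $\Phi\cup\{C\sqsubseteq E'\}\models\alpha$ per candidate~\cite{Bran04} --- the algorithm sets $E:=E'$ and iterates; otherwise it outputs $C\sqsubseteq E$.

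For correctness I would show that the output $E$ is $\sqsubsyn$-minimal in $\mathcal{F}$. If it were not, pick $E^\ast\in\mathcal{F}$ with $E^\ast\sqsubsyn E$; since $D\sqsubsyn E^\ast$ and $\sqsubsyn$ is one-step generated, there is a chain $E^\ast\sqsubsyn_1\cdots\sqsubsyn_1 E$ whose penultimate element $F$ satisfies $D\sqsubsyn F\sqsubsyn_1 E$ and, being a syntactic generalization of $E^\ast$ (or equal to $E^\ast$), lies in $\mathcal{F}$ by upward closure. But then the algorithm would have replaced $E$ by $F$ instead of stopping --- a contradiction. By the first paragraph, $C\sqsubseteq E$ is therefore a maximally strong weakening of $\beta$ in $J$. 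For the running time, every round replaces $E$ by an $\sqsubsyn_1$-predecessor, and from any syntactic generalization of $D$ the tautology $\top$ is reached after at most linearly many $\sqsubsyn_1$-steps (Lemma~\ref{lem:sqsubsyn}), so there are only linearly many rounds, each of which does polynomial work.

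The step I expect to be the main obstacle is the polynomial-time enumeration of the concepts $E'$ with $D\sqsubsyn E'\sqsubsyn_1 E$ --- the immediate $\sqsubsyn$-predecessors of $E$ that are still generalizations of $D$. Here one must handle the interaction between reinstating a subconcept and re-reducing the result (reinstating a concept name can unreduce a deeper conjunction, and reinstating $\exists r.\top$ inside an already generalized subconcept need not be legal) and verify that the legal reinstatements really correspond to single $\sqsubsyn_1$-steps rather than to shortcuts; this relies on the recursive analysis underlying Proposition~\ref{syn:one:char}. Everything else is routine once the greedy scheme and the fact that any $\sqsubsyn$-minimal element of $\mathcal{F}$ yields a maximally strong weakening are in place.
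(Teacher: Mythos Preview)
Your proposal is correct and follows essentially the same approach as the paper: both start from $\top$ and greedily reinstate one $\sqsubsyn_1$-step towards $D$ as long as the consequence $\alpha$ does not reappear, terminating after linearly many rounds. Your write-up is in fact more careful than the paper's in two respects: you explicitly verify that $\Phi\cup\{C\sqsubseteq E\}\not\models\alpha$ forces $\{C\sqsubseteq E\}\not\models C\sqsubseteq D$ (so that the $\susy$ condition is met), and you argue correctness via upward closure of $\mathcal{F}$ and $\sqsubsyn$-minimality, whereas the paper leaves both points implicit; the enumeration step you flag as the main obstacle is treated in the paper simply as ``obvious.''
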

\begin{proof}
The algorithm that computes a maximally strong weakening works as follows. Starting from the concept $D':=\top$, it
looks at all possible ways of making one step in the direction of $D$ using $\sqsupsyn_1$, i.e., it considers all
$D''$ where $D\sqsubsyneq D'' \sqsubsyn_1 D'$. The concepts $D''$ can be obtained by adding a concept name $A$
or an existential restriction $\exists r.\top$ at a place where (the reduced form of) $D$ has such a concept or restriction. Obviously, there are only polynomially
many such concepts $D''$. For each of them we check whether
$$
\Ons\cup (J \setminus \{C\sqsubseteq D\}) \cup \{C\sqsubseteq D''\} \models \alpha.
$$
If this is the case for all $D''$, we return $C\sqsubseteq D'$. Otherwise, we choose an arbitrary
$D''$ with
$
\Ons\cup (J \setminus \{C\sqsubseteq D\}) \cup \{C\sqsubseteq D''\} \not\models \alpha,
$
and continue with $D':= D''$.

This algorithm terminates after linearly many iterations since in each iteration the size of $D'$ is increased and it
cannot get larger than $D$. In addition, $C\sqsubseteq D'$ is maximally strong since for every axiom $C\sqsubseteq E$
such that ${C\sqsubseteq D} \susy {C\sqsubseteq E}\susy {C\sqsubseteq D'}$ there is a sequence
$E\sqsubsyn_1 \ldots\sqsubsyn_1 D''\sqsubsyn_1 D'$. Consequently, $C\sqsubseteq D''$ has the consequence,
and thus also $C\sqsubseteq E$.
\end{proof}

Nevertheless, we can show that deciding whether an axiom
is a maximally strong weakening w.r.t.\ $\susy$ is coNP-complete.

\begin{proposition}
The problem of
deciding whether a given \el GCI $C\sqsubseteq D'$ is a maximally strong weakening of the \el GCI $C\sqsubseteq D$ w.r.t.\
$\susy$ is coNP-complete.
\end{proposition}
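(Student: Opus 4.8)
The statement asserts coNP-completeness, so the plan is to prove a coNP upper bound and a coNP-hardness lower bound.

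\emph{Upper bound.} I would show that the complement is in NP. The input is the static ontology $\Ons$, a justification $J$ for $\alpha$, an axiom $\beta = {C\sqsubseteq D}\in J$, and a $\susy$-weakening $\gamma = {C\sqsubseteq D'}$ of $\beta$. Then $\gamma$ \emph{fails} to be a maximally strong weakening iff either $\Ons\cup(J\setminus\{\beta\})\cup\{\gamma\}\models\alpha$ — decidable in polynomial time \cite{Bran04} — or there is a $\delta = {C\sqsubseteq D''}$ with $\beta\susy\delta\susy\gamma$ and $\Ons\cup(J\setminus\{\beta\})\cup\{\delta\}\not\models\alpha$. By Lemma~\ref{lem:sqsubsyn}, $D\sqsubsyn D''$ forces the size of $D''$ below that of $D$, so $D''$ has polynomial size; hence such a $\delta$ can be guessed, and the conditions $D\sqsubsyn D''$, $D''\sqsubsyn D'$, the two non-entailment side-conditions of $\susy$ (polynomially many $\el$ subsumption tests), and $\Ons\cup(J\setminus\{\beta\})\cup\{\delta\}\not\models\alpha$ all verified in polynomial time. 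Thus the complement is in NP and the problem is in coNP. (This argument breaks down for $\susu$, where an intermediate $D''$ may have exponential size, which is presumably why only coNP-hardness was claimed there.)

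\emph{Lower bound.} I would reduce the coNP-complete all-maximal-valuations problem \cite{Pena09}. Given a monotone formula $\varphi$ with variable set $V$ and a set $\Vmc$ of maximal valuations falsifying $\varphi$, I reuse from the coNP-hardness proof for $\susu$ the concept names $B_\psi$ (for subformulas $\psi$ of $\varphi$) and the TBoxes $\Tmc_\psi$ computing $\varphi$, and add fresh concept names $Z_1,\dots,Z_N$ with $N := |\Vmc|+1$. The refutable ontology is $\Onr := \{A\sqsubseteq D\}$ with
$$D := \bigsqcap_{i=1}^{N}\exists r.\left(\bigsqcap\{B_p\mid p\in V\}\sqcap Z_i\right),$$
which is already reduced, since the $Z_i$ make its $N$ conjuncts pairwise incomparable, and the static ontology is
$$\Ons := \bigcup_{\psi\in csub(\varphi)}\Tmc_\psi\ \cup\ \{\exists r.B_\varphi\sqsubseteq C\}\ \cup\ \{Z_i\sqsubseteq B_\varphi\mid 1\le i\le N\}.$$
Then $J := \{A\sqsubseteq D\}$ is a justification for $\alpha := {A\sqsubseteq C}$, and, writing $X_\Vmc := \bigsqcap_{W\in\Vmc}\exists r.\bigsqcap\{B_p\mid p\in W\}$, one checks ${A\sqsubseteq D}\susy{A\sqsubseteq X_\Vmc}$ and $\Ons\cup\{A\sqsubseteq X_\Vmc\}\not\models\alpha$. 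The reduction is then completed by the equivalence: \emph{$A\sqsubseteq X_\Vmc$ is a maximally strong weakening of $A\sqsubseteq D$ in $J$ w.r.t.\ $\susy$ iff $\Vmc$ is the set of all maximal valuations falsifying $\varphi$.}

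The heart of the proof, and the main obstacle, is this equivalence. It rests on two structural facts. First, since $D$'s right-hand side has no nested existential restrictions, neither does that of any $\susy$-weakening $A\sqsubseteq D'$ of $A\sqsubseteq D$, and $\Ons\cup\{A\sqsubseteq D'\}\models\alpha$ holds iff some top-level conjunct $\exists r.E$ of $D'$ has $\Ons\models E\sqsubseteq B_\varphi$, i.e.\ iff $E$ still contains some $Z_i$ or the set of variables $p$ with $B_p$ occurring in $E$ satisfies $\varphi$; so every $\alpha$-avoiding weakening of $A\sqsubseteq D$ has dropped all $Z_i$ and keeps only conjuncts whose valuations falsify $\varphi$. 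Second, by Lemma~\ref{lem:sqsubsyn}, a syntactic generalization $D''$ of $D$ with $D''\sqsubsyn X_\Vmc$ satisfies $D''\subeq X_\Vmc$, so for each $W\in\Vmc$ some conjunct of $D''$ carries a valuation $\supseteq W$, which by maximality of $W$ must equal $W$ once $D''$ avoids $\alpha$. For the ``if'' direction one uses that $\Vmc$ is an antichain containing \emph{all} maximal falsifying valuations to conclude that any $\alpha$-avoiding $D''$ with $D\sqsubsyn D''\sqsubsyn X_\Vmc$, after reduction, carries exactly the valuations in $\Vmc$, hence is equivalent to $X_\Vmc$ — so no $\delta$ lies strictly $\susy$-between $A\sqsubseteq D$ and $A\sqsubseteq X_\Vmc$ while avoiding $\alpha$. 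For the ``only if'' direction, if $U_0\notin\Vmc$ is a maximal falsifying valuation then $A\sqsubseteq\bigl(X_\Vmc\sqcap\exists r.\bigsqcap\{B_p\mid p\in U_0\}\bigr)$ lies strictly $\susy$-between $A\sqsubseteq D$ and $A\sqsubseteq X_\Vmc$ and avoids $\alpha$, witnessing non-maximality; here it is essential that $D$ has $N=|\Vmc|+1$ conjuncts, so that this right-hand side, with its $|\Vmc|+1$ existential restrictions, is still a syntactic generalization of $D$ (with only $|\Vmc|$ conjuncts there would be no room for the witness). The remaining work — verifying the non-entailment side-conditions of $\susy$ throughout and tracking reduced forms via Proposition~\ref{syn:one:char} — is routine, all the compared right-hand sides being pairwise strictly related under $\sub$.
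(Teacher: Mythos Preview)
Your upper bound argument is essentially the paper's: guess a polynomial-size $D''$ with $D\sqsubsyn D''\sqsubsyn D'$ and verify the relevant (non-)entailments in polynomial time.

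Your lower bound is correct but takes a genuinely different route. The paper's reduction places $X_\Vmc$ on the \emph{left} of the refutable GCI: the single axiom is $X_\Vmc\sqsubseteq\exists r.\bigsqcap\{B_p\mid p\in V\}$, the static part is just $\bigcup_{\psi}\Tmc_\psi\cup\{\exists r.B_\varphi\sqsubseteq C\}$, and the candidate weakening is the \emph{tautology} $X_\Vmc\sqsubseteq\top$. The trick is that for any $W\subseteq V$ contained in some element of $\Vmc$, the GCI $X_\Vmc\sqsubseteq\exists r.\bigsqcap\{B_p\mid p\in W\}$ is already a tautology; hence every syntactic generalization of the right-hand side that avoids the consequence collapses to $X_\Vmc\sqsubseteq\top$ precisely when $\Vmc$ exhausts the maximal falsifiers, and otherwise a missing maximal falsifier $U_0$ yields the non-tautological intermediate $X_\Vmc\sqsubseteq\exists r.\bigsqcap\{B_p\mid p\in U_0\}$. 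This avoids the extra concept names $Z_i$, the counting argument ``$N=|\Vmc|+1$ slots'', and the analysis of multi-conjunct reduced forms; the whole right-hand side has a single existential conjunct throughout. Your construction instead keeps a fresh $A$ on the left and encodes everything on the right, using the $Z_i$ both to make $D$ reduced and to force their removal, and reserving an extra conjunct so that a missing $U_0$ can be accommodated as a witness. What your approach buys is a closer structural parallel to the earlier $\susu$ hardness proof; what the paper's approach buys is a markedly shorter construction and a two-line case analysis in each direction.
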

\begin{proof}
First, we show the coNP upper bound. 
Let $\On = \Ons\cup\Onr$, $J\subseteq \Onr$ a justification of the consequence $\alpha$, $C\sqsubseteq D$ an element of $J$, and $C\sqsubseteq D'$ a GCI.
Obviously, we can decide in polynomial time whether ${C\sqsubseteq D}\susy {C\sqsubseteq D'}$ and whether 
$\Ons\cup (J\setminus\{{C\sqsubseteq D}\}) \cup \{{C\sqsubseteq D'}\}\not\models\alpha$. To disprove that $C\sqsubseteq D'$ is maximally strong,
we guess an $\el$ concept $D''$ such that $D\sqsubsyn D''\sqsubsyn D'$. This requires only polynomially many guesses: in fact, $D'$ is obtained from
$D$ by replacing linearly many occurrences of subconcepts with $\top$, and we simply guess which of these replacements are not done when going from
$D$ to $D''$. We then check in polynomial time whether $C\sqsubseteq D''$ satisfies 
\begin{itemize}
\item
  $\Ons\cup (J\setminus\{{C\sqsubseteq D}\}) \cup \{{C\sqsubseteq D''}\}\not\models\alpha$, and
\item
  $\{{C\sqsubseteq D'}\}\not\models {C\sqsubseteq D''}$.
\end{itemize}
If both tests succeed then $C\sqsubseteq D''$ is a counterexample to $C\sqsubseteq D'$ being maximally strong.

For the hardness proof, we use again the all-maximal-valuations problem. 
Given an instance $\varphi,\Vmc$ of the all-maximal-valuations problem, we construct an instance of our
problem as follows. For every subformula $\psi$ of $\varphi$, we introduce a new concept name $B_\psi$.
If $\psi$ is not a propositional variable, we define the TBox:
\[
\Tmc_\psi :=
        \begin{cases}
                \{ B_{\psi_1}\sqcap B_{\psi_2}\sqsubseteq B_\psi\} & \psi=\psi_1\land\psi_2 \\
                \{ B_{\psi_1}\sqsubseteq B_\psi, B_{\psi_2}\sqsubseteq B_\psi\} & \psi=\psi_1\lor\psi_2.
        \end{cases}
\]
Let $V$ be the set of all propositional variables appearing in $\varphi$, and let $csub(\varphi)$ be the set
of all subformulas of $\varphi$ that are not in $V$.
Define the concept 
$$
X_\Vmc := \bigsqcap_{W\in\Vmc} \exists r.\bigsqcap\{B_p\mid p\in W\}.
$$

We construct the ontology that has only one refutable axiom
$$X_\Vmc\sqsubseteq \exists r.\bigsqcap\{B_p\mid p\in V\},$$
and as static part the ontology
\[
\Tmc_s = \bigcup_{\psi\in csub(\varphi)}\Tmc_\psi \cup \{ \exists r. B_\varphi\sqsubseteq C\}
\]
Clearly, the refutable axiom is the only justification for $X_\Vmc\sqsubseteq C$.

For every valuation $W\subseteq V$, if $W$ is a subset of some valuation in \Vmc, then
\[
X_\Vmc \sqsubseteq \exists r.\bigsqcap\{B_p\mid p\in W\}\ \mbox{is equivalent to}\ X_\Vmc \sqsubseteq \top.
\]
We \emph{claim} that $X_\Vmc\sqsubseteq \top$ is a maximally strong weakening w.r.t.\ $\susy$ of the only refutable axiom
iff \Vmc is the set of all maximal valuations not satisfying $\varphi$.

To prove this claim, first assume that \Vmc is not the set of all maximal valuations not satisfying $\varphi$, i.e., there
is a maximal valuation $W$ not satisfying $\varphi$ such that $W\not\in \Vmc$. On the one hand, this implies that $W$ is incomparable w.r.t.\ inclusion
with any of the elements of $\Vmc$, and thus $X_\Vmc\sqsubseteq \exists r.\bigsqcap\{B_p\mid p\in W\}$ is not a tautology.
On the other hand, we have 
$$
\Tmc_s\cup\{X_\Vmc\sqsubseteq \exists r.\bigsqcap\{B_p\mid p\in W\}\}\not\models X_\Vmc\sqsubseteq C,
$$
and ${X_\Vmc\sqsubseteq \exists r.\bigsqcap\{B_p\mid p\in V\}}\susy{X_\Vmc\sqsubseteq \exists r.\bigsqcap\{B_p\mid p\in W\}}$.
This shows that the tautology $X_\Vmc\sqsubseteq \top$ is not a maximally strong weakening w.r.t.\ $\susy$ of the only refutable axiom
$X_\Vmc\sqsubseteq \exists r.\bigsqcap\{B_p\mid p\in V\}$.

Conversely, assume that \Vmc is the set of all maximal valuations not satisfying $\varphi$, and that $\gamma$ is a
maximally strong weakening w.r.t.\ $\susy$ of $X_\Vmc\sqsubseteq \exists r.\bigsqcap\{B_p\mid p\in V\}$. If $\gamma = {X_\Vmc\sqsubseteq \top}$,
then we are done. Otherwise, there is a set $W\subseteq \Vmc$ such that $\gamma = {X_\Vmc\sqsubseteq \exists r.\bigsqcap\{B_p\mid p\in W\}}$.
But then $\Tmc_s\cup\{\gamma\}\not\models X_\Vmc\sqsubseteq C$ implies that $W$ does not satisfy $\varphi$, and thus $W$ is a subset of some valuation in \Vmc.
Consequently, $\gamma$ is a tautology and thus equivalent to $X_\Vmc \sqsubseteq \top$. This shows that $X_\Vmc \sqsubseteq \top$
is a maximally strong weakening w.r.t.\ $\susy$ of $X_\Vmc\sqsubseteq \exists r.\bigsqcap\{B_p\mid p\in V\}$.
\end{proof}


\section{Conclusions}
\label{sec:conclusion}

We have introduced a framework for repairing DL-based ontologies that is based
on weakening axioms rather than deleting them, and have shown how to instantiate
this framework for the DL \el using appropriate weakening relations. More precisely,
we have introduced weakening relations of decreasing strength
${\succ_g}\supset {\succ_s}\supset {\susu} \supset {\susy}$, and have shown
that ${\succ_g}$ and ${\succ_s}$ do not satisfy the properties required to apply
our gentle weakening approach. In contrast, both ${\susu}$ and ${\susy}$ satisfy these
properties, but from a complexity point of view ${\susy}$ is to be preferred.
 
Computing maximally strong weakenings w.r.t.\ ${\susu}$ or ${\susy}$ using the algorithm described in the proof of
Proposition~\ref{max:strong:exist:prop} is akin to the black-box approach for computing justifications.
It would be interesting to see whether a glass-box approach that modifies an \el reasoning procedure
can also be used for this purpose, similar to the way a tableau-based algorithms for \alc
was modified in \cite{Lam08}. This should be possible for ${\susy}$, whereas handling ${\susu}$ with
a glass-box approach is probably more challenging, but might yield better complexity upper bounds
than the generic approach based on Proposition~\ref{max:strong:exist:prop}.
 
Our weakening relations can also be used in the setting where the ontology is first modified, and then
repaired using the classical approach as in \cite{DuQF14}. In fact, for effectively finitely branching and well-founded
weakening relations such as $\susu$ and $\susy$, we can add for each axiom all (or some of) its finitely many weakenings 
w.r.t.\ the given relation, and then apply the classical repair approach. In contrast to the gentle repair
approach proposed in this paper, a single axiom could then be replaced by several axioms, which might blow up
the size of the ontology.

In order to apply our gently repair approach in practice, one can either compute all maximally strong weakening,
and let the user choose between them, which should be viable at least for $\susy$. Alternatively, one can try
to find heuristics for obtaining weakening oracles that compute ``good'' weakenings or involve the user in
the decisions made in each weakening step.

\subsection*{Acknowledgments}
This work is partially supported by DFG within the Research Training Group 1907 (RoSI).
The authors would like to thank Bernhard Ganter for helpful discussion regarding
one-step generated pre-orders.



\end{document}